\documentclass[onecolumn]{IEEEtran}
\usepackage{amssymb,amsfonts,amsmath,amsthm}
\usepackage[dvips]{graphicx}
\usepackage{overpic}
\usepackage{subfigure}
\usepackage{color}
\usepackage{cases}
\usepackage{enumerate}
\usepackage{framed}
\definecolor{shadecolor}{rgb}{0.92,0.92,0.92}
\usepackage{listings}
\usepackage{graphicx}
\usepackage{float}
\usepackage{mathrsfs}
\usepackage[top=1.0in, bottom=1.08in, left=0.75in, right=0.75in]{geometry}
\usepackage{pgf}
\usepackage{tikz}
\usetikzlibrary{arrows, automata, positioning, calc, shapes}
\usepackage{comment}
\usepackage[numbers,sort&compress]{natbib}
\usepackage{caption}
\usepackage{hyperref}
\usepackage{bm} 
\usepackage[ruled,linesnumbered]{algorithm2e}
\usepackage{balance}
\usepackage{makecell}

\newtheorem{theorem}{Theorem}

\newtheorem{example}{Example}
\newtheorem{definition}{Definition}
\newtheorem{lemma}{Lemma}
\newtheorem{remark}{Remark}

\newtheorem{proposition}{Proposition}

\newtheorem{corollary}{Corollary}

\newcommand{\bu}{\bm{u}}

\newcommand{\cC}{\mathcal{C}}

\usepackage{comment}

\title{Bounds on Multiple $b$-Burst Deletion-Correcting Codes}\vspace{-1ex}

\author{Chen~Wang, Xiangliang~Kong, Eitan~Yaakobi, and Tolga~M.~Duman%
\thanks{This research was partially funded by the European Union (DiDAX, 101115134) and the ERC Advanced Grant 101054904: TRANCIDS. Views and opinions expressed are however those of the authors only and do not necessarily reflect those of the European Union or the European Research Council Executive Agency. Neither the European Union nor the granting authority can be held responsible for them. This research was partially supported by the Israel Science Foundation (ISF) under Grant No.~2462/24 and by a DFG Middle-East Grant under Grant No.~554629494. The research of C. Wang was supported in part at the Technion by a fellowship from the Lady Davis Foundation.}
\thanks{C. Wang (cwang@campus.technion.ac.il) and E. Yaakobi (yaakobi@cs.technion.ac.il) are with the Department of Computer Science, Technion -- Israel Institute of Technology, Haifa 3200003, Israel. X. Kong (rongxlkong@gmail.com) and Tolga M. Duman (duman@ee.bilkent.edu.tr) are with the Department of Electrical and Electronics Engineering, Bilkent University, Ankara 06800, Turkey.}
\thanks{An earlier version of this paper has been accepted for presentation at the 2026 IEEE International Symposium on Information Theory (ISIT), Guangzhou, China, June 28--July 3, 2026~\cite{ISITversion}.}
}

\begin{document}

\maketitle

\begin{abstract}
	Motivated by their applications in DNA-based storage systems, codes capable of correcting consecutive deletions have attracted significant attention. An important class of such codes consists of those that can correct multiple consecutive deletion errors, commonly referred to as \emph{multiple $b$-burst deletion-correcting codes}. In this paper, we investigate the fundamental limits of multiple $b$-burst deletion-correcting codes. Specifically, we first characterize several structural properties of the associated deletion balls. Then, leveraging these properties, we derive several upper bounds and a combinatorial lower bound on the maximum size of such codes. As a consequence, our bounds improve upon the previously known results for general parameter regimes and are shown to be asymptotically optimal for certain cases.
\end{abstract}

\section{Introduction}

Codes designed for channels with synchronization errors modeled as insertions and deletions have attracted significant attention due to their applications in disk and DNA-based data storage, racetrack memories, file synchronization, and compression \cite{MBT10,YKGMZM15,BLCCSS16,YGM17,HMG19}. 

As a key characteristic of DNA-based storage systems, data stored in DNA molecules is often corrupted by bursts of insertions and deletions \cite{LKGBC19}, whereas substitution errors dominate in traditional optical or magnetic storage systems. Motivated by this observation, many works have focused on designing codes capable of correcting consecutive bursts of deletions and on exploring the fundamental limits of the corresponding code parameters, for examples, see \cite{cheng2014codes,schoeny2017codes,schoeny2017novel,saeki2018improvement,sima2020syndrome,LP20,WTSGF24,nguyen2024new,sun2025codes,sun2025asymptotically,ye2025codes}.

In this paper, motivated by similar considerations, we study the maximal size of codes capable of correcting multiple bursts of deletions. Specifically, by exploring the structural properties of deletion balls corresponding to $t$ different $b$-burst deletions (see Section~\ref{subsec: relevant results} for a formal definition), we derive several upper and lower bounds on the maximum size of such codes. Our analysis combines the linear programming framework of Kulkarni and Kiyavash \cite{kulkarni2013nonasymptotic}, sphere-packing arguments, and hypergraph matching theory. Our general upper bound recovers the existing results for the special cases $t = 1$ and $b = 1$, and improves upon the best-known bounds in the general setting. Our lower bound establishes the asymptotic tightness of the upper bound when $q$ is sufficiently large.

In the remainder of this section, we first briefly review related prior work and then present our results.

\subsection{Previous Work and Relevant Results}\label{subsec: relevant results}

For an integer $q \ge 2$, let $\Sigma_q = \{0,1,\ldots,q-1\}$ denote the $q$-ary alphabet. 
For positive integers $t$, $b$, and $n$ satisfying $n \ge tb + 1$, and for a sequence $\mathbf{x} \in \Sigma_q^{n}$, we denote by $D_t^b(\mathbf{x})$ (respectively, $I_t^b(\mathbf{x})$) the set of all subsequences of $\mathbf{x}$ obtained by applying $t$ consecutive $b$-burst deletions (respectively, insertions), i.e., deleting (respectively, inserting) exactly $t$ contiguous blocks of $b$ symbols. We refer to $D_t^b(\mathbf{x})$ and $I_t^b(\mathbf{x})$ as the \emph{$(t,b)$-burst-deletion ball} and \emph{$(t,b)$-burst-insertion ball} centered at $\mathbf{x}$, respectively. For simplicity, we omit the parameter $b$ from the notation $D_t^b(\mathbf{x})$ and $I_t^b(\mathbf{x})$ when $b=1$.

A code $\mathcal{C} \subseteq \Sigma_q^{n}$ is said to be a \emph{$(t,b)$-burst-deletion-correcting code} if for any two distinct codewords $\mathbf{c}, \mathbf{c}' \in \mathcal{C}$, $D_t^b(\mathbf{c}) \cap D_t^b(\mathbf{c}') = \varnothing$. We denote by $M_q(n,(t,b))$ the largest size of a $(t,b)$-burst-deletion-correcting code in $\Sigma_q^{n}$.

The study of bounds on $M_q(n,(t,b))$ was initiated by Levenshtein~\cite{Levenshtein66}, who showed that $|D_1^b(\mathbf{x})| = |U_b(\mathbf{x})|$, where, for any $\mathbf{x} \in \Sigma_q^n$, the set $U_b(\mathbf{x})$ is defined  as 
$$
U_b(\mathbf{x}) \triangleq \{i \in [n-b] : x_i \neq x_{i+b}\} \cup \{n-b+1\},
$$
and $[n-b] \triangleq \{1,2,\ldots,n-b\}$. When $b=1$, the quantity $\lvert U_b(\mathbf{x}) \rvert$ coincides with the number of runs in $\mathbf{x}$. Subsequently, based on the above characterization of $\left|D_1^b(\mathbf{x})\right|$, Schoeny \emph{et al.} \cite{schoeny2017codes} derived an upper bound on the maximum size of binary $(1,b)$-burst-deletion-correcting codes. This result was later extended to general alphabets by Wang \emph{et al.} \cite{WTSGF24}; see Table~\ref{tab:upper_bounds} for explicit expressions of these bounds.

In a recent work \cite{LSYG26}, Lan \emph{et al.} studied the sequence reconstruction problem over a channel subject to multiple bursts of insertions and deletions. Specifically, they proved that the size of the $(t,b)$-burst-insertion ball is independent of its center and provided an exact expression for it. Based on this result, they further derived a general upper bound on $M_q(n,(t,b))$; see Table~\ref{tab:upper_bounds} for the resulting expression.

Beyond the above upper bounds, several constructions for $(t,b)$-burst deletion codes have been proposed. Levenshtein \cite{levenshtein1967asymptotically} constructed binary $(1,2)$-burst-deletion codes with redundancy at most $\log (n) + 1$. This was subsequently generalized by Cheng \emph{et al.}~\cite{cheng2014codes} to $(1,b)$-burst-deletion codes with redundancy $b \log (n/b + 1)$, and further improved by Schoeny \emph{et al.}~\cite{schoeny2017codes} to $\log n + (b-1)\log \log n + O(1)$. More recently, Sun \emph{et al.}~\cite{sun2025asymptotically} showed that for $q$-ary alphabets with $q \ge 2$, $(1,b)$-burst-deletion codes can achieve redundancy $\log n + O(1)$. In contrast, for $t \ge 2$, only a limited number of constructions are known. A general framework based on syndrome compression \cite{sima2020syndrome} combined with suitable pre-coding yields $(t,b)$-burst-deletion codes with redundancy at most $(4t - 1)\log n + o(\log n)$ for all $q \ge 2$ and constant $b$. For the special case $t=2$ and constant $b$, the best known construction achieves a redundancy of $5 \log n + o(\log n)$ \cite{ye2025codes}.

\subsection{Our Results}

Our contributions in this paper are as follows:
\begin{itemize}
    \item First, we study structural properties of the $(t,b)$-burst-deletion ball $D_t^b(\mathbf{x})$ for a general sequence $\mathbf{x}\in\Sigma_q^n$. In particular, we derive upper and lower bounds on $|D_t^b(\mathbf{x})|$ and establish a monotonicity property: for any sequence $\mathbf{x}\in D_1^b(\mathbf{z})$, it holds that $|D_t^b(\mathbf{x})| \le |D_t^b(\mathbf{z})|$ (see Lemma~\ref{lem: monotone property of |D_t^b(x)|}).

    \item Second, based on these structural properties of $(t,b)$-burst-deletion balls, we derive both a linear-programming bound and a sphere-packing bound on $M_q(n,(t,b))$, which asymptotically behave as
    $$
    M_q(n,(t,b)) \le \frac{t!\, q^{n - tb + t}}{(q - 1)^t \left(n - 2tb - \frac{(t-1)b}{q}\right)^t}\left(1 + o_n(1)\right),
    $$
    as $n \to \infty$ for fixed $q$ and $t$. Moreover, building on a connection between $(t,b)$-burst-deletion-correcting codes and matchings in a special hypergraph, we show that
    $$
    M_q(n,(t,b))=
    \begin{cases}
    q^{b}, & \text{if } t = \frac{n}{b} - 1,\\
    \frac{q^{n - tb}}{\binom{n - tb + t}{t}}(1 - o(1)), & \text{if } t < \frac{n}{b} - 1,
    \end{cases}
    $$
    as $q \to \infty$ for fixed $n, t, b$, which recovers the result for $b=1$ in \cite{kong2025combinatorial}.
    
    \item Finally, we further investigate the behavior of $|D_t^b(\mathbf{x})|$ and its dependence on the center sequence $\mathbf{x}$ for the special case $t = b = 2$. By analyzing the relationship between subsequences in $D_2^2(\mathbf{x})$ and those in $D_2(\mathbf{x})$, we derive an improved lower bound on $|D_2^2(\mathbf{x})|$, which is shown to be tight in certain cases. Consequently, this yields an improved upper bound on $M_q(n,(2,2))$ compared to the upper bounds obtained for general parameter settings.
\end{itemize}

In Table~\ref{tab:upper_bounds}, we summarize the upper bounds on $M_q(n,(t,b))$ obtained in this paper and compare them with several known results. As shown in the table, our general upper bound asymptotically coincides with the bounds in~\cite{schoeny2017codes,WTSGF24} for the case $t=1$, and with those in~\cite{Levenshtein66,kulkarni2013nonasymptotic} for the case $b=1$. Moreover, it improves upon the bound in~\cite{LSYG26} for the general parameter setting where $b,t \ge 2$ and $q \ge 2$. Furthermore, we also list the corresponding redundancies, which yield non-existence results for $(t,b)$-burst deletion-correcting codes with such redundancies.

\begin{table}[h!]
    \centering
    \renewcommand{\arraystretch}{1.8}
    \caption{Summary of upper bounds on $M_q(n, (t, b))$.}
    \label{tab:upper_bounds}
    \begin{tabular}{cccc}
    \hline
    Parameter Regimes & Upper bounds & Redundancy & Reference\\\hline
    $q = 2, b = 1$ & $\frac{2^n t!}{n^t}$ & $t\log{n}-\log{t!}$ & \cite{Levenshtein66}\\\hline
    $b = 1$ & $\frac{t!q^n}{(q-1)^tn^t}$ & $t\log_q{n}-\log_q(\frac{t!}{(q-1)^{t}})$ & \cite{kulkarni2013nonasymptotic}\\\hline
    $t = 1$ & $\frac{q^{n - b + 1} - q^b}{(q - 1)(n - 2b + 1)}$ & $b+\log_q(n-2b)-\log_q(\frac{q}{q-1})+o(1)$ & \thead{\cite{WTSGF24}\\ (\cite{schoeny2017codes} for $q=2$)}\\ \hline
    \thead{any $t, b\geq 1$,\\ $q\geq 2$} & $\frac{q^{n + t}}{\sum_{i = 0}^t\binom{n + t}{i}(q - 1)^i}$ & $t\log{n}-t\log_q(\frac{q}{q-1})-\log{t!}+o(1)$ & \cite{LSYG26}\\\hline
    \thead{any $b\geq 1$,\\ $q\geq 2$, $t\geq 1$ constants} & $\frac{t!q^{n - tb + t}(1 + o(1))}{(q - 1)^t\left(n-2tb-\frac{(t-1)b}{q}\right)^t}
    $ & $tb+t\log{n}-t\log_q(\frac{q}{q-1})-\log{t!}+O(1)$ & Theorems \ref{thm1: upper bound on M_q(n,(t,b))} and \ref{thm: upper_bound_levenshtein}\\
    \hline
    \thead{$t=b=2$,\\any $q\geq2$} & $\frac{2q^{n-4}}{(q-1)^2n^2}(1+o(1))$ & $2\log_q{n}+4-\log_q(\frac{2}{(q-1)^2})+o(1)$ & Corollary \ref{coro: asy up for t=b=2}\\
    \hline
    \end{tabular}
\end{table}
% \begin{table}[h!]
%     \centering
%     \renewcommand{\arraystretch}{1.8}
%     \caption{Summary of upper bounds on $M_q(n, (t, b))$.}
%     \label{tab:upper_bounds}
%     \begin{tabular}{cccc}
%     \hline
%     Upper bounds & Redundancy & Parameter Regimes & Reference\\\hline
%     $\frac{2^n t!}{n^t}$ & $t\log{n}-\log{t!}$ & $q = 2, b = 1$ & \cite{Levenshtein66}\\\hline
%     $\frac{t!q^n}{(q-1)^tn^t}$ & $t\log_q{n}-\log_q(\frac{t!}{(q-1)^{t}})$ & $b = 1$ & \cite{kulkarni2013nonasymptotic}\\\hline
%     %$\frac{2^{n - b + 1} - 2^b}{n - 2b + 1}$ & $n-b+1-\log(n-2b)+o(1)$ & $q = 2, t = 1$ &\cite{schoeny2017codes}\\\hline
%     $\frac{q^{n - b + 1} - q^b}{(q - 1)(n - 2b + 1)}$ & $b+\log_q(n-2b)-\log_q(\frac{q}{q-1})+o(1)$ & $t = 1$ & \thead{\cite{WTSGF24}\\ (\cite{schoeny2017codes} for $q=2$)}\\ \hline
%     $\frac{q^{n + t}}{\sum_{i = 0}^t\binom{n + t}{i}(q - 1)^i}$ & $t\log{n}-t\log_q(\frac{q}{q-1})-\log{t!}+o(1)$ & \thead{any $t, b\geq 1$,\\ $q\geq 2$} & \cite{LSYG26}\\\hline
%     $\frac{t!q^{n - tb + t}(1 + o(1))}{(q - 1)^t\left(n-2tb-\frac{(t-1)b}{q}\right)^t}
%     $ & $tb+t\log{n}-t\log_q(\frac{q}{q-1})-\log{t!}+O(1)$ & \thead{any $b\geq 1$,\\ $q\geq 2$, $t\geq 1$ constants} & Theorems \ref{thm1: upper bound on M_q(n,(t,b))} and \ref{thm: upper_bound_levenshtein}\\
%     \hline
%     $\frac{2q^{n-4}}{(q-1)^2n^2}(1+o(1))$ & $2\log_q{n}+4-\log_q(\frac{2}{(q-1)^2})+o(1)$ & \thead{$t=b=2$,\\any $q\geq2$} & Corollary \ref{coro: asy up for t=b=2}\\
%     \hline
%     \end{tabular}
% \end{table}
%\XLcom{Don't forget to extend Table \ref{tab:upper_bounds} to include all the results.}

The rest of the paper is organized as follows. In Section~\ref{sec: notations}, we introduce the necessary notation. In Section~\ref{sec: properties of D_t^b(x)}, we present the first part of our results on the structural properties of the $(t,b)$-burst-deletion ball $D_t^b(\mathbf{x})$ for general $t$ and $b$. In Section~\ref{sec: general bounds}, we establish two upper bounds on $M_q(n,(t,b))$ for general parameter regimes, along with combinatorial upper and lower bounds. In Section~\ref{sec: improved ub t=b=2}, we present our results for $D_2^2(\mathbf{x})$ and the proof of the improved upper bound on $M_q(n,(t,b))$. Finally, we conclude the paper in Section~\ref{sec: conclusion}.

\section{Notations}\label{sec: notations}

For integers $m$ and $n$, the set $\{m, m+1, \ldots, n\}$ is denoted by $[m:n]$, and $[n]$ is used as shorthand for $[1:n]$.
For a sequence $\mathbf{x} = (x_1, \ldots, x_n) \in \Sigma_q^n$ and a subset $R \subseteq [n]$, let $\mathbf{x}_R$ denote the restriction of $\mathbf{x}$ to the coordinates indexed by $R$.
In particular, we define $\mathbf{x}_{\emptyset}$ to be the null vector, i.e., the vector of length zero.
For two sequences $\mathbf{x}$ and $\mathbf{y}$, we denote by $\mathbf{x}\mathbf{y}$ their concatenation.

We also define the $b \times \lceil n / b \rceil$ \emph{array representation} of $\mathbf{x}$, denoted by $A_b(\mathbf{x})$, obtained by arranging the entries of $\mathbf{x}$ column by column. If the last column is incomplete, the remaining entries are filled by repeating the last symbol in the corresponding row. Specifically,
$$
A_b(\mathbf{x}) \triangleq
\begin{bmatrix}
    x_1   & x_{b+1} & \cdots & x_{n-b+1} \\
    x_2   & x_{b+2} & \cdots & x_{n-b+2} \\
    \vdots& \vdots  & \ddots & \vdots    \\
    x_b   & x_{2b}  & \cdots & x_n
\end{bmatrix}.
$$
Moreover, we also use
$$
    \begin{bmatrix}
    \mathbf{x}^{(1)}; 
    \mathbf{x}^{(2)};
    \cdots;
    \mathbf{x}^{(\lceil n / b \rceil)}
    \end{bmatrix}
$$
to denote the row-wise form of $A_{b}(\mathbf{x})$, where $\mathbf{x}^{(j)} \triangleq (x^{(j)}_1, x^{(j)}_2, \ldots, x^{(j)}_{\lceil n / b \rceil})$ and $x^{(j)}_i = x_{j+(i-1)b}$ for all $j\in [b]$ and $i\in [\lceil n / b \rceil]$.

For a positive integer $b$, a sequence $\mathbf{y} \in \Sigma_q^{\,n-b}$ is said to be obtained from $\mathbf{x} \in \Sigma_q^n$ by a \emph{$b$-burst deletion} if 
$$
\mathbf{y} = \mathbf{x}_{[n] \setminus [i:i+b-1]}
$$ 
for some $i \in [n-b+1]$. More generally, for integers $i_1, \ldots, i_t \in [n-b+1]$ satisfying $i_{j+1} - i_j \ge b$ for all $j \in [t-1]$, define
\begin{align}\label{eq:subseq_in_D_t^b}
    \mathrm{Del}_b(\mathbf{x}, \{i_1, \ldots, i_t\})
    \triangleq
    \mathbf{x}_{[n] \setminus \bigcup_{j=1}^t [i_j : i_j + b - 1 ]}
\end{align}
as the subsequence obtained by applying $t$ disjoint $b$-burst deletions at positions $i_1, \ldots, i_t$. Then, the $(t,b)$-burst-deletion ball $D_t^b(\mathbf{x})$ is defined as\vspace{-1.2ex}
\begin{align}
    D_t^b(\mathbf{x})
    \triangleq
    \Bigl\{
    \mathrm{Del}_b(\mathbf{x}, \{i_1, \ldots, i_t\}) :
    ~ i_j \in [n-b+1], 
    i_{j+1}-i_j \ge b,\ \forall\, j \in [t-1]
    \Bigr\}. \label{eq: def D_t^b(x)}
\end{align}
Given $\mathbf{y} \in D_t^b(\mathbf{x})$, we say that $\mathrm{Del}_b(\mathbf{x}, \{i_1, \ldots, i_t\})$ is a \emph{representation} of $\mathbf{y}$ if
$\mathbf{y} = \mathrm{Del}_b(\mathbf{x}, \{i_1, \ldots, i_t\})$.

%Let $d_b(\mathbf{x}, \mathbf{y}) = n/b - t + 1$ be the distance of $\mathbf{x}, \mathbf{y}$, where $t$ is the minimum integer such that $D_t^b(\mathbf{x})\cap D_t^b(\mathbf{y})\neq \varnothing$. 
%\XLcom{Is the definition of this distance $d_b(\cdot,\cdot)$ standard? Because it doesn't seem to include edit distance as a special case.}

Similarly, for a positive integer $b$, we denote by $I_t^b(\mathbf{x})$ the set of all supersequences of $\mathbf{x}$ obtained after $t$ $b$-burst insertions, which we refer to as the \emph{$(t,b)$-burst-insertion ball} centered at $\mathbf{x}$. A code $\mathcal{C} \subseteq \Sigma_q^{n}$ is said to be a \emph{$(t,b)$-burst-insertion-correcting code} if for any two distinct codewords $\mathbf{c}, \mathbf{c}' \in \mathcal{C}$, $I_t^b(\mathbf{c}) \cap I_t^b(\mathbf{c}') = \varnothing$. Employing a similar approach with \cite{Levenshtein66}, it can be verified that a code is a $(t,b)$-burst-deletion-correcting code if and only if it is a $(t,b)$-burst-insertion-correcting code. Therefore, we restrict our discussion to $(t,b)$-burst-deletion-correcting codes in this work.

\section{Properties of $(t,b)$-burst-deletion balls and bounds on their sizes}\label{sec: properties of D_t^b(x)}

In this section, we first establish several properties of representations of sequences in $D_t^b(\mathbf{x})$. Using these properties, we then derive upper and lower bounds on $|D_t^b(\mathbf{x})|$ for any $\mathbf{x} \in \Sigma_q^n$, and prove a monotonicity property of $|D_t^b(\mathbf{x})|$.

To begin with, we introduce a compact representation for sequences in $D_t^b(\mathbf{x})$, which groups together consecutive $b$-burst deletions occurring at adjacent positions.

Consider a sequence $\mathrm{Del}_b(\mathbf{x}, \{j_1, j_2, \ldots, j_t\})\in D_t^b(\mathbf{x})$ satisfying
$j_{\ell+1} - j_\ell \ge b$ for all $\ell \in [t-1]$.
Whenever several deletions occur at positions of the form
$i, i+b, \ldots, i+(c-1)b$ for some $c \ge 1$, we group them into a single
block and represent them by the pair $(i, c)$. Consequently, the deletion pattern $(j_1, j_2, \ldots, j_t)$ can be
represented by an integer $s \in [t]$ with tuples
$(i_1, i_2, \ldots, i_s) \in [n-b+1]^s$ and
$(c_1, c_2, \ldots, c_s) \in \mathbb{Z}_+^s$ satisfying\vspace{-1.2ex}
\begin{align}
    i_{\ell+1} > i_\ell + c_\ell b,~\forall\, 1 \le \ell \le s-1,~\text{and}~\sum_{\ell=1}^s c_\ell = t,\label{eq: representation cond}
\end{align}
as well as $\{j_1, \ldots, j_t\} = \bigcup_{\ell=1}^s \{i_\ell, i_\ell + b, \ldots, i_\ell + (c_\ell - 1) b\}$. Then, we denote this sequence by $\mathrm{Del}_b(\mathbf{x}, (i_1, c_1), \ldots, (i_s, c_s))$ and refer to this expression as a \emph{compact representation} of
$\mathrm{Del}_b(\mathbf{x}, \{j_1, j_2, \ldots, j_t\})$.

\begin{example}\label{ex:compact_representation}
Let $q = 3$, $b = 2$, and $t = 2$. For
$\mathbf{x} \triangleq (0,2,0,1,1,1,2,0) \in \Sigma_3^8$, $\mathbf{y}\triangleq (0,1,2,0)\in D_2^2(\mathbf{x})$ has the following four different representations
\begin{align*}
   \mathrm{Del}_2(\mathbf{x}, \{1, 4\}),\mathrm{Del}_2(\mathbf{x}, \{1, 5\}),\mathrm{Del}_2(\mathbf{x}, \{2, 4\}),\mathrm{Del}_2(\mathbf{x}, \{2, 5\}).
\end{align*}
In the first expression, the $2$-deletions occur at positions $1$ and $4$ are separate, which yields the compact representation $\mathrm{Del}_2(\mathbf{x}, (1,1), (4,1)).$ In the third expression, the $2$-deletions are consecutive, yielding the compact representation $\mathrm{Del}_2(\mathbf{x}, (2,2)).$
\end{example}

Example \ref{ex:compact_representation} shows that a sequence $\mathbf{y} \in D_t^b(\mathbf{x})$ may admit different compact representations. Suppose $\mathbf{y} \in D_t^b(\mathbf{x})$ admits two compact representations 
$$
\mathrm{Del}_b(\mathbf{x}, \hspace{-0.15ex}(i_1, c_1),\hspace{-0.15ex}\dots, \hspace{-0.15ex}(i_s, c_s))\hspace{-0.2ex},~\hspace{-0.2ex}\mathrm{Del}_b(\mathbf{x}, \hspace{-0.15ex}(i'_1, c'_1),\hspace{-0.15ex}\dots, \hspace{-0.1ex}(i'_s, c'_{s'}))\hspace{-0.15ex}.
$$ 
If $(i_\ell, c_\ell) = (i'_\ell, c'_\ell)$ for $\ell < a$, and either
$i_{a} < i'_{a}$, or $i_{a} = i'_{a}$ and $c_{a} > c'_{a}$, we say that $\mathrm{Del}_b(\mathbf{x}, (i'_1, c'_1), \dots, (i'_{s'}, c'_{s'}))$ \emph{precedes} $\mathrm{Del}_b(\mathbf{x}, (i_1, c_1), \dots, (i_s, c_s))$. For example, in Example \ref{ex:compact_representation}, $\mathrm{Del}_2(\mathbf{x}, (2,2))$ precedes $\mathrm{Del}_2(\mathbf{x}, (1,1), (4,1))$.

% \begin{example}
% Given $\mathbf{x} = (0,2,0,1,1,1,2,0) \in \Sigma_3^8$ as in Example \ref{ex:compact_representation}. Then, we have $\mathrm{Del}_2(\mathbf{x}, (2,2))$ precedes $\mathrm{Del}_2(\mathbf{x}, (1,1), (4,1))$.
% \end{example}

In the following, we select a compact representation of $\mathbf{y}$, thereby eliminating ambiguity and ensuring that each sequence admits a unique representation.

\begin{definition}\label{def: maximal representation}
    For $\mathbf{x}\in \Sigma_{q}^{n}$ and $\mathbf{y}\in D_{t}^{b}(\mathbf{x})$, a compact representation $\mathrm{Del}_b(\mathbf{x}, (i_1, c_1), \dots, (i_s, c_s))$ of $\mathbf{y}$ is called \textbf{maximal}, if no other compact representation of $\mathbf{y}$ precedes it.
\end{definition}

\begin{example}
    Given $\mathbf{x} = (0,2,0,1,1,1,2,0) \in \Sigma_3^8$ and  $\mathbf{y} = (0,1,2,0) \in D_2^2(\mathbf{x})$ as in Example \ref{ex:compact_representation}, the sequence $\mathbf{y}$ admits four different compact representations, $\mathrm{Del}_2(\mathbf{x}, (1,1), (4,1))$, $\mathrm{Del}_2(\mathbf{x}, (1,1), (5,1))$, $\mathrm{Del}_2(\mathbf{x}, (2,2)),$ and $\mathrm{Del}_2(\mathbf{x}, (2,1), (5,1))$. Clearly, $\mathrm{Del}_2(\mathbf{x}, (2,1), (5,1))$ is the maximal compact representation of $\mathbf{y}$.
\end{example}

It can be verified that the relation \emph{precedes} defines a total order on the set of compact representations of any given sequence in $D_t^b(\mathbf{x})$. As a consequence, every sequence in $D_t^b(\mathbf{x})$ admits a unique maximal compact representation, as stated in the following proposition.

\begin{proposition}\label{prop: max rep of seq in D_t^b(x) is unique}
	For any $\mathbf{x} \in \Sigma_q^n$, let $\mathbf{u}$ and $\mathbf{v}$ be two sequences in $D_t^b(\mathbf{x})$ with maximal compact representations 
    \[
    \mathrm{Del}_b(\mathbf{x}, (i_1, c_1), \dots, (i_s, c_s)),\mathrm{Del}_b(\mathbf{x}, (i'_1, c'_1), \dots, (i'_{s'}, c'_{s'})), 
    \] 
    respectively. Then, $\mathbf{u} = \mathbf{v}$ if and only if $s = s'$ and $(i_\ell, c_\ell) = (i'_\ell, c'_\ell)$ for any $\ell \in [s].$
\end{proposition}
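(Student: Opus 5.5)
The plan is to reduce the proposition to two facts. First, ``precedes'' is a strict total order on the finite set of compact representations of a fixed $\mathbf{y}\in D_t^b(\mathbf{x})$. Second, a compact representation determines exactly one sequence, namely the one it represents. The paper asserts the first fact, but we will verify it, since uniqueness of the maximal element rests on it.

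\emph{The ``if'' direction.} Suppose $s=s'$ and $(i_\ell,c_\ell)=(i'_\ell,c'_\ell)$ for all $\ell\in[s]$. Both compact representations then encode the same set of deleted positions
\[
\bigcup_{\ell=1}^{s}\{i_\ell,i_\ell+b,\ldots,i_\ell+(c_\ell-1)b\}.
\]
Hence $\mathbf{u}$ and $\mathbf{v}$ are the same restriction $\mathbf{x}_{[n]\setminus\bigcup_j[j:j+b-1]}$, so $\mathbf{u}=\mathbf{v}$.

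\emph{The ``only if'' direction.} Suppose $\mathbf{u}=\mathbf{v}=\mathbf{y}$. Then both listed representations are maximal compact representations of the same $\mathbf{y}$, and it suffices to show that $\mathbf{y}$ has a unique maximal compact representation.

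\emph{Encoding as words.} Encode a compact representation as the finite word $w=((i_1,c_1),\ldots,(i_s,c_s))$ over the alphabet of pairs. Order pairs by $(i,c)\prec(i',c')$ if and only if $i<i'$, or $i=i'$ and $c>c'$. This is a strict total order on pairs, since it is lexicographic in $(i,-c)$. ``Precedes'' is then the lexicographic comparison of words at the first differing index $a$, with $w'$ preceding $w$ when the pair of $w$ at index $a$ is $\prec$ that of $w'$.

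\emph{The main obstacle.} We must rule out the case where one word is a proper prefix of the other, because the definition compares only at a differing index $a\le\min(s,s')$. This case is impossible by the constraint $\sum_\ell c_\ell=t$ in \eqref{eq: representation cond}. If $w$ were a proper prefix of $w'$, then
\[
t=\sum_{\ell\le s}c_\ell<\sum_{\ell\le s'}c'_\ell=t,
\]
a contradiction, since all $c'_\ell\ge 1$. Consequently, any two distinct compact representations have a first differing index $a\le\min(s,s')$, and there $(i_a,c_a)\ne(i'_a,c'_a)$. Exactly one of the two then precedes the other, so the relation is total and antisymmetric. Transitivity follows as for any lexicographic order built on a strict total order of letters.

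\emph{Conclusion.} The set of compact representations of $\mathbf{y}$ is finite and nonempty, because $\mathbf{y}\in D_t^b(\mathbf{x})$ and every deletion pattern admits a compact form by grouping. Under a strict total order, a finite nonempty set has exactly one element that nothing precedes, namely its extreme element. So two maximal compact representations of $\mathbf{y}$ must coincide, which forces $s=s'$ and $(i_\ell,c_\ell)=(i'_\ell,c'_\ell)$ for all $\ell\in[s]$.

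One further point needs checking: the grouping into compact form must itself be well defined. The condition $i_{\ell+1}>i_\ell+c_\ell b$ forces each block to be a maximal arithmetic run of step $b$, so a given deletion set $\{j_1,\ldots,j_t\}$ has exactly one compact form. This guarantees that compact representations correspond bijectively to deletion sets, so no additional ambiguity arises.
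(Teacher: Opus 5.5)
Your proposal is correct and follows the paper's route. The paper simply asserts (``it can be verified'') that ``precedes'' is a total order on the compact representations of a fixed $\mathbf{y}$ and deduces that the maximal one is unique. You also carry out the verification the paper omits, most notably ruling out the proper-prefix case via the constraint $\sum_\ell c_\ell = t$.
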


Moreover, the following proposition provides an equivalent characterization of maximal compact representations.

\begin{proposition}\label{prop: every seq in D_t^b(x) has maximal rep}
Let $\mathbf{x} \in \Sigma_q^n$ and $\mathbf{y} \in D_t^b(\mathbf{x})$, and let $\mathrm{Del}_b(\mathbf{x}, (i_1, c_1), \dots, (i_s, c_s))$ be a compact representation of $\mathbf{y}$.  
Then, $\mathrm{Del}_b(\mathbf{x}, (i_1, c_1), \dots, (i_s, c_s))$ is maximal if and only if, for every $a \in [s]$, 
\begin{align}\label{eq: maximal cond}
    x_{i_a + c_a b} \neq x_{i_a + r b},~ \forall~ r \in [0: c_a-1].
\end{align} 
The condition is vacuous whenever $i_a + c_a b > n.$
\end{proposition}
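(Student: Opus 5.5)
We prove the two directions separately, using the elementary ``shift'' move on a block. For a block $(i_a,c_a)$ the deleted window is $[i_a : i_a+c_ab-1]$. Shifting the whole block one position to the right, i.e.\ replacing it by $(i_a+1,c_a)$, deletes $[i_a+1 : i_a+c_ab]$. The resulting subsequence is the same exactly when $x_{i_a} = x_{i_a+c_ab}$: the only change is that $x_{i_a}$ is kept and $x_{i_a+c_ab}$ is removed, and the kept symbol lands in the same slot. More generally, splitting off the first $r$ bursts and moving the remaining $c_a-r$ bursts right by one (the pattern $(i_a,r),(i_a+rb+1,c_a-r)$) preserves the subsequence iff $x_{i_a+rb}=x_{i_a+c_ab}$. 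The same holds when $r=0$, which is the pure shift.

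\emph{Necessity.} Suppose condition \eqref{eq: maximal cond} fails for some $a$ with $i_a+c_ab\le n$, so $x_{i_a+c_ab}=x_{i_a+rb}$ for some $r\in[0:c_a-1]$. Apply the modified pattern above to block $a$ and leave the other blocks unchanged. The result still represents $\mathbf{y}$, and it has $t$ disjoint bursts with spacing at least $b$. If the shifted part now touches block $a+1$, i.e.\ $i_a+c_ab+1 = i_{a+1}$, the two merge into one block, which is still legal. We then check that the new compact form precedes the old one. The first $a-1$ blocks agree. If $r\ge1$, block $a$ becomes $(i_a,r)$ with $r<c_a$, and the new representation precedes the old one by the tie-breaking rule on $c$. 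If $r=0$, block $a$ becomes $(i_a+1,\cdot)$ with a larger start index, which again precedes the old one. Either way this contradicts maximality. We must also ensure that $i_a+1+\cdots$ stays within $[n-b+1]$; this follows from $i_a+c_ab\le n$.

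\emph{Sufficiency.} This is the main obstacle. Assume \eqref{eq: maximal cond} holds for all $a$, and suppose another compact representation $R'=((i'_\ell,c'_\ell))$ of $\mathbf{y}$ precedes $R$. Let $a$ be the first index where they differ. Then either $i'_a>i_a$, or $i'_a=i_a$ and $c'_a<c_a$. In both cases the set of positions of $\mathbf{x}$ kept by $R'$ and the set kept by $R$ agree before $i_a$. Comparing the greedy leftmost embeddings of the two kept-index sets, the positions kept by $R'$ from $i_a$ onward must be spelled by the same symbols as those kept by $R$. Under $R$, the first kept position at or after $i_a$ is $p=i_a+c_ab$. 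Under $R'$, the corresponding kept position is some $i_a+rb$ with $r\in[0:c_a-1]$ lying in $R$'s deleted window. To see this, note that $R'$ deletes fewer full bursts starting at $i_a$, or starts later; modulo $b$, the first kept index of $R'$ inside $R$'s window is congruent to $i_a$ when aligned with $R$'s first kept symbol. Equality of $\mathbf{y}$ at that slot then forces $x_{i_a+rb}=x_{i_a+c_ab}$, contradicting \eqref{eq: maximal cond}.

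The delicate point is the alignment claim: the first symbol kept by $R'$ at or after position $i_a$ must be compared with $x_{i_a+c_ab}$, and it sits at an offset $rb$ from $i_a$. We justify this by counting deleted positions. Before that symbol, $R$ has deleted exactly $(\sum_{\ell<a}c_\ell)b + c_ab$ positions, while $R'$ has deleted a multiple of $b$ that is strictly smaller. The index difference is therefore a nonnegative multiple of $b$, less than $c_ab$. If $p>n$, there is no such symbol, and the condition is vacuous, consistent with the statement.
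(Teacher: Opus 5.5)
Your proposal is correct and is essentially the paper's proof. The split $(i_a,r),(i_a+rb+1,c_a-r)$ or shift $(i_a+1,c_a)$, with merging into block $a+1$ when $i_{a+1}=i_a+c_ab+1$, shows that maximality forces the condition, and comparing the first kept symbol at or after $i_a$ under the two representations gives the converse; you could replace your mod-$b$ alignment paragraph by the paper's direct observation that this symbol under $R'$ is $x_{i_a}$ when $i'_a>i_a$ and $x_{i_a+c'_ab}$ when $i'_a=i_a$, $c'_a<c_a$, while in the vacuous case $i_a+c_ab=n+1$ the contradiction is simply that $R'$ keeps a symbol at or after $i_a$ and $R$ keeps none (a case the paper also leaves implicit).
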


\begin{IEEEproof}
    We first prove the necessity by contradiction. Suppose that $\mathrm{Del}_b(\mathbf{x}, (i_1, c_1), \dots, (i_s, c_s))$ is not maximal, and let $\mathrm{Del}_b(\mathbf{x}, (i_1', c_1'), \dots, (i_{s'}', c_{s'}'))$ be a compact representation of $\mathbf{y}$, which precedes $\mathrm{Del}_b(\mathbf{x}, (i_1, c_1), \dots, (i_s, c_s))$.

    Without loss of generality, let $a \in [s]$ be the smallest index such that either $i_a < i'_a$ or $i_a = i'_a$ and $c_a > c'_a$. Note that for all $\ell < a$, we have $(i_\ell, c_\ell) = (i'_\ell, c'_\ell)$. Let $d \triangleq \sum_{\ell < a} c_\ell b$ denote the total number of symbols deleted before position $i_a$. We next consider the following two cases.

    If $i_a < i'_a$, then $y_{i_a - d} = x_{i_a}$ since $\mathbf{y}$ admits the representation $\mathrm{Del}_b(\mathbf{x}, (i_1', c_1'), \dots, (i_{s'}', c_{s'}'))$. On the other hand, from the representation $\mathrm{Del}_b(\mathbf{x}, (i_1, c_1), \dots, (i_s, c_s))$, we have $y_{i_a - d} = x_{i_a + c_a b} \neq x_{i_a}$, which is a contradiction.

    Similarly, if $i_a = i'_a$ and $c_a > c'_a$, we have $y_{i_a - d} = x_{i_a + c_a' b}$ by the representation $\mathrm{Del}_b(\mathbf{x}, (i_1', c_1'), \dots, (i_s', c_s'))$ and $y_{i_a - d} = x_{i_a + c_a b}\neq x_{i_a + c'_a b}$ by the representation $\mathrm{Del}_b(\mathbf{x}, (i_1, c_1), \dots, (i_s, c_s))$, which is also a contradiction.

    Next, we prove the sufficiency by contradiction. Suppose that there exists $a \in [s]$ and some $r \in [0: c_a-1]$ such that $x_{i_a + r b} = x_{i_a + c_a b}$. We next construct a new compact representation of $\mathbf{y}$ which precedes $\mathrm{Del}_b(\mathbf{x},(i_1, c_1), \dots, (i_s, c_s))$. 

    When $i_{a+1} > i_a + c_a b + 1$ and $r\geq1$, let $s'= s+1$ and $\{(i'_\ell, c'_\ell)\}_{\ell \in [s']}$ be as follows:
    \begin{align}
        (i'_\ell, c'_\ell) =
        \begin{cases}
            (i_\ell, c_\ell), & 1 \le \ell \le a-1, \\
            (i_a, r), & \ell = a, \\
            (i_a + r b + 1,\, c_a - r), & \ell = a+1, \\
            (i_{\ell-1}, c_{\ell-1}), & a+2 \le \ell \le s' .
            \end{cases}
            \label{eq1: prop new compact rep}
    \end{align}
    Clearly, pairs $\{(i_{\ell}',c_{\ell}')\}_{\ell\in [s']}$ satisfy \eqref{eq: representation cond}, which implies that $\mathrm{Del}_{b}(\mathbf{x}, (i'_1, c'_1),\ldots, (i'_{s'}, c'_{s'}))$ is a compact representation of some $\mathbf{y}'\in D_{t}^{b}(\mathbf{x})$. Moreover, the deleted positions of $\mathbf{y}'$ differ from those of $\mathbf{y}$ only on two intervals:
    \begin{align}
        [i_a: i_a + r b - 1 ]
        ~\text{and}~
        [i_a + r b + 1: i_a + c_a b].
        \label{eq1: new compact representation}
    \end{align}
    Recall that in $\mathbf{y}$, the $a$-th deleted interval is
    $[i_a: i_a + c_a b - 1]$. Hence, $\mathbf{y}$ and $\mathbf{y}'$ can be written as
    \begin{align}
        \mathbf{y} & = (\ldots, x_{i_a-1},\, x_{i_a + c_a b},\, x_{i_a + c_a b + 1}, \ldots), \nonumber \\
        \mathbf{y}' & = (\ldots, x_{i_a-1},\, x_{i_a + rb},\, x_{i_a + c_a b + 1}, \ldots). \label{eq1: prop1 y and y'}
    \end{align}
    Since $x_{i_a + r b} = x_{i_a + c_a b}$, \eqref{eq1: prop1 y and y'} implies that $\mathbf{y} = \mathbf{y}'$.

    When $i_{a+1} > i_a + c_a b + 1$ and $r=0$, let $s'= s$ and $\{(i_{\ell}',c_{\ell}')\}_{\ell\in [s']}$ be as follows:
    \begin{align}
        (i_{\ell}',c_{\ell}')=\begin{cases}
            (i_{\ell},c_{\ell}), & 1\leq \ell\leq a-1,\\
            (i_{a}+1,c_a), & \ell=a,\\
            (i_{\ell},c_{\ell}), &a+1\leq \ell\leq s'. \label{eq2: new compact representation}
        \end{cases}
    \end{align}
    Similarly, we can verify that $\mathrm{Del}_{b}(\mathbf{x}, (i'_1, c'_1),\ldots, (i'_{s'}, c'_{s'}))$ is a compact representation of $\mathbf{y}$.

    When $i_{a+1}=i_a+c_ab+1$, we define the pairs $\{(i_{\ell}',c_{\ell}')\}_{\ell\in [s']}$ in the same way as in \eqref{eq1: new compact representation} and \eqref{eq2: new compact representation} for the cases $r\geq 1$ and $r=0$, respectively. The difference is that, for the case $r\geq 1$, we define the new representation as
    $$
    \mathrm{Del}_{b}(\mathbf{x}, (i'_1, c'_1),\ldots,(i_{a}',c_{a}'), (i_{a+1}',c_{a+1}'+c_{a+2}'),\ldots, (i'_{s'}, c'_{s'})),
    $$
    and for the case $r=0$, we define the new representation as
    $$
    \mathrm{Del}_{b}(\mathbf{x}, (i'_1, c'_1),\ldots,(i_{a-1}',c_{a-1}'),(i_{a}',c_{a}'+c_{a+1}'), \ldots,(i'_{s'}, c'_{s'})).
    $$
    Then, since $i_{a+1}=i_a+c_ab+1$, we have $i_{a+1}'+c_a'b=i_{a+2}$ when $r\geq 1$ and $i_{a}'+c_a'b=i_{a+1}$ when $r=0$. Thus, by arguments similar to those above, the newly defined representation is a valid representation of $\mathbf{y}$.
    
    Finally, we conclude that there exists a new compact representation of $\mathbf{y}$ that precedes $\mathrm{Del}_b(\mathbf{x}, (i_1, c_1), \dots, (i_s, c_s))$, contradicting the assumption that $\mathrm{Del}_b(\mathbf{x}, (i_1, c_1), \dots, (i_s, c_s))$ is maximal. This completes the proof.
\end{IEEEproof}

\begin{example}
    Given $\mathbf{x}$ and  $\mathbf{y}$ as in Example \ref{ex:compact_representation}, consider the compact representation $\mathrm{Del}_2(\mathbf{x}, (2, 2))$ of $\mathbf{y}$. Since $x_4 = x_6$, we can shift the deleted intervals to obtain the maximal compact representation $\mathrm{Del}_2(\mathbf{x}, (2, 1), (5, 1))$, which precedes $\mathrm{Del}_2(\mathbf{x}, (2, 2))$.
\end{example}

Next, based on the above characterization of maximal compact representations of sequences in $D_{t}^{b}(\mathbf{x})$, we prove the following upper and lower bounds on $|D_{t}^{b}(\mathbf{x})|$. This result can be viewed as a generalization of the result by Levenshtein~\cite{Levenshtein66}.
%in the spirit of \cite{Levenshtein66}.

\begin{theorem}\label{thm: size of burst-deletion ball}
    For any $\mathbf{x}\in\Sigma_q^n$, it holds that \begin{align}
        \binom{|U_b(\mathbf{x})|-(t - 1)b}{t}\leq |D_t^b(\mathbf{x})|\leq \binom{|U_b(\mathbf{x})|+t-1}{t}. \nonumber
    \end{align}  
\end{theorem}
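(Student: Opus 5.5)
The plan is to use the bijection between $D_t^b(\mathbf{x})$ and maximal compact representations given by Proposition~\ref{prop: max rep of seq in D_t^b(x) is unique}. Both bounds then become counting problems over the set $U_b(\mathbf{x})$, using the characterization~\eqref{eq: maximal cond} from Proposition~\ref{prop: every seq in D_t^b(x) has maximal rep}.

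\emph{Upper bound.} Send each $\mathbf{y}\in D_t^b(\mathbf{x})$ to its maximal compact representation $\mathrm{Del}_b(\mathbf{x},(i_1,c_1),\dots,(i_s,c_s))$. For each block, record its last deletion start $e_a\triangleq i_a+(c_a-1)b$.
\begin{itemize}
\item If $i_a+c_ab\le n$, the case $r=c_a-1$ of~\eqref{eq: maximal cond} gives $x_{e_a}\neq x_{e_a+b}$, so $e_a\in U_b(\mathbf{x})$.
\item If $i_a+c_ab>n$, then $n-b<e_a\le n-b+1$, so $e_a=n-b+1\in U_b(\mathbf{x})$.
\end{itemize}
The block $(i_a,c_a)$ is recovered from $(e_a,c_a)$. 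By~\eqref{eq: representation cond}, $e_1<\dots<e_s$. Hence $\mathbf{y}$ is determined by an $s$-subset of $U_b(\mathbf{x})$ together with a composition $(c_1,\dots,c_s)$ of $t$ into $s$ positive parts. This gives an injection, and
\[
|D_t^b(\mathbf{x})|\le\sum_{s=1}^{t}\binom{|U_b(\mathbf{x})|}{s}\binom{t-1}{s-1}=\binom{|U_b(\mathbf{x})|+t-1}{t}
\]
by Vandermonde's identity.

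\emph{Lower bound.} Write $U_b(\mathbf{x})=\{u_1<u_2<\dots<u_m\}$ with $m=|U_b(\mathbf{x})|$. Choose indices $r_1<\dots<r_t$ in $[m]$ with $r_{\ell+1}-r_\ell\ge b+1$. Since the $u_k$ are distinct integers, $u_{r_{\ell+1}}-u_{r_\ell}\ge b+1>b$. Therefore $\mathrm{Del}_b(\mathbf{x},(u_{r_1},1),\dots,(u_{r_t},1))$ is a valid compact representation satisfying~\eqref{eq: representation cond}, with all $c_\ell=1$ and no two blocks merging.

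Each block satisfies~\eqref{eq: maximal cond} because $u_{r_\ell}\in U_b(\mathbf{x})$: either $x_{u}\ne x_{u+b}$, or $u=n-b+1$ and the condition is vacuous. So by Proposition~\ref{prop: every seq in D_t^b(x) has maximal rep} this representation is maximal. By Proposition~\ref{prop: max rep of seq in D_t^b(x) is unique}, distinct index choices give distinct sequences. The substitution $r_\ell\mapsto r_\ell-(\ell-1)b$ shows the number of such choices is $\binom{m-(t-1)b}{t}$, which is the lower bound.

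The main point needing care is the endpoint case $i_a+c_ab>n$ in the upper bound, where~\eqref{eq: maximal cond} is vacuous. We must check that $e_a$ is then forced to equal $n-b+1$, and that the map $(e_a,c_a)\mapsto(i_a,c_a)$ stays injective. Everything else is a direct count.
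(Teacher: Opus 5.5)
Your proof is correct and follows the paper's strategy. Both arguments identify $D_t^b(\mathbf{x})$ with the set of maximal compact representations, get the lower bound from representations with all $c_\ell=1$ and starts in $U_b(\mathbf{x})$, and get the upper bound by injecting into size-$t$ multisets over $U_b(\mathbf{x})$. For the lower bound, the paper only asserts that there are ``at least $\binom{|U_b(\mathbf{x})|-(t-1)b}{t}$ choices''; your rank-gap substitution $r_\ell\mapsto r_\ell-(\ell-1)b$ supplies the missing count.

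In the upper bound you differ from the paper's write-up in one detail, and your version is the correct one. The paper labels each block by its first start $i_j$ and asserts $i_j\in U_b(\mathbf{x})$. However, \eqref{eq: maximal cond} with $r=0$ only gives $x_{i_j}\neq x_{i_j+c_jb}$, which places $i_j$ in $U_b(\mathbf{x})$ only when $c_j=1$.

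\emph{Counterexample.} Take $b=1$, $t=2$, $\mathbf{x}=001$. The sequence $\mathbf{y}=1$ has the unique, hence maximal, representation $\mathrm{Del}_1(\mathbf{x},(1,2))$, yet $1\notin\{2,3\}=U_1(\mathbf{x})$.

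Your labeling by the last start $e_a=i_a+(c_a-1)b$ fixes this. Applying \eqref{eq: maximal cond} with $r=c_a-1$, together with your endpoint observation that $i_a+c_ab>n$ forces $e_a=n-b+1$, puts every label in $U_b(\mathbf{x})$. So your map into multisets over $U_b(\mathbf{x})$ really is well defined and injective. The resulting count $\sum_{s}\binom{|U_b(\mathbf{x})|}{s}\binom{t-1}{s-1}=\binom{|U_b(\mathbf{x})|+t-1}{t}$ is the same as the paper's.
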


\begin{IEEEproof}
    By Propositions~\ref{prop: max rep of seq in D_t^b(x) is unique} and~\ref{prop: every seq in D_t^b(x) has maximal rep}, it suffices to show that, for any $\mathbf{x} \in \Sigma_q^n$, the number of sets of pairs $\{(i_j, c_j)\}_{j \in [s]}$ satisfying \eqref{eq: representation cond} and \eqref{eq: maximal cond} satisfies the stated bounds.

    Note that, by the definition of $U_b(\mathbf{x})$, the following collection of pair sets
    \begin{align}
        \left\{\hspace{-0.25ex}\{(i_j,1)\}_{j\in [t]}\hspace{-0.25ex}:\hspace{-0.25ex}i_j\hspace{-0.25ex}\in \hspace{-0.25ex}U_b(\mathbf{x}),i_{j+1}\hspace{-0.25ex}> \hspace{-0.25ex}i_j + b,\forall~j\in [t-1]\hspace{-0.25ex}\right\} \label{eq: pf Thm bounds1}
    \end{align}
    % \begin{align}
    %     \Bigl\{ \{(i_j,1)\}_{j\in [t]} \; :\;& i_j\in U_b(\mathbf{x}), \nonumber\\
    %     & i_{j+1}> i_j + b,\ \forall\, j\in [t-1] \Bigr\} \label{eq: pf Thm bounds1}
    % \end{align}
    satisfies the conditions \eqref{eq: representation cond} and \eqref{eq: maximal cond}. Hence, the lower bound follows directly, since there are at least $\binom{|U_b(\mathbf{x})|-(t-1)b}{t}$ different choices of $(i_1,i_2,\ldots,i_t)$ satisfying \eqref{eq: pf Thm bounds1}.

    Next, we proceed to prove the upper bound. Observe that each set of pairs $\{(i_j, c_j)\}_{j \in [s]}$ satisfying \eqref{eq: representation cond} and \eqref{eq: maximal cond} is uniquely associated with a multiset of indices of the form
    \begin{align}
        \{\underbrace{i_1,\ldots,i_1}_{c_1}, \underbrace{i_2,\ldots,i_2}_{c_2}, \ldots, \underbrace{i_s,\ldots,i_s}_{c_s}\}, \label{eq: pf Thm bounds2}
    \end{align}
    where $i_j\in U_b(\mathbf{x})$, $i_{j+1} > i_j + c_j b$, and $\sum_{j\in [s]} c_j = t$.
    Hence, the number of sets of pairs $\{(i_j, c_j)\}_{j \in [s]}$ satisfying \eqref{eq: representation cond} and \eqref{eq: maximal cond} is upper bounded by the number of multisets of the form \eqref{eq: pf Thm bounds2}, with $(i_1,\ldots,i_s)\in U_b(\mathbf{x})^{s}$ and $(c_1,\ldots,c_s)\in \mathbb{Z}_{+}^{s}$ satisfying
    $$
    i_j \neq i_{j'},\ \text{for all } j\neq j', \quad \text{and} \quad \sum_{j\in [s]} c_j = t. $$
    The number of such multisets is given by $\binom{|U_b(\mathbf{x})|+t-1}{t}$, which concludes the proof.
\end{IEEEproof}

Next, we establish the following property of $|D_{t}^{b}(\mathbf{x})|$.

%, which states that the size of the deletion ball decreases when $\mathbf{x}$ is replaced by a subsequence of $\mathbf{x}$ in $D_{1}^{b}(\mathbf{x})$. Specifically, we have the following lemma.

\begin{lemma}\label{lem: monotone property of |D_t^b(x)|}
    For any sequences $\mathbf{z} \in \Sigma_q^{n+b}$ and $\mathbf{x} \in D_1^b(\mathbf{z}) \subseteq \Sigma_q^{n}$, it holds that $|D_t^b(\mathbf{x})| \le |D_t^b(\mathbf{z})|.$
\end{lemma}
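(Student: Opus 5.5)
The plan is to construct an injection $\varphi: D_t^b(\mathbf{x}) \to D_t^b(\mathbf{z})$, using the maximal compact representations of Definition~\ref{def: maximal representation}. Injectivity will be checked through the uniqueness statement of Proposition~\ref{prop: max rep of seq in D_t^b(x) is unique} and the characterization in Proposition~\ref{prop: every seq in D_t^b(x) has maximal rep}.

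Fix $p\in[n+1]$ with $\mathbf{x}=\mathrm{Del}_b(\mathbf{z},\{p\})$, so that $\mathbf{z}=\mathbf{x}_{[1:p-1]}\,\mathbf{w}\,\mathbf{x}_{[p:n]}$ with $\mathbf{w}=\mathbf{z}_{[p:p+b-1]}$. Let $\lambda$ be the lifting map on coordinates: $\lambda(j)=j$ for $j<p$ and $\lambda(j)=j+b$ for $j\ge p$.

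For $\mathbf{y}\in D_t^b(\mathbf{x})$, take its maximal compact representation and expand it to the starting positions $j_1<\cdots<j_t$ of the $t$ bursts. The image $\varphi(\mathbf{y})$ is obtained by deleting from $\mathbf{z}$ the $t$ bursts determined as follows.
\begin{itemize}
\item A burst $[j_\ell:j_\ell+b-1]$ lying entirely to the left of $p$ is kept as is.
\item A burst lying entirely to the right of $p$ is shifted by $b$, i.e.\ its start becomes $\lambda(j_\ell)$.
\item The (at most one) burst that straddles $p$, i.e.\ $j_\ell<p\le j_\ell+b-1$, is replaced by the burst $[j_\ell:j_\ell+b-1]$ of $\mathbf{z}$, with the same start and no shift.
\end{itemize}
All other bursts are lifted. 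One checks directly that the resulting starting positions in $[n+b-b+1]$ still have pairwise gaps at least $b$. Indeed, the bursts after a straddling one start at $\ge j_\ell+b$ in $\mathbf{x}$, hence at $\ge j_\ell+2b$ in $\mathbf{z}$. So $\varphi(\mathbf{y})\in D_t^b(\mathbf{z})$. In words, $\varphi(\mathbf{y})$ is $\mathbf{y}$ with one length-$b$ block inserted at a position determined by where $p$ falls relative to the deletion pattern. In the non-straddling case the inserted block is exactly $\mathbf{w}$.

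It remains to prove that $\varphi$ is injective. I would argue that $\mathbf{y}$ can be recovered from $\varphi(\mathbf{y})$. Write $k$ for the number of kept symbols of $\mathbf{x}_{[1:p-1]}$ in $\mathbf{y}$. Then $\mathbf{y}$ is obtained from $\varphi(\mathbf{y})$ by deleting the length-$b$ block starting at position $k+1$. So it suffices to show that $\varphi(\mathbf{u})=\varphi(\mathbf{v})$ forces $k(\mathbf{u})=k(\mathbf{v})$, or else that the two different deletions of $\varphi(\mathbf{u})$ at positions $k(\mathbf{u})+1$ and $k(\mathbf{v})+1$ still give the same sequence.

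I would split into cases according to whether neither, one, or both of $\mathbf{u},\mathbf{v}$ have a straddling burst. I would then compare the two expanded patterns in $\mathbf{z}$, using the maximality condition~\eqref{eq: maximal cond}. The key fact to use is that maximality forces each burst to be pushed as far right as possible. Consequently, the number of symbols of $\mathbf{z}$ deleted before $p$ is determined by the symbols of $\varphi(\mathbf{y})$ near position $p$, in the same way that the proof of Proposition~\ref{prop: every seq in D_t^b(x) has maximal rep} recovers $(i_a,c_a)$ from the first disagreement.

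The main obstacle is exactly this injectivity step, and specifically the straddling case. There the inserted block is no longer $\mathbf{w}$ but a mixture of symbols of $\mathbf{x}$ and $\mathbf{w}$. As a result, two different $\mathbf{y}$'s (one straddling, one not) could a priori map to the same sequence.

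My first attempt would be to choose, among the candidate positions for the straddling burst (start $j_\ell$, or start $p$ with the burst moved to $[p+b:p+2b-1]$ in $\mathbf{z}$), the one compatible with the lexicographic ``precedes'' order. Maximality of the representation of $\mathbf{y}$ then transfers to maximality of the resulting representation of $\varphi(\mathbf{y})$ in $\mathbf{z}$. Injectivity would then follow directly from Proposition~\ref{prop: max rep of seq in D_t^b(x) is unique} applied in $\mathbf{z}$.

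If this transfer of maximality fails, the fallback is to prove $|D_t^b(\mathbf{x})|\le|D_t^b(\mathbf{z})|$ by induction on $t$. The induction would split $D_t^b$ according to the first block $(i_1,c_1)$ of the maximal representation and compare the two sums term by term.
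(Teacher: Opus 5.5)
There is a genuine gap. The map $\varphi$ you define is not injective, and the repair you sketch, which only re-positions the straddling burst, cannot fix it.

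Take $q=2$, $b=2$, $t=1$, $\mathbf{x}=0011$ and $\mathbf{z}=000111$. The only $p$ with $\mathbf{x}=\mathrm{Del}_2(\mathbf{z},\{p\})$ is $p=3$, so $\mathbf{w}=01$. The sequences $11$ and $01$ have maximal representations $\mathrm{Del}_2(\mathbf{x},(1,1))$ and $\mathrm{Del}_2(\mathbf{x},(2,1))$, since $x_3\neq x_1$ and $x_4\neq x_2$.
\begin{itemize}
\item The burst $[1:2]$ lies entirely left of $p$, so you keep it, and $\varphi(11)=\mathrm{Del}_2(\mathbf{z},\{1\})=0111$.
\item The burst $[2:3]$ straddles $p$, so you keep its start, and $\varphi(01)=\mathrm{Del}_2(\mathbf{z},\{2\})=0111$.
\end{itemize}
The two preimages have $k=0$ and $k=1$ but equal images. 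So the ``key fact'' you planned to use, that $\varphi(\mathbf{y})$ determines how many symbols were deleted before $p$, is false for your map.

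Neither of your candidate positions for the straddling burst helps. Start $2$ collides with $\varphi(11)$. The burst $[p+b:p+2b-1]=[5:6]$ gives $0001=\varphi(00)$. What actually has to move is the burst of $11$, which lies entirely to the left of $p$.

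The missing idea is that a block of the maximal representation ending exactly at position $p-1$ is also affected. Its maximality condition involves the symbol right after it, which is $x_p$ in $\mathbf{x}$ but $w_1$ in $\mathbf{z}$. In the example, $(1,1)$ is not maximal in $\mathbf{z}$ because $z_3=z_1$.

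The paper handles this as follows.
\begin{itemize}
\item It treats every block $(i_{\ell_0},c_{\ell_0})$ with $i_{\ell_0}\le p-1<i_{\ell_0}+c_{\ell_0}b$, which covers both straddling and left-adjacent blocks.
\item It takes the minimal $r_0$ with $i_{\ell_0}+r_0b\ge p$ and compares $x_{i_{\ell_0}+c_{\ell_0}b}$ with the inserted symbol $z_{i_{\ell_0}+r_0b}$.
\item If they differ, the \emph{entire} block is slid right by $b$, including its bursts lying wholly left of $p$. Your construction never makes this move.
\item If they agree, the block is split into $(i_{\ell_0},r_0)$ and $(i_{\ell_0}+(r_0+1)b,\,c_{\ell_0}-r_0)$, the latter being empty when $r_0=c_{\ell_0}$. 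This case is exactly your ``keep the straddling start'' rule.
\end{itemize}
In every case the image satisfies the maximality criterion \eqref{eq: maximal cond} in $\mathbf{z}$. Injectivity then reduces to uniqueness of maximal representations, plus checking that different cases yield different representations. For instance, only in the sliding case does a block cover position $p+b-1$ of $\mathbf{z}$. In the example this gives $11\mapsto 0011$, $01\mapsto 0111$ and $00\mapsto 0001$.

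Your fallback induction on $t$ is too undeveloped to assess.
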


\begin{IEEEproof}
    Let $\mathbf{x} = \mathbf{u}\mathbf{v}$ and $\mathbf{z} = \mathbf{u}\bm{\sigma}\mathbf{v}$ for some $\mathbf{u}, \mathbf{v} \in \Sigma_q^*$ and $\bm{\sigma} \in \Sigma_q^b$. We aim to construct an injective map $\phi$ from $D_t^b(\mathbf{x})$ to $D_t^b(\mathbf{z})$. The result then follows directly.

    By Propositions~\ref{prop: max rep of seq in D_t^b(x) is unique} and~\ref{prop: every seq in D_t^b(x) has maximal rep}, every sequence $\mathbf{y} \in D_t^b(\mathbf{x})$ has a unique maximal compact representation $\mathrm{Del}_b(\mathbf{x}, (i_1, c_1),$ $ (i_2, c_2), \ldots, (i_s, c_s))$, and the same property holds for every sequence in $D_t^b(\mathbf{z})$. Next, we construct a map $\phi$ from $D_t^b(\mathbf{x})$ to $D_t^b(\mathbf{z})$ that preserves the maximality of compact representations. The injectivity of $\phi$ then follows directly from the uniqueness of the maximal compact representation. The construction of $\phi$ proceeds in two separate cases.

    \textbf{Case 1}: For all $1\leq \ell\leq s$, either $i_{\ell} + c_{\ell}b\leq |\mathbf{u}|$ or $i_{\ell}>|\mathbf{u}|$.
    
    In Case 1, we define $\phi(\mathbf{y})\triangleq \mathrm{Del}_b(\mathbf{z}, (i_1', c_1), (i_2', c_2), $ $\dots, (i_s', c_s))$, where 
    \begin{align}
        i_\ell' = \begin{cases}
		i_\ell, & \text{if}~ i_\ell \leq |\bu|-c_{\ell}b, \\
		i_\ell + b, & \text{if}~i_\ell > |\bu|.
	\end{cases} \label{eq: phi case1}
    \end{align}
    Clearly, pairs $\{(i'_{\ell}, c_{\ell})\}_{\ell \in [s]}$ satisfy \eqref{eq: representation cond} and for each $\ell\in [s]$, it also holds that either $i'_{\ell} + c'_{\ell}b \leq  |\bu|$ or $i'_{\ell} > |\bu|$. Moreover, since $\mathrm{Del}_b(\mathbf{x}, (i_1, c_1), (i_2, c_2), \ldots, (i_s, c_s))$ is maximal, it holds that $x_{i_{\ell} + c_{\ell} b} \neq x_{i_{\ell} + r b}$, for all $r \in [0: c_{\ell} - 1]$ and $\ell \in [s]$.  Thus, by writing $\mathbf{z} = \mathbf{u}\bm{\sigma}\mathbf{v}$, we have
    \begin{align}
        z_i =
        \begin{cases}
        x_i, & 1 \le i \le |\mathbf{u}|, \\
        \sigma_{i - |\mathbf{u}|}, & |\mathbf{u}| + 1 \le i \le |\mathbf{u}| + b, \\
        x_{i - b}, & |\mathbf{u}| + b + 1 \le i \le n + b,
        \end{cases} \label{eq: expression z_i}
    \end{align}
    which, combined with \eqref{eq: phi case1}, implies that
    $$
    z_{i'_{\ell} + c_{\ell} b} \neq z_{i'_{\ell} + r b}, ~\forall\, r \in [0: c_{\ell} - 1],~\forall\, \ell \in [s].
    $$
    %{\color{red}Furthermore, in this case, $[i_{\ell}', i_{\ell}' + c_{\ell}'b]$ is disjoint from $[|\mathbf{u}| + 1, |\mathbf{u}| + b]$ for all $\ell \in [s]$.}

    \textbf{Case 2}: There exists some $\ell_0 \in [s]$ such that $i_{\ell_0} \le |\mathbf{u}| < i_{\ell_0} + c_{\ell_0} b.$
    
    Let $r_0\in [c_{\ell_0}]$ be the minimum integer such that $|\mathbf{u}|<i_{\ell_0}+r_0b\leq |\mathbf{u}|+b$. Let $\lambda\in [b]$ be such that $|\mathbf{u}| + \lambda = i_{\ell_0}+r_0b$. Next, we consider the following two subcases.

	\textbf{Case 2.1}: $x_{i_{\ell_0} + c_{\ell_0}b} \ne \sigma_{\lambda}$.
    
    In Case 2.1, we define $\phi(\mathbf{y})\triangleq \mathrm{Del}_b(\mathbf{z}, (i_1', c_1), (i_2', c_2), $ $\dots, (i_s', c_s))$, where 
    \begin{align}
        i_\ell' = \begin{cases}
		i_\ell, & \text{if } i_\ell \le |\mathbf{u}|~\text{and}~\ell < \ell_0, \\
		i_\ell + b, & \text{if }~\ell =\ell_0~\text{or}~i_\ell > |\mathbf{u}|.
	\end{cases} \label{eq: phi case2.1}
    \end{align}
    Clearly, pairs $\{(i'_{\ell}, c_{\ell})\}_{\ell \in [s]}$ satisfy \eqref{eq: representation cond}. 
    %Note that, by \eqref{eq: phi case2.1}, we have $i'_{\ell} = i_{\ell} < |\mathbf{u}| - c_{\ell} b$ for all $\ell < \ell_0$, and $i'_{\ell} = i_{\ell} + b > |\mathbf{u}|$ for all $\ell > \ell_0$.
    Then, by \eqref{eq: expression z_i} and the maximality of $\mathrm{Del}_b(\mathbf{x}, (i_1, c_1), (i_2, c_2), \ldots, (i_s, c_s))$, we have $z_{i'_{\ell} + c_{\ell} b} \neq z_{i'_{\ell} + r b}$, for all $r \in [0 : c_{\ell} - 1], \ \ell \in [s] \setminus \{\ell_0\}$. Moreover, by the definitions of $r_0$ and $\lambda$, \eqref{eq: expression z_i} implies that
    \begin{align}
    z_{i'_{\ell_0} + r b}= z_{i_{\ell_0} + r b + b}
    &=\begin{cases}
        x_{i_{\ell_0} + (r+1)b}, & \text{if } 0 \le r < r_0 - 1, \\
        \sigma_{\lambda}, & \text{if } r = r_0 - 1, \\
        x_{i_{\ell_0} + r b}, & \text{if } r_0 \le r \le c_{\ell_0}.
       \end{cases}
    \nonumber
    \end{align}
    Thus, since $x_{i_{\ell_0} + c_{\ell_0} b} \neq \sigma_{\lambda}$, we obtain $z_{i'_{\ell_0} + c_{\ell_0} b} \neq z_{i'_{\ell_0} + r b}$, for all $r \in [0 : c_{\ell_0} - 1]$.
    %{\color{red}Furthermore, the pair $(i_{\ell_0}', c_{\ell_0})$ satisfies $i_{\ell_0}' \leq |\mathbf{u}| + b < i_{\ell_0}' + c_{\ell_0} b$, and the interval $[i_\ell : i_\ell + c_\ell b]$ is disjoint from $[|\mathbf{u}| + 1: |\mathbf{u}| + b]$ for all $\ell \in [s] \setminus \{\ell_0\}$. Hence, this case is different from the previous ones.}

    \textbf{Case 2.2}: $x_{i_{\ell_0} + c_{\ell_0}b} = \sigma_{\lambda}$.
    
    In Case 2.2, we define 
    \begin{align}
        \phi(\mathbf{y})\triangleq & \mathrm{Del}_b\left(\mathbf{z}, (i_1', c_1), \ldots, (i'_{\ell_0-1},c_{\ell_0-1}), (i'_{\ell_0},c'_{\ell_0}), \right. \nonumber \\
        & ~~~~~~~~~~\left.(i''_{\ell_0},c''_{\ell_0}),(i'_{\ell_0+1},c_{\ell_0+1}),\dots, (i_s', c_s)\right) \nonumber
    \end{align}
	where $(i_{\ell_0}', c_{\ell_0}')=(i_{\ell_0}, r_0)$ and $(i_{\ell_0}'', c_{\ell_0}'')=(i_{\ell_0}+(r_0+1)b, c_{\ell_0} - r_{0})$ and\vspace{-0.5ex}
	$$
    i_\ell' = \begin{cases}
		i_\ell, & \text{if } i_\ell \le |\mathbf{u}|~\text{and}~\ell < \ell_0, \\
		i_\ell + b, & \text{if } i_\ell > |\mathbf{u}|.
	\end{cases}\vspace{-0.5ex}
    $$
	Clearly, pairs $\{(i'_{\ell}, c_{\ell})\}_{\ell \in [s] \setminus \{\ell_0\}}\cup \{(i'_{\ell_0}, c'_{\ell_0}), (i''_{\ell_0}, c''_{\ell_0})\}$ satisfies~\eqref{eq: representation cond}. 
    
    By \eqref{eq: expression z_i} and the maximality of $\mathrm{Del}_b(\mathbf{x}, (i_1, c_1), (i_2, c_2), $ $ \ldots, (i_s, c_s))$, we have $z_{i'_{\ell} + c_{\ell} b} \neq z_{i'_{\ell} + r b}$, for all $r \in [0 : c_{\ell} - 1],\ \ell \in [s] \setminus \{\ell_0\}$. Moreover, 
    by the definitions of $r_0$ and $\lambda$, \eqref{eq: expression z_i} implies that\vspace{-0.5ex}
    \begin{align}
    z_{i'_{\ell_0} + r b}
    &=\begin{cases}
        x_{i_{\ell_0} + rb}, & \text{if } 0 \le r < r_0, \\
        \sigma_{\lambda}, & \text{if } r = r_0.
       \end{cases} \nonumber
    \end{align}
    and $z_{i''_{\ell_0} + r b}=x_{i_{\ell_0} + (r+r_0)b}$ for all $r\in [0: c_{\ell_0}-r_0]$. Since $x_{i_{\ell_0} + c_{\ell_0}b} = \sigma_{\lambda}$, we have $z_{i'_{\ell_0} + c'_{\ell_0} b}=x_{i_{\ell_0} + c_{\ell_0}b}\neq z_{i'_{\ell_0} + r b}$ for all $r\in [0: c'_{\ell_0}-1]$ and $z_{i''_{\ell_0} + c''_{\ell_0} b}=x_{i_{\ell_0} + c_{\ell_0}b}\neq z_{i''_{\ell_0} + r b}$ for all $r\in [0: c''_{\ell_0}-1]$ by $x_{i_{\ell_0}+c_{\ell_0}b}\neq x_{i_{\ell_0}+rb}$ for all $r\in [0: c_{\ell_0}-1]$. 

    Note that for the pairs $\{(i'_\ell, c'_\ell)\}_{\ell \in [s]}$ in the representation $\phi(\mathbf{y})$ in Case 1, it holds that either $i'_\ell + c'_\ell b \leq |\mathbf{u}|$ or $i'_\ell > |\mathbf{u}|$ for all $\ell \in [s]$. In contrast, in Case 2, there always exists a pair $(i'_\ell, c'_\ell)$ in $\phi(\mathbf{y})$ such that $i'_\ell \leq |\mathbf{u}| \leq i'_\ell + c'_\ell b$. This shows that the image sets under $\phi$ of sequences with maximal representations in Case 1 and Case 2 are disjoint. Moreover, the number of pairs in $\phi(\mathbf{y})$ satisfies $s' = s$ in Case 2.1, while $s' = s + 1$ in Case 2.2. Therefore, the image sets under $\phi$ of sequences corresponding to Cases 2.1 and 2.2 are also disjoint.
    %{\color{red} Furthermore, in this case, we have $i'_{\ell_0} \leq |\mathbf{u}| + 1 \leq i'_{\ell_0} + c'_{\ell_0}b < |\mathbf{u}| + b$, $i_{\ell_0}'' > |\mathbf{u}| + b$, and $[i_{\ell}',i_{\ell}' + c_{\ell}'b]$ is disjoint from $[|\mathbf{u}| + 1, |\mathbf{u}| + b]$ for all $\ell \in [s]\backslash\{\ell_0\}$, which is different from the previous cases.}

    To sum up, for both cases, we show that the defined map $\phi$ preserves the maximality of the compact representation of the corresponding sequence. This completes the proof.
    %the above construction and discussions yield a map $\phi$ from $D_{t}^{b}(\mathbf{x})$ to $D_{t}^{b}(\mathbf{z})$ which preserves the maximality of the compact representation of the corresponding sequence. Thus, the result follows by Proposition \ref{prop: every seq in D_t^b(x) has maximal rep} and \ref{prop: max rep of seq in D_t^b(x) is unique}.
\end{IEEEproof}

\begin{example}
    Given $\mathbf{x} = (0, 2, 0, 1, 1, 1, 2, 0)\in \Sigma_3^8$ and $\mathbf{z} = (0, 2, 0, 1, 1, 1, 2, 1, 2, 0)\in \Sigma_3^{10}$, we have $\mathbf{x}\in D_1^2(\mathbf{z})$. Let $\mathbf{u} = (0, 2, 0 ,1 ,1 ,1)$, $\mathbf{\sigma} = (2, 1)$, and $\mathbf{v} = (2, 0)$, we have $\mathbf{x} = \mathbf{u}\mathbf{v}$ and $\mathbf{z} = \mathbf{u}\bm{\sigma}\mathbf{v}$.
    
    Let $t = 2$, $\mathbf{y}_1 = \mathrm{Del}_2(\mathbf{x}, (2, 1), (7, 1)) = (0, 1, 1, 1)$, $\mathbf{y}_2 = \mathrm{Del}_2(\mathbf{x}, (4, 2)) = (0, 2, 0, 0)$, and $\mathbf{y}_3 = \mathrm{Del}_2(\mathbf{x}, (3, 2)) = (0, 2, 2, 0)$. Denote $\phi$ as the map defined in the proof of Lemma \ref{lem: monotone property of |D_t^b(x)|}, the following holds:
    \begin{itemize}
        \item As the deleted positions are chosen to satisfy either $i_{\ell}+c_{\ell}b\leq 6$ or $i_{\ell}>6$, we have $\phi(\mathbf{y}_1) = \mathrm{Del}_2(\mathbf{z}, (2, 1), (9, 1)) = (0, 1, 1, 1, 2, 1)$ (Case 1);
        \item As $x_8 \ne \sigma_2$, we have $\phi(\mathbf{y}_2) = \mathrm{Del}_2(\mathbf{z}, (6, 2)) = (0, 2, 0, 1, 1, 0)$ (Case 2.1);
        \item As $x_7 = \sigma_1$, we have $\phi(\mathbf{y}_3) = \mathrm{Del}_2(\mathbf{z}, (3, 2)) = (0, 2, 2, 1, 2, 0)$ (Case 2.2).
    \end{itemize}
\end{example}

\section{Bounds on $M_{q}(n,(t,b))$ for Arbitrary $t$ and $b$}\label{sec: general bounds}

In this section, we derive three upper bounds on the maximal size of a $(t,b)$-burst-deletion-correcting code, $M_q(n,(t,b))$. The first bound is obtained via a linear program, in the spirit of the bound on $M_q(n,(t,1))$ by Kulkarni and Kiyavash~\cite{kulkarni2013nonasymptotic}. The second is a sphere-packing bound, analogous to that in~\cite{levenshtein2002bounds}. The third is a combinatorial bound, which improves upon the first two when either $q$ and $n-tb$ are fixed with $n$ large, or $n$ is fixed with $q$ large.

\subsection{A Linear Programming Upper Bound}

In this subsection, we present the proof of our first upper bound on $M_q(n,(t,b))$. 
\begin{theorem}\label{thm1: upper bound on M_q(n,(t,b))}
    Let $q \ge 2$ and $t$ be fixed positive integers. For any positive integers $n$ and $b$ satisfying $n \ge 2tb + (t-1)(b+1)$ and $b \mid n$, the maximum size of a $(t,b)$-burst-deletion-correcting code of length $n$ satisfies
    $$
    M_q(n,(t,b)) \le
    \frac{t!\, q^{n - tb + t}}{(q - 1)^t \left(n - 2tb - \frac{(t-1)b}{q}\right)^t}\left(1 + o_n(1)\right).
    $$
\end{theorem}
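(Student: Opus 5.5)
The bound will come from the Kulkarni--Kiyavash linear-programming framework applied to the hypergraph whose vertices are the length-$(n-tb)$ sequences and whose hyperedges are the balls $D_t^b(\mathbf{x})$, $\mathbf{x}\in\Sigma_q^n$. A $(t,b)$-burst-deletion-correcting code is a matching in this hypergraph. Hence $M_q(n,(t,b))$ is at most the optimum $\tau^*$ of the fractional transversal LP: minimize $\sum_{\mathbf{y}} w(\mathbf{y})$ subject to $\sum_{\mathbf{y}\in D_t^b(\mathbf{x})} w(\mathbf{y})\ge 1$ for every $\mathbf{x}$, with $w\ge 0$. The plan is to exhibit one explicit feasible weighting. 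Following \cite{kulkarni2013nonasymptotic}, I take
\[
w(\mathbf{y})=\binom{|U_b(\mathbf{y})|-(t-1)b}{t}^{-1}
\]
(suitably truncated or adjusted when this is undefined, e.g.\ setting $w=1$ when $|U_b(\mathbf{y})|$ is small).

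\textbf{Feasibility.} Fix $\mathbf{x}$ and choose $\mathbf{y}\in D_t^b(\mathbf{x})$ minimizing $|U_b(\mathbf{y})|$; this is where I need Lemma~\ref{lem: monotone property of |D_t^b(x)|} in a form for $U_b$. Iterating the lemma along a chain $\mathbf{x}\to\cdots\to\mathbf{y}$ of single $b$-burst deletions, together with Theorem~\ref{thm: size of burst-deletion ball}, gives
\[
|D_t^b(\mathbf{x})|\ \ge\ |D_t^b(\mathbf{y}')|\ \ge\ \binom{|U_b(\mathbf{y}')|-(t-1)b}{t}
\]
for every intermediate $\mathbf{y}'$. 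It is cleaner, however, to argue directly that a single $b$-burst deletion cannot increase $|U_b|$, i.e.\ $|U_b(\mathrm{Del}_b(\mathbf{x},i))|\le|U_b(\mathbf{x})|$. In the array $A_b$, a $b$-burst deletion removes one entry from each row, and deleting an entry never increases the number of runs of a row. Then every $\mathbf{y}\in D_t^b(\mathbf{x})$ has $|U_b(\mathbf{y})|\le|U_b(\mathbf{x})|$, but this inequality points the wrong way for feasibility. What I actually need is
\[
\sum_{\mathbf{y}\in D_t^b(\mathbf{x})} w(\mathbf{y})\ \ge\ |D_t^b(\mathbf{x})|\cdot \min_{\mathbf{y}} w(\mathbf{y})\ \ge\ \frac{|D_t^b(\mathbf{x})|}{\binom{|U_b(\mathbf{x})|-(t-1)b}{t}}\ \ge\ 1,
\]
using $w(\mathbf{y})\ge \binom{|U_b(\mathbf{x})|-(t-1)b}{t}^{-1}$ (monotonicity of $U_b$) and the lower bound of Theorem~\ref{thm: size of burst-deletion ball}. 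So feasibility reduces to the run-monotonicity of $U_b$ under burst deletion, which I will verify via the array representation. The assumption $b\mid n$ makes the rows of $A_b$ have equal length, which avoids boundary effects.

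\textbf{Evaluating the objective.} The LP value is then at most $\sum_{\mathbf{y}\in\Sigma_q^{n-tb}}\binom{|U_b(\mathbf{y})|-(t-1)b}{t}^{-1}$. Write $m=(n-tb)/b$ for the row length. The quantity $|U_b(\mathbf{y})|$ equals the sum over the $b$ rows of (runs $-1$), plus $1$. Its distribution over uniform $\mathbf{y}$ is a sum of $b(m-1)$ independent Bernoulli$((q-1)/q)$ variables, so $|U_b(\mathbf{y})|-1\sim\mathrm{Bin}(n-tb-b,(q-1)/q)$. I split the sum into a concentrated part, where $|U_b|\ge \mu-\sqrt{n}\log n$ with $\mu\approx (n-tb)(q-1)/q$, and a Chernoff-small tail bounded by $q^{n-tb}e^{-\Omega(\log^2 n)}$. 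The main part contributes
\[
\frac{t!\,q^{n-tb}}{\bigl(\tfrac{q-1}{q}(n-tb)-(t-1)b-O(\sqrt n\log n)\bigr)^t}(1+o(1)).
\]

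\textbf{Main obstacle.} The hardest step is matching the precise constant $n-2tb-\frac{(t-1)b}{q}$ and the extra factor $q^{t}$. Pulling $\frac{q-1}{q}$ out of the denominator gives $q^t/(q-1)^t$ times $\bigl(n-tb-\frac{q(t-1)b}{q-1}\bigr)^{-t}$. That second-order term differs from the stated one, but since $n\to\infty$ with $b$ fixed, such differences are absorbed into $o_n(1)$ only if $b=o(n)$. I expect the authors to handle the shift more carefully, for instance through a refined weight or tail split, using the hypothesis $n\ge 2tb+(t-1)(b+1)$ to keep the binomials positive. I would first attempt the computation exactly with the binomial expectation $\mathbb{E}\bigl[\binom{X-c}{t}^{-1}\bigr]$, and only then reconcile the lower-order terms.
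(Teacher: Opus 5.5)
Your proof is correct and differs from the paper's at the feasibility step. The paper uses the weight $w(\mathbf{y})=|D_t^b(\mathbf{y})|^{-1}$ and certifies feasibility with Lemma~\ref{lem: monotone property of |D_t^b(x)|}, which says ball sizes do not increase under a $b$-burst deletion. That lemma is proved by an injection built from maximal compact representations, with a three-case analysis. This yields the general bound $M_q(n,(t,b))\le\sum_{\mathbf{y}}|D_t^b(\mathbf{y})|^{-1}$ of Theorem~\ref{thm2: upper bound on M_q(n,(t,b))}, and only then is Theorem~\ref{thm: size of burst-deletion ball} substituted.

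You instead put the lower bound of Theorem~\ref{thm: size of burst-deletion ball} into the weight from the start, so you only need that $|U_b|$ does not increase under a $b$-burst deletion. That fact is elementary. The deletion removes exactly one entry from each row of $A_b$ and keeps the order of the rest, and removing an entry from a row never increases its number of adjacent unequal pairs. Feasibility then follows by applying Theorem~\ref{thm: size of burst-deletion ball} to $\mathbf{x}$ itself.

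Two small fixes. Your remark that this inequality ``points the wrong way'' is a slip: smaller $|U_b(\mathbf{y})|$ means larger $w(\mathbf{y})$, which is exactly what you use. Also, the truncated weight must stay non-increasing in $|U_b|$; setting $w=1$ exactly where the binomial vanishes achieves this.

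With that truncation, your objective equals the bound of Theorem~\ref{thm: non asymptotic upper bound 1} at $r=(t-1)(b+1)$, so from there the two proofs do the same computation. Your route bypasses the paper's most technical structural lemma. The paper's route gives the weakly stronger intermediate bound $\sum_{\mathbf{y}}|D_t^b(\mathbf{y})|^{-1}$, into which any lower bound on ball sizes can be plugged without a monotonicity check. That is what Section~\ref{sec: improved ub t=b=2} reuses for $t=b=2$.

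On your ``main obstacle'': the paper does nothing more refined. It runs the same concentration estimate, with $N=n-(t+1)b$ and threshold $r'=\frac{q-1}{q}N-\sqrt{2tN\ln N}+1$. It bounds the tail using the $q$-ary entropy bound on $\mathrm{Vol}_q(r'-1,N)$ rather than Chernoff.

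You do need to center correctly. Since $|U_b(\mathbf{y})|-1\sim\mathrm{Bin}\bigl(n-(t+1)b,\frac{q-1}{q}\bigr)$, the main term is governed by $\frac{q-1}{q}(n-(t+1)b)-(t-1)b=\frac{q-1}{q}\bigl(n-2tb-\frac{(t-1)b}{q-1}\bigr)$, not by $\frac{q-1}{q}(n-tb)-(t-1)b$. With your centering, the claimed Chernoff-small tail actually contains most of the mass once $b\gg\sqrt{n}\log n$.

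The corrected expression is also exactly what the paper's own computation yields. The printed $\frac{(t-1)b}{q}$ agrees with $\frac{(t-1)b}{q-1}$ up to a factor $1+o(1)$ only when $b=o(n)$ (for $t\ge2$). So your caveat about the $b$-dependent lower-order terms applies equally to the paper's proof. In that regime, your argument with the corrected centering establishes the stated bound.
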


As in \cite{kulkarni2013nonasymptotic}, we model the problem of finding the largest $(t,b)$-burst-deletion-correcting code as a matching problem on a hypergraph. We then prove Theorem~\ref{thm1: upper bound on M_q(n,(t,b))} by constructing a feasible solution to the dual linear program corresponding to this hypergraph matching formulation.

We start with recalling some necessary notions in the theory of hypergraphs. A \emph{hypergraph} $\mathcal{H}$ is defined as a pair $(V(\mathcal{H}), E(\mathcal{H}))$, where the vertex set $V(\mathcal{H})$ is a finite set and the hyperedge set $E(\mathcal{H})$ is a collection of subsets of $V(\mathcal{H})$. A \emph{matching} in $\mathcal{H}$ is a set of hyperedges that do not share any common vertex. The matching number of $\mathcal{H}$, denoted by $\nu(\mathcal{H})$, is the size of a largest matching in $\mathcal{H}$. 

Consider the hypergraph $\mathcal{H}_{q,n,t}^{b}$ with vertex set $\Sigma_q^{n-tb}$ and hyperedge set $\{ D_t^b(\mathbf{x}) : \mathbf{x} \in \Sigma_q^{n} \}$. That is, each vertex of $\mathcal{H}_{q,n,t}^{b}$ is a sequence of length $n-tb$ over $\Sigma_q$, and a collection of vertices form a hyperedge if and only if they coincide with $D_t^b(\mathbf{x})$ for some $\mathbf{x} \in \Sigma_q^{\,n}$. Then, a $(t,b)$-burst-deletion correcting code in $\Sigma^n$ corresponds to a matching in $\mathcal{H}_{q,n,t}^{b}$ and hence we have $M_q(n,(t,b))=\nu(\mathcal{H}_{q,n,t}^{b})$, where $\nu(\mathcal{H}_{q,n,t}^{b})$ denote the matching number of $\mathcal{H}_{q,n,t}^{b}$. In other words, we have 
\begin{align}
    M_q(n,(t,b))=\textrm{\rm maximize}\quad & \sum_{\mathbf{x}\in \Sigma_q^{n}}z(\mathbf{x}) \nonumber\\
    \textrm{subject to} \quad \sum_{\mathbf{x}\in I_t^{b}(\mathbf{y})} & z(\mathbf{x}) \leq 1,~\forall~\mathbf{y}\in \Sigma_q^{n-tb} \label{eq: cond1 of LP}; \\
     & z(\mathbf{x})\in \mathbb{Z}^+,~\forall~ \mathbf{x}\in \Sigma_q^{n}.\nonumber
\end{align}
Since the feasible regions of the integer programs are strictly contained in the feasible regions of their linear programming relaxations, by the Duality Theorem of linear programming (see \cite[Corollary 7.1g]{Schrijver98}), $M_q(n,(t,b))$ is upper bounded by
\begin{align}
    \textrm{minimize}\quad & \sum_{\mathbf{y}\in \Sigma_q^{n-tb}}w(\mathbf{y}) \nonumber\\
        \textrm{subject to} \quad \sum_{\mathbf{y}\in D_t^{b}(\mathbf{x})} & w(\mathbf{y}) \geq 1,~\forall~\mathbf{x}\in \Sigma_q^{n} \label{eq: cond2 of LP}; \\
        & w(\mathbf{y})\geq 0,~\forall~ \mathbf{y}\in \Sigma_q^{n-tb}.\nonumber
\end{align}

%using the lower bound on $D_{t}^{b}(\mathbf{x})$ in Theorem \ref{thm: size of burst-deletion ball} and Lemma \ref{lem: monotone property of |D_t^b(x)|},

\begin{theorem}\label{thm2: upper bound on M_q(n,(t,b))}
    Let $q,n,t,b$ be positive integers satisfying $q\geq 2$ and $n\geq tb+1$. Then, it holds that
    $$
    M_q(n,(t,b))\leq \sum_{\mathbf{y}\in \Sigma_{q}^{n-tb}}|D_{t}^{b}(\mathbf{y})|^{-1}.
    $$
\end{theorem}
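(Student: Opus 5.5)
We exhibit a feasible solution of the dual linear program \eqref{eq: cond2 of LP} whose objective value is the stated sum; weak duality then gives the bound. The natural choice, following Kulkarni and Kiyavash~\cite{kulkarni2013nonasymptotic}, is
$$
w(\mathbf{y}) \triangleq |D_t^b(\mathbf{y})|^{-1}, \qquad \mathbf{y}\in\Sigma_q^{n-tb}.
$$
This is nonnegative, and its objective value is exactly $\sum_{\mathbf{y}\in\Sigma_q^{n-tb}}|D_t^b(\mathbf{y})|^{-1}$. Here $D_t^b(\mathbf{y})$ for $|\mathbf{y}|=n-tb$ means the set of subsequences of length $n-2tb$. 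If $n-tb<tb+1$ the ball may degenerate, but it always contains at least one element (possibly the empty word), so $w$ is well defined. The only thing to check is the covering constraint: for every $\mathbf{x}\in\Sigma_q^n$,
$$
\sum_{\mathbf{y}\in D_t^b(\mathbf{x})} |D_t^b(\mathbf{y})|^{-1}\ \ge\ 1.
$$

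The key step is to bound each $|D_t^b(\mathbf{y})|$ for $\mathbf{y}\in D_t^b(\mathbf{x})$ by $|D_t^b(\mathbf{x})|$. Since $\mathbf{y}$ arises from $\mathbf{x}$ by $t$ disjoint $b$-burst deletions, we can perform these deletions one at a time. This gives a chain
$$
\mathbf{x}=\mathbf{z}_0,\ \mathbf{z}_1,\ \ldots,\ \mathbf{z}_t=\mathbf{y}, \qquad \mathbf{z}_{k}\in D_1^b(\mathbf{z}_{k-1}).
$$
The justification is that deleting the rightmost remaining burst first keeps the indices of the other bursts unchanged. More simply, any set of $t$ disjoint length-$b$ intervals can be removed sequentially, and each removal is a single $b$-burst deletion of the current sequence.

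Applying Lemma~\ref{lem: monotone property of |D_t^b(x)|} at each step of the chain gives
$$
|D_t^b(\mathbf{z}_k)|\le|D_t^b(\mathbf{z}_{k-1})|.
$$
The lemma is stated for fixed $t$ and arbitrary lengths. The point that needs care is whether its proof remains valid when the current length drops below $tb+1$, where $D_t^b$ may be empty or degenerate. I would check that the injection $\phi$ constructed there does not require that hypothesis; alternatively, I would note that in the degenerate regime the ball is the single empty word while $|D_t^b(\mathbf{x})|\ge1$. Chaining the inequalities yields $|D_t^b(\mathbf{y})|\le|D_t^b(\mathbf{x})|$.

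Summing over the ball then gives
$$
\sum_{\mathbf{y}\in D_t^b(\mathbf{x})}|D_t^b(\mathbf{y})|^{-1}\ \ge\ |D_t^b(\mathbf{x})|\cdot|D_t^b(\mathbf{x})|^{-1}=1.
$$
So $w$ is feasible, and by the LP duality discussion preceding the theorem, $M_q(n,(t,b))\le\sum_{\mathbf{y}}w(\mathbf{y})$. The main obstacle is the iterated use of the monotonicity lemma, specifically ensuring the decomposition into single burst deletions and the edge cases of small length; the rest is immediate.
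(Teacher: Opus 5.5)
Your argument is the paper's own: the dual-feasible weight $w(\mathbf{y})=|D_t^b(\mathbf{y})|^{-1}$, with the covering constraint checked through the monotonicity lemma, and your explicit chain of single $b$-burst deletions (removing the rightmost burst first) supplies the iteration that the paper leaves implicit when it applies that lemma, which is stated only for $D_1^b$, directly to $\mathbf{y}\in D_t^b(\mathbf{x})$. One side remark is wrong: when $tb+1\le n<2tb$, the ball $D_t^b(\mathbf{y})$ of a length-$(n-tb)$ word is empty, not the empty word, because $t$ disjoint length-$b$ windows do not fit; so $w$ is undefined there and the bound holds only vacuously under the convention $1/0=\infty$ --- a defect inherited from the statement's hypothesis, not from your main argument.
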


\begin{IEEEproof}
    It suffices to show that $w(\mathbf{y})=|D_{t}^{b}(\mathbf{y})|^{-1}$, for any $\mathbf{y}\in \Sigma_q^{n-tb}$, is a feasible solution for the dual LP problem \eqref{eq: cond2 of LP}. 
    
    Clearly, $w(\mathbf{y})\geq 0$. Moreover, for any $\mathbf{x}\in \Sigma_q^{n}$, 
    \begin{align}
        \sum_{\mathbf{y}\in D_{t}^{b}(\mathbf{x})}w(\mathbf{y}) & = \sum_{\mathbf{y}\in D_{t}^{b}(\mathbf{x})}|D_{t}^{b}(\mathbf{y})|^{-1} \nonumber \\
        & \geq \sum_{\mathbf{y}\in D_{t}^{b}(\mathbf{x})}|D_{t}^{b}(\mathbf{x})|^{-1}=1, \nonumber
    \end{align}
    where the second inequality follows since $|D_{t}^{b}(\mathbf{y})|\leq |D_{t}^{b}(\mathbf{x})|$ by Lemma \ref{lem: monotone property of |D_t^b(x)|}.
\end{IEEEproof}

By substituting the lower bound on $|D_t^b(x)|$ in Theorem \ref{thm: size of burst-deletion ball}, we have the following non-asymptotic upper bound on $M_q(n,(t,b))$.

\begin{theorem}\label{thm: non asymptotic upper bound 1}
    Let $q \ge 2$ and $t$ be fixed positive integers. For any positive integers $n$ and $b$ satisfying $n \ge 2tb + (t-1)(b+1)$ and $b \mid n$, the maximum size of a $(t,b)$-burst-deletion-correcting code of length $n$ satisfies
    $$
    M_q(n,(t,b)) \le q^{b} \sum_{i=r+1}^{n-(t+1)b+1} (q-1)^{i-1} \frac{\binom{n-tb-b}{i-1}}{\binom{i-(t-1)b}{t}} + q^b\sum_{i=1}^{r} (q-1)^{i-1} \binom{n-tb-b}{i-1}
    $$
    for any integer $n-(t+1)b\geq r \ge (t-1)(b+1)$.
\end{theorem}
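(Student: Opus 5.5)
We start from Theorem~\ref{thm2: upper bound on M_q(n,(t,b))}, which gives $M_q(n,(t,b))\le \sum_{\mathbf{y}\in\Sigma_q^{n-tb}}|D_t^b(\mathbf{y})|^{-1}$, and bound each term using the lower bound $|D_t^b(\mathbf{y})|\ge \binom{|U_b(\mathbf{y})|-(t-1)b}{t}$ of Theorem~\ref{thm: size of burst-deletion ball}. It is therefore natural to group the vectors $\mathbf{y}$ of length $m\triangleq n-tb$ according to $i\triangleq|U_b(\mathbf{y})|$. We need the number of $\mathbf{y}\in\Sigma_q^{m}$ with $|U_b(\mathbf{y})|=i$. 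Since $n-b+1\in U_b$ always holds, the remaining $i-1$ elements form the set $\{j\in[m-b]:y_j\ne y_{j+b}\}$. We will count these vectors by choosing the first $b$ symbols freely ($q^b$ ways), choosing which $i-1$ of the $m-b$ positions $j\in[m-b]$ satisfy $y_{j+b}\ne y_j$ ($\binom{m-b}{i-1}$ ways), and choosing each such $y_{j+b}$ among $q-1$ values, with every other $y_{j+b}$ forced to equal $y_j$. This is a bijection, so the count is exactly $q^b(q-1)^{i-1}\binom{n-tb-b}{i-1}$. Here $i$ ranges over $[1:n-(t+1)b+1]$, which matches the summation limits in the statement.

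Next we split the range of $i$ at the threshold $r$. For $i\ge r+1$ we have $i-(t-1)b\ge r+1-(t-1)b\ge (t-1)+1=t$, because $r\ge (t-1)(b+1)$. Hence $\binom{i-(t-1)b}{t}\ge 1$ is a legitimate positive lower bound on $|D_t^b(\mathbf{y})|$, and we use $|D_t^b(\mathbf{y})|^{-1}\le \binom{i-(t-1)b}{t}^{-1}$; this produces the first sum. For $i\le r$ the binomial lower bound may vanish, so we instead use the trivial bound $|D_t^b(\mathbf{y})|\ge 1$, i.e.\ $|D_t^b(\mathbf{y})|^{-1}\le 1$. This bound holds because $m\ge tb+1$ is implied by the length assumption, so the ball is nonempty; it produces the second sum. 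Adding the two contributions gives the claimed inequality.

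The main point requiring care is the exact count of vectors with a prescribed $|U_b(\mathbf{y})|$. The count is clean only because the condition $y_j\neq y_{j+b}$ decouples along the rows of the array representation $A_b(\mathbf{y})$: each row is a $q$-ary sequence, and its "changes" can be placed independently. The remaining checks are routine: that the constraint $r\le n-(t+1)b$ keeps both sums within the valid range of $i$, and that the divisibility hypothesis $b\mid n$ is not actually needed for this non-asymptotic step (it presumably matters only for the later asymptotic evaluation).
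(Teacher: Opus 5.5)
Your proposal is correct and is essentially the paper's proof. Both apply Theorem~\ref{thm2: upper bound on M_q(n,(t,b))}, group $\mathbf{y}$ by $i=|U_b(\mathbf{y})|$, use the lower bound of Theorem~\ref{thm: size of burst-deletion ball} for $i\ge r+1$ and the trivial bound $|D_t^b(\mathbf{y})|\ge 1$ for $i\le r$, and count $q^b(q-1)^{i-1}\binom{n-tb-b}{i-1}$ sequences per class. The only difference is that the paper imports this count from Claim~4 of \cite{WTSGF24}, whereas you verify it directly via the row-wise bijection.

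Two small fixes: the element always present in $U_b(\mathbf{y})$ is $m-b+1$, not $n-b+1$. Also, for $t=1$ the length hypothesis gives only $m\ge b$, not $m\ge tb+1$. This does no harm, because the trivial bound is used only when $r\ge 1$, and $r\le m-b$ then forces $m\ge b+1$.
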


\begin{IEEEproof}
    By Theorem~\ref{thm: size of burst-deletion ball}, it suffices to upper bound the following summation:
    $$
    \sum_{\mathbf{y} \in \Sigma_q^{\,n-tb}} \lvert D_t^b(\mathbf{y}) \rvert^{-1}.
    $$

    Note that
    \begin{align*}
    \sum_{\mathbf{y}\in \Sigma_q^{\,n-tb}} \lvert D_t^b(\mathbf{y}) \rvert^{-1}
    = \sum_{i=1}^{n-(t+1)b+1}
    \sum_{\substack{\mathbf{y}\in \Sigma_q^{n-tb}, \\ |U_b(\mathbf{y})| = i}}
    \lvert D_t^b(\mathbf{y}) \rvert^{-1}.
    \end{align*}
    Thus, by the lower bound on $\lvert D_t^b(\mathbf{y}) \rvert$ in Theorem~\ref{thm: size of burst-deletion ball}, we have
    \begin{align*}
    \sum_{\mathbf{y}\in \Sigma_q^{n-tb}} \lvert D_t^b(\mathbf{y}) \rvert^{-1} 
    \le & \sum_{i = r+1}^{n-(t+1)b+1}
    \sum_{\substack{\mathbf{y}\in \Sigma_q^{n-tb}, \atop |U_b(\mathbf{y})| = i}}
    \binom{i-(t-1)b}{t}^{-1}  + \sum_{i =1}^{r}
    \sum_{\substack{\mathbf{y}\in \Sigma_q^{n-tb}, \atop |U_b(\mathbf{y})| = i}}
    1 \\
    = & \sum_{i=r+1}^{n-(t+1)b+1}
    \frac{\bigl|\{\mathbf{y}\in \Sigma_q^{n-tb} : |U_b(\mathbf{y})| = i\}\bigr|}
    {\binom{i-(t-1)b}{t}} + \sum_{i =1}^{r} \bigl|\{\mathbf{y}\in \Sigma_q^{n-tb} : |U_b(\mathbf{y})| = i\}\bigr| \\
    = & q^{b} \sum_{i=r+1}^{n-(t+1)b+1} (q-1)^{i-1} \frac{\binom{n-tb-b}{i-1}}{\binom{i-(t-1)b}{t}} + q^b\sum_{i=1}^{r} (q-1)^{i-1} \binom{n-tb-b}{i-1},
    \end{align*}
    for any integer $n-(t+1)b\geq r \ge (t-1)(b+1)$, and the last equality follows from
    \begin{align*}
    \bigl|\{\mathbf{y}\in \Sigma_q^{n-tb} : |U_b(\mathbf{y})| = r\}\bigr|
    = \binom{n-tb-b}{r-1} q^{b} (q-1)^{r-1}.
    \end{align*}
    See Claim~4 in Appendix~A of~\cite{WTSGF24} for a detailed proof.
\end{IEEEproof}

Next, we derive the asymptotic upper bound on $M_q(n,(t,b))$ in Theorem~\ref{thm1: upper bound on M_q(n,(t,b))} by choosing an appropriate value of $r$ in Theorem~\ref{thm: non asymptotic upper bound 1}.

\begin{IEEEproof}[Proof of Theorem \ref{thm1: upper bound on M_q(n,(t,b))}]
    By Theorem~\ref{thm: non asymptotic upper bound 1}, it suffices to estimate the following two terms:
    \begin{align}
        q^{b} \sum_{i=r+1}^{n-(t+1)b+1} (q-1)^{i-1} \frac{\binom{n-tb-b}{i-1}}{\binom{i-(t-1)b}{t}} \label{eq: estimate sum D_t^b(y) term1}
    \end{align}
    and
    \begin{align}
        q^b\sum_{i=1}^{r} (q-1)^{i-1} \binom{n-tb-b}{i-1}.\label{eq: estimate sum D_t^b(y) term2}
    \end{align}

    To further estimate \eqref{eq: estimate sum D_t^b(y) term1} and \eqref{eq: estimate sum D_t^b(y) term2}, let $N=n-(t+1)b$ and $\mathrm{Vol}_q(r, N) = \sum_{i = 0}^r (q - 1)^i\binom{N}{i}$. By the well-known bounds on $\mathrm{Vol}_q(r, N)$ (See \cite[Proposition 3.3.3.]{guruswami2012essential} for details), we have
    \begin{align}\label{eq: vol of Hamming Ball}
    \mathrm{Vol}_q(r, N) \leq q^{H_q(p)N},~ \text{for } p = \frac{r}{N} \leq 1 - \frac{1}{q},
    \end{align}
    where $H_q(p) \triangleq p\log_q(q - 1) - p\log_q p - (1 - p)\log_q(1 - p)$ is the $q$-ary entropy function. It can be verified that
    \begin{align*}
    \frac{\partial H_q(p)}{p} &= \log_q(q-1) - \log_q p + \log_q(1-p),\\
    \frac{\partial^2 H_q(p)}{p^2} &= -\frac{1}{p(1-p)\ln q},
    \end{align*} 
    which implies that the Taylor expansion of $H_q(p - \epsilon)$ at $p = 1 - \frac{1}{q}$ is
    \begin{align}
    H_q(p - \epsilon) = 1 - \frac{q^2}{2(q-1)\ln q} \epsilon^2 + o(\epsilon^2).\label{eq: Taylor expansion of H_q(p-epsilon)}
    \end{align} 

    The term in \eqref{eq: estimate sum D_t^b(y) term2} can be simplified as
    \begin{align}
    q^b \sum_{r=0}^{r'-1}
    (q-1)^{r}
    \binom{N}{r} = q^b\mathrm{Vol}_q(r'-1, N). \label{eq: estimate2 sum D_t^b(y) term2}
    \end{align}
    Moreover, since $\binom{r-(t-1)b}{t}$ is monotone increasing in $r$, thus the term in \eqref{eq: estimate sum D_t^b(y) term1} can be simplified as
    \begin{align}
	& q^b \sum_{r = r'+1}^{n-(t+1)b+1}
    (q-1)^{r-1}
    \frac{\binom{n-tb-b}{r-1}}{\binom{r-(t-1)b}{t}} \nonumber \\
    \leq & \frac{q^b}{\binom{r'+1-(t-1)b}{t}} \sum_{r = r'+1}^{n-(t+1)b+1}
    (q-1)^{r-1}
    \binom{n-tb-b}{r-1} \nonumber \\
    = & \frac{q^{b}}{\binom{r'+1-(t-1)b}{t}} \left(\mathrm{Vol}_q(N, N) - \mathrm{Vol}_q(r' - 1, N) \right).
    \label{eq: estimate2 sum D_t^b(y) term1}
    \end{align}

    Now set $r' = \Big(1 - \frac{1}{q}\Big)N - \sqrt{2tN\ln N} + 1$ in \eqref{eq: estimate2 sum D_t^b(y) term1} and \eqref{eq: estimate2 sum D_t^b(y) term2}. By \eqref{eq: vol of Hamming Ball}, the right-hand side of \eqref{eq: estimate2 sum D_t^b(y) term2} satisfies
    \begin{align*}
    q^b \mathrm{Vol}_q(r'-1, N)
    & \le q^{b + H_q\!\left(1 - \frac{1}{q} - \sqrt{\frac{2t\ln N}{N}}\right)N} \\
    & = q^{N + b - \frac{tq^2}{(q-1)\ln q}\ln N  + o(t\ln N)} \\
    & \le q^{n - tb - tq \log_q N - t\ln N\left(\frac{1}{\ln q} - o(1)\right)} \\
    & = O\!\left(\frac{q^{n-tb}}{(n-(t+1)b)^{qt}}\right),
    \end{align*}
    as $n \to \infty$, where the first inequality follows from \eqref{eq: Taylor expansion of H_q(p-epsilon)} and the second follows since $q^2/(q-1) \ge q+1$ and $\ln N / \ln q = \log_q N$. Meanwhile, by
    \begin{align}
    & {{r'+1}-(t-1)b \choose t} %\nonumber \\
    \geq  \frac{((1 - \frac{1}{q})N-(t-1)b)^{t}}{t!}(1-o(1)) \nonumber \\
    = & \left(\frac{q-1}{q}\right)^{t}\left(n-2tb-\frac{(t-1)b}{q}\right)^{t}\left(\frac{1}{t!}-o(1)\right)
    \end{align}
    and
    \begin{align*}
        \mathrm{Vol}_q(N, N) - \mathrm{Vol}_q(r' - 1, N) & \geq q^{N}-q^{N - \frac{tq^2}{(q-1)\ln q}\ln N  + o(t\ln N)} \\
        & = q^{n-(t+1)b}\left(1-o(1)\right) \\
    \end{align*}
    as $n\rightarrow \infty$ and $t$ is a fixed constant, the RHS of \eqref{eq: estimate2 sum D_t^b(y) term1} is at most
    \begin{align}
    \frac{t! q^{n-tb+t}}{(q-1)^{t}\left(n-2tb-\frac{(t-1)b}{q}\right)^{t}}(1+o(1)). \nonumber
    \end{align}
    In total, this leads to
    \begin{align*}
	\sum_{\mathbf{y} \in \Sigma_q^{n - tb}} |D_{t}^b(\mathbf{y})|^{-1}
    \leq & \frac{t!q^{n - tb + t}}{(q - 1)^t\left(n-2tb-\frac{(t-1)b}{q}\right)^t}(1 + o(1))
    \end{align*}
    and confirms the upper bound in Theorem \ref{thm1: upper bound on M_q(n,(t,b))}.
\end{IEEEproof}

\begin{remark}\label{rmk: LP_bound_comparison_with_known_bounds}
    \begin{itemize}
        \item[1.] When $t = 1$ and $q = 2$, the bound in Theorem~\ref{thm1: upper bound on M_q(n,(t,b))} reduces to
        $$
        M_2(n,(1,b)) \le \frac{2^{n-b+1}}{n-2b}(1 + o(1)),
        $$
        which recovers the bound of Schoeny \emph{et al.}~\cite[Theorem~4]{schoeny2017codes}.

        \item[2.] When $b = 1$, the bound in Theorem~\ref{thm1: upper bound on M_q(n,(t,b))} reduces to
        \begin{align*}
        M_q(n,(t,1)) & \le \frac{t!\, q^{n}}{(q - 1)^t n^t}(1 + o(1)),
        \end{align*}
        when both $q$ and $t$ are fixed constants. This recovers the asymptotic version of the bound by Kulkarni and Kiyavash~\cite{kulkarni2013nonasymptotic}.

        \item [3.] In \cite{LSYG26}, based on a characterization of the size of $I_t^b(\mathbf{x})$ for any $\mathbf{x}\in \Sigma_q^{n}$, Lan \emph{et al.} derived the following upper bound on $M_q(n,(t,b))$:
        $$
        \frac{q^{n + t}}{\sum_{i = 0}^t \binom{n + t}{i}(q - 1)^i}= \frac{q^{n+t} \, t!}{n^t (q-1)^t}\bigl(1+o(1)\bigr).$$
        Clearly, the bound in Theorem~\ref{thm1: upper bound on M_q(n,(t,b))} coincides with this bound for the case $t=1$. For constant $t \ge 2$, the bound in Theorem~\ref{thm1: upper bound on M_q(n,(t,b))} can be written as
        $$
        \frac{t!q^{n - tb + t}}{(q - 1)^t\left(n-2tb-\frac{(t-1)b}{q}\right)^t}(1 + o(1))
        = \frac{t!q^{n - tb + t}}{(q - 1)^t n^t}\left(c + o(1)\right),
        $$
        where $c=\left(1-\frac{(2q+1)tb}{qn}\right)^{t}$ is a constant depending on the ratio $\frac{tb}{n}$. This shows that the bound in Theorem~\ref{thm1: upper bound on M_q(n,(t,b))} improves upon that in~\cite{LSYG26} by a factor of $q^{-tb}(c+o(1))$.
    \end{itemize}
\end{remark}

\subsection{A Sphere-Packing Upper Bound}

In this subsection, we present a sphere-packing upper bound on $M_q(n,(t,b))$, which coincides asymptotically with the bound in Theorem~\ref{thm1: upper bound on M_q(n,(t,b))} but improves it for certain parameter regimes. We begin by establishing a lower bound on $|D_t^b(\mathbf{x})|$ for any $\mathbf{x} \in \Sigma_q^n$, which generalizes the bound in \cite{hirschberg1999bounds} corresponding to $b = 1$.

\begin{theorem}\label{thm: improved lb on D_t^b}
    Let $n$, $t$, and $b$ be positive integers satisfying $n\geq tb+1$. Then, for any $\mathbf{x}\in \Sigma_q^n$, it holds that 
    $$
    |D_t^b(\mathbf{x})|\geq \sum_{s = 0}^t \binom{|U_b(\mathbf{x})| - (t - 1)b - 1}{s}.
    $$
\end{theorem}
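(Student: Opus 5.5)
The plan is to exhibit, for each $s\in[0:t]$, a family of maximal compact representations built from $s$ isolated single bursts at positions of $U_b(\mathbf{x})$ plus one trailing block of $t-s$ bursts at the very end of $\mathbf{x}$, and to count these families. By Propositions~\ref{prop: max rep of seq in D_t^b(x) is unique} and~\ref{prop: every seq in D_t^b(x) has maximal rep}, distinct maximal compact representations give distinct sequences. So it suffices to produce, for each $s$, at least $\binom{|U_b(\mathbf{x})|-(t-1)b-1}{s}$ valid maximal compact representations, with the families for different $s$ pairwise different as tuples. This mirrors the Hirschberg-type argument for $b=1$: the term $s$ counts "genuine" deletions inside the sequence, and the remaining $t-s$ bursts are absorbed at the end, where they always give the same suffix truncation.

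Concretely, fix $s$ and set $m=|U_b(\mathbf{x})|$. I would consider representations
\[
\mathrm{Del}_b\bigl(\mathbf{x},(i_1,1),\ldots,(i_s,1),(n-(t-s)b+1,\,t-s)\bigr),
\]
where the last pair is omitted when $s=t$. The conditions I would impose are:
\begin{itemize}
\item[(a)] each $i_j\in U_b(\mathbf{x})$;
\item[(b)] $i_{j+1}>i_j+b$ for all $j$;
\item[(c)] $i_s+b<n-(t-s)b+1$, so the last single burst does not touch the tail block.
\end{itemize}
Condition \eqref{eq: representation cond} then holds by (b) and (c). Condition \eqref{eq: maximal cond} holds for each single block $(i_j,1)$ because $i_j\in U_b(\mathbf{x})$ means $x_{i_j}\neq x_{i_j+b}$, with the convention for $i_j=n-b+1$ being vacuous. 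For the tail block it is vacuous because $i+cb=n+1>n$. Hence all these representations are maximal. Within a fixed $s$ they are distinct, and across different $s$ they differ in the number of pairs or in the tail pair. Therefore all the resulting sequences are distinct.

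The counting step is the main obstacle. The usable positions are the elements of $U_b(\mathbf{x})$ lying in $[1:n-(t-s+1)b]$ (for $s<t$) or in $[1:n-b+1]$ (for $s=t$). The number of such elements is at least $m-(t-s)b$, since $U_b(\mathbf{x})$ loses at most the $(t-s)b$ integers of the window $[n-(t-s+1)b+1:n-b+1]$. Choosing $s$ of them pairwise more than $b$ apart can be lower bounded greedily by $\binom{k-(s-1)b}{s}$ for a $k$-element pool. This yields $\binom{m-(t-s)b-(s-1)b}{s}=\binom{m-(t-1)b}{s}$, which is even a bit more than needed.

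I expect trouble in three places:
\begin{itemize}
\item[1.] The exact boundary in (c): adjacency $i_s+b=n-(t-s)b+1$ must be excluded. Otherwise the single block merges with the tail into $(i_s,t-s+1)$ and collides with a representation counted for $s-1$. The exclusion costs about one position.
\item[2.] The greedy spacing count must be done carefully.
\item[3.] The case $n-b+1$ needs separate treatment.
\end{itemize}
These losses are why the target has $-1$ and not $0$. If the direct count comes out one short for some $s$, I would fall back to induction on $t$, as in Hirschberg's proof. Split according to whether the first deleted burst lies at the first element $u_1$ of $U_b(\mathbf{x})$ or not. Sequences with first single block at $u_1$ reduce to $D_{t-1}^b$ of the suffix after position $u_1+b$, whose $U_b$ has size at least $m-b-1$. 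Those avoiding $u_1$ reduce to the same count with $m-1$. Pascal's rule $\binom{M}{s}=\binom{M-1}{s}+\binom{M-1}{s-1}$ then closes the induction.
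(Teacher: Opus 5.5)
Your family of representations is exactly the paper's: $\mathrm{Del}_b(\mathbf{x},(i_1,1),\ldots,(i_s,1),(n-(t-s)b+1,t-s))$ with $i_j\in U_b(\mathbf{x})$ and $i_{j+1}-i_j>b$. Checking maximality through Proposition~\ref{prop: every seq in D_t^b(x) has maximal rep} is valid, and it is shorter than the paper's direct argument with the order. The counting step, however, is wrong as written, and the bound it claims, $\sum_{s=0}^t\binom{m-(t-1)b}{s}$, is false. For $t=b=1$ it gives $|D_1(\mathbf{x})|\ge m+1$, whereas $|D_1(\mathbf{x})|=m$. For $\mathbf{x}=0101$, $t=2$, $b=1$ it gives $7$, whereas $|D_2(\mathbf{x})|=4$.

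There are two separate slips.

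First, the window $[n-(t-s+1)b+1:n-b+1]$ contains $(t-s)b+1$ integers, not $(t-s)b$, and it always contains $n-b+1\in U_b(\mathbf{x})$. So for $s<t$ the pool $U_b(\mathbf{x})\cap[1:n-(t-s+1)b]$ is only guaranteed $m-(t-s)b-1$ elements. For $\mathbf{x}=0101$, $t=2$, $s=1$ the pool is $\{1,2\}$, of size $m-2$.

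Second, for $s=t$ you let the pool be all of $U_b(\mathbf{x})$. Choosing $i_t=n-b+1$ then produces the tuple $(i_1,1),\ldots,(i_{t-1},1),(n-b+1,1)$. This is literally the $s=t-1$ representation with tail $(n-b+1,1)$, so your claim that different $s$ give different tuples fails. For $t=b=1$, the $s=0$ representation $(n,1)$ reappears in the $s=1$ family.

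Both slips are repaired by imposing $i_s\le n-(t-s+1)b$ uniformly for every $s\in[0:t]$, including $i_t\le n-b$ when $s=t$; this is the paper's condition. With it:
\begin{itemize}
\item each pool has at least $m-(t-s)b-1$ elements;
\item your spacing argument gives exactly $\binom{m-(t-s)b-1-(s-1)b}{s}=\binom{m-(t-1)b-1}{s}$;
\item the families are pairwise disjoint. Those with $s\le t-2$ have $s+1<t$ pairs and are separated by that count. The $s=t-1$ and $s=t$ families both have $t$ pairs but differ in the last pair: $(n-b+1,1)$ versus $(i_t,1)$ with $i_t\le n-b$.
\end{itemize}
So the $-1$ in the target is not slack from boundary losses; it is exactly the forced exclusion of $n-b+1$. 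No induction fallback is needed.
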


\begin{IEEEproof}
    Given $\mathbf{x} \in \Sigma_q^n$, we first show that for every $2 \le s \le t$ and any indices $1 \le i_1 < \cdots < i_s \le n - (t - s)b - b$ satisfying $i_1, \ldots, i_s \in U_b(\mathbf{x})$ and $i_{j+1} - i_j > b$ for all $j \in [s-1]$, the compact representation
    \begin{equation}\label{eq: max rep of y}
    \mathrm{Del}_b(\mathbf{x}, (i_1,1), \ldots, (i_s,1), (n - (t - s)b + 1, t - s))
    \end{equation}
    is maximal for some sequence in $D_t^b(\mathbf{x})$.

    Let $\mathbf{y} \in D_t^b(\mathbf{x})$ be the sequence corresponding to \eqref{eq: max rep of y}, and suppose it admits another compact representation 
    $$
    \mathrm{Del}_b(\mathbf{x}, (i_1', c_1), \ldots, (i_{s'}', c_{s'}))
    $$ 
    that precedes \eqref{eq: max rep of y}. Let $\ell$ be the largest index such that $(i_j', c_j) = (i_j,1)$ for all $j \le \ell$.
    
    If $\ell < s$, then by definition of the order, we have either $i_{\ell+1}'>i_{\ell+1}$ or $i_{\ell+1}'=i_{\ell+1}$ and $1>c_{\ell+1}$. Clearly, since $c_{i}\geq 1$ holds for all $i\in [s']$, we obtain $i_{\ell+1}'> i_{\ell+1}$. Thus, by representation \eqref{eq: max rep of y}, we have 
    $$
    \mathbf{y} = (\ldots, x_{i_{\ell+1}-1},\, x_{i_{\ell+1} + b}, \ldots),
    $$
    which implies that $y_{i_{\ell+1}-\ell b}=x_{i_{\ell+1} + b}$. However, by representation $\mathrm{Del}_{b}(\mathbf{x}, (i_1', c_1), \ldots, (i_{s'}', c_{s'}))$, we also have
    $$
    \mathbf{y} = (\ldots, x_{i_{\ell+1}-1},\, x_{i_{\ell+1}}, \ldots, x_{i_{\ell+1}'-1,}, x_{i_{\ell+1}'+c_{\ell+1}b}, \ldots ),
    $$
    which implies that $y_{i_{\ell+1}-\ell b}=x_{i_{\ell+1}}$. This yields a contradiction since $x_{i_{\ell+1}}\neq x_{i_{\ell+1}+b}$ by $i_{\ell+1}\in U_{b}(\mathbf{x})$.

    If $\ell = s$, then $(i_j', c_j) = (i_j,1)$ for all $j \in [s]$. Moreover, the ordering condition implies either $i_{s+1}' > n - (t - s)b + 1$ or $i_{s+1}' = n - (t - s)b + 1$ with $c_{s+1} < t - s$, both of which lead to $\sum_{j=1}^{s'} c_j < t$, contradicting \eqref{eq: representation cond}. Hence, \eqref{eq: max rep of y} is maximal.

    Next, we complete the proof by bounding the number of representations of form \eqref{eq: max rep of y}.

    Notice that for any $s \leq t$, there are at least $|U_b(\mathbf{x})| - (t - s)b - 1$ valid positions in $U_b(\mathbf{x})$ not exceeding $n - (t - s)b - b$, yielding at least $\binom{|U_b(\mathbf{x})| - (t - 1)b - 1}{s}$ ways to choose $i_1, \ldots, i_s$. Summing over all $s \leq t$, we obtain at least $\sum_{s=0}^{t} \binom{|U_b(\mathbf{x})| - (t - 1)b - 1}{s}$ subsequences with distinct representations in $D_t^b(\mathbf{x})$. This establishes the desired lower bound on $D_t^b(\mathbf{x})$, which completes the proof.
\end{IEEEproof}

With Theorem \ref{thm: improved lb on D_t^b}, we can derive another non-asymptotic upper bound on $M_q(n,(t,b))$.

\begin{theorem}\label{thm: upper_bound_levenshtein}
    For positive integers $q\geq 2$, $n$ ,$t$, $b$, $r$ satisfying $n\geq tb$ and $n - b + 1\geq r + 1\geq t\geq 1$, it holds that
    \begin{align*}
        M_q(n,(t,b)) & \leq \frac{q^{n - tb}}{\sum_{s=0}^t \binom{r - (t - 1)b}{s}} + \sum_{i = 0}^{r - 1} \binom{n - b}{i}q^b(q-1)^i.
    \end{align*}
\end{theorem}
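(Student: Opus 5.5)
This is a Levenshtein-style sphere-packing argument. We split the code according to the size of $U_b(\mathbf{c})$ and bound the two parts separately. Let $\mathcal{C}\subseteq\Sigma_q^n$ be a $(t,b)$-burst-deletion-correcting code and fix $r$ as in the statement. Partition
\[
\mathcal{C}=\mathcal{C}_1\cup\mathcal{C}_2,\qquad \mathcal{C}_1=\{\mathbf{c}\in\mathcal{C}:|U_b(\mathbf{c})|\ge r+1\},\qquad \mathcal{C}_2=\{\mathbf{c}\in\mathcal{C}:|U_b(\mathbf{c})|\le r\}.
\]
The first term of the bound will control $|\mathcal{C}_1|$ and the second will control $|\mathcal{C}_2|$.

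\textbf{The set $\mathcal{C}_1$.} Since $\mathcal{C}$ is $(t,b)$-burst-deletion-correcting, the balls $D_t^b(\mathbf{c})$, $\mathbf{c}\in\mathcal{C}_1$, are pairwise disjoint subsets of $\Sigma_q^{n-tb}$. Hence
\[
\sum_{\mathbf{c}\in\mathcal{C}_1}|D_t^b(\mathbf{c})|\le q^{n-tb}.
\]
For each $\mathbf{c}\in\mathcal{C}_1$ we have $|U_b(\mathbf{c})|-(t-1)b-1\ge r-(t-1)b$. Each binomial $\binom{m}{s}$ is nondecreasing in $m$ for $m\ge0$, so Theorem~\ref{thm: improved lb on D_t^b} gives
\[
|D_t^b(\mathbf{c})|\ge \sum_{s=0}^t\binom{r-(t-1)b}{s}.
\]
Dividing yields $|\mathcal{C}_1|\le q^{n-tb}\big/\sum_{s=0}^t\binom{r-(t-1)b}{s}$.

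One technical point needs care: the case $r-(t-1)b<0$, where the binomials are not standard. There I would interpret $\binom{m}{s}=0$ for $m<s$, so the $s=0$ term still equals $1$. The inequality remains valid in that case because the trivial bound $|D_t^b(\mathbf{c})|\ge1$ holds.

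\textbf{The set $\mathcal{C}_2$.} Here we simply count all sequences with few $b$-runs: $|\mathcal{C}_2|\le|\{\mathbf{x}\in\Sigma_q^n:|U_b(\mathbf{x})|\le r\}|$. The count cited from Claim~4 of~\cite{WTSGF24}, applied at length $n$, says that the number of $\mathbf{x}\in\Sigma_q^n$ with $|U_b(\mathbf{x})|=i$ equals $\binom{n-b}{i-1}q^b(q-1)^{i-1}$. Summing over $i=1,\dots,r$ and reindexing with $i-1\mapsto i$ gives
\[
|\mathcal{C}_2|\le\sum_{i=0}^{r-1}\binom{n-b}{i}q^b(q-1)^i.
\]
For intuition behind this count: $|U_b(\mathbf{x})|-1$ is the number of indices $i\le n-b$ with $x_i\ne x_{i+b}$. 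The first $b$ symbols can be chosen freely, and each subsequent symbol either copies the symbol $b$ positions earlier or takes one of the $q-1$ other values.

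\textbf{Conclusion and main obstacle.} Adding the two bounds gives the theorem. The only genuinely delicate step is confirming the monotonicity and applicability of Theorem~\ref{thm: improved lb on D_t^b} for every codeword in $\mathcal{C}_1$. That theorem needs $n\ge tb+1$, whereas the hypothesis here is $n\ge tb$. In the edge case $n=tb$ the ball $D_t^b(\mathbf{c})$ consists only of the empty word, so the bound holds trivially. The rest is routine bookkeeping.
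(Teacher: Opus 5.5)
Your proposal is correct and matches the paper's proof: split the code according to whether $|U_b(\mathbf{c})|\le r$, bound the low part by the exact count $\binom{n-b}{i-1}q^b(q-1)^{i-1}$, and sphere-pack the high part using Theorem~\ref{thm: improved lb on D_t^b} with $|U_b(\mathbf{c})|\ge r+1$. The only difference is that your labels $\mathcal{C}_1,\mathcal{C}_2$ are swapped; your remarks on monotonicity in the top argument, the convention $\binom{m}{s}=0$ for $m<s$, and the degenerate case $n=tb$ are sound and make explicit points the paper leaves implicit.
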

\begin{IEEEproof}
    Let $\cC\subseteq \Sigma_q^n$ be a $(t,b)$-burst-deletion-correcting code. For an integer $r$ satisfying $t-1\leq r\leq n-b$, we split $\cC$ into two subsets $\cC_1$ and $\cC_2$, where $\cC_1 = \{\mathbf{x}\in \cC : |U_b(\mathbf{x})| \leq r\}$ and $\cC_2 = \{\mathbf{x}\in \cC : |U_b(\mathbf{x})| > r\}$. Then, we have $|\cC| = |\cC_1| + |\cC_2|$. We next proceed to upper bound $|\cC_1|$ and $|\cC_2|$ separately. 

    For $|\cC_1|$, since $\cC_1\subseteq \{\mathbf{x}\in \Sigma_q^n : |U_b(\mathbf{x})| \leq r\}$, it suffices to determine the size of the set $\{\mathbf{x}\in \Sigma_q^n : |U_b(\mathbf{x})| \leq r\}$. Note that
    \begin{align*}
        \bigl|\{\mathbf{y}\in \Sigma_q^{n} : |U_b(\mathbf{y})| = r\}\bigr|
        = \binom{n-b}{r-1} q^{b} (q-1)^{r-1}.
    \end{align*}
    Then, it holds that
    \begin{align*}
        \bigl|\{\mathbf{x}\in \Sigma_q^n : |U_b(\mathbf{x})| \leq r\}\bigr| & = \sum_{i = 0}^{r - 1} \bigl|\{\mathbf{y}\in \Sigma_q^{n} : |U_b(\mathbf{y})| = i + 1\}\bigr| \\
        & = \sum_{i = 0}^{r - 1} \binom{n - b}{i}q^b(q-1)^i.
    \end{align*}

    By Theorem \ref{thm: improved lb on D_t^b}, for any $\mathbf{x}\in \cC_2$, we have $|D_t^b(\mathbf{x})|\geq \sum_{s=0}^t \binom{r + 1 - (t - 1)b - 1}{s} = \sum_{s=0}^t \binom{r - (t - 1)b}{s}$. Since $\cC$ is a $(t,b)$-burst-deletion-correcting code, it holds that $D_t^b(\mathbf{x})\cap D_t^b(\mathbf{x}') = \emptyset$ for any distinct $\mathbf{x}, \mathbf{x}'\in \cC_2$. Hence, we have
    \begin{align*}
        |\cC_2| & \leq \frac{|\bigcup_{\mathbf{x}\in \cC_2} D_t^b(\mathbf{x})|}{\min_{\mathbf{x}\in \cC_2} |D_t^b(\mathbf{x})|} = \frac{\sum_{\mathbf{x}\in \cC_2} |D_t^b(\mathbf{x})|}{\min_{\mathbf{x}\in \cC_2} |D_t^b(\mathbf{x})|} \leq \frac{q^{n - tb}}{\sum_{s=0}^t \binom{r - (t - 1)b}{s}}.
    \end{align*}
    In total, we have $$|\cC| = |\cC_1| + |\cC_2| \leq \frac{q^{n - tb}}{\sum_{s=0}^t \binom{r - (t - 1)b}{s}} + \sum_{i = 0}^{r - 1} \binom{n - b}{i}q^b(q-1)^i,$$ which completes the proof.
\end{IEEEproof}

\begin{remark}\label{rmk: comparison with LP bound}
    \begin{itemize}
        \item [1.] When $b = 1$, the upper bound on $M_q(n,(t,b))$ in Theorem~\ref{thm: upper_bound_levenshtein} recovers Levenshtein's upper bound for $t$-deletion-correcting codes.
        
        \item [2.] Using similar notation and proof techniques as in Theorem~\ref{thm1: upper bound on M_q(n,(t,b))}, by taking $r = \frac{q - 1}{q}(n - (t+1)b) - \sqrt{2t(n-b)\ln (n-b)} + 1$, the upper bound on $M_q(n,(t,b))$ in Theorem~\ref{thm: upper_bound_levenshtein} implies that
        \begin{align*}
        M_q(n,(t,b)) 
        &\leq \frac{q^{n - tb}}{\sum_{s=0}^t \binom{r - (t - 1)b}{s}} + q^b \mathrm{Vol}_q(r - 1, n - b) \\
        & \leq \frac{q^{n - tb}}{\binom{r - (t - 1)b}{t}} + q^{b + H_q\!\left(\frac{q-1}{q} - \frac{(q-1)tb}{q(n-b)} - \sqrt{\frac{2t\ln (n-b)}{n-b}}\right)(n - b)} \\
        & = \frac{t!\, q^{n - tb + t}}{(q-1)^{t}\left(n - 2tb - \frac{(t-1)b}{q}\right)^{t}} + O\!\left(\frac{q^n}{(n - b)^{qt}}\right) \\
        & = \frac{t!\, q^{n - tb + t}}{(q-1)^{t}\left(n - 2tb - \frac{(t-1)b}{q}\right)^{t}}(1 + o(1)),
        \end{align*}
        when both $q$ and $t$ are constants and $n \geq tb + 1$ is sufficiently large. This asymptotically coincides with the bound in Theorem~\ref{thm1: upper bound on M_q(n,(t,b))}.
    \end{itemize}
\end{remark}

Figure~\ref{fig: comparison of upper bounds 1} compares the non-asymptotic upper bounds on $M_q(n,(t,b))$ given in Theorem~\ref{thm: non asymptotic upper bound 1} and Theorem~\ref{thm: upper_bound_levenshtein} as well as the upper bound in \cite{LSYG26} for $q = 3$, $t = 3$, and $b = 2$. It can be observed that the bound in Theorem~\ref{thm: upper_bound_levenshtein} is tighter than that in Theorem~\ref{thm: non asymptotic upper bound 1} for small values of $n$.

\begin{figure}[h!]
    \centering
    \includegraphics[width=0.5\textwidth]{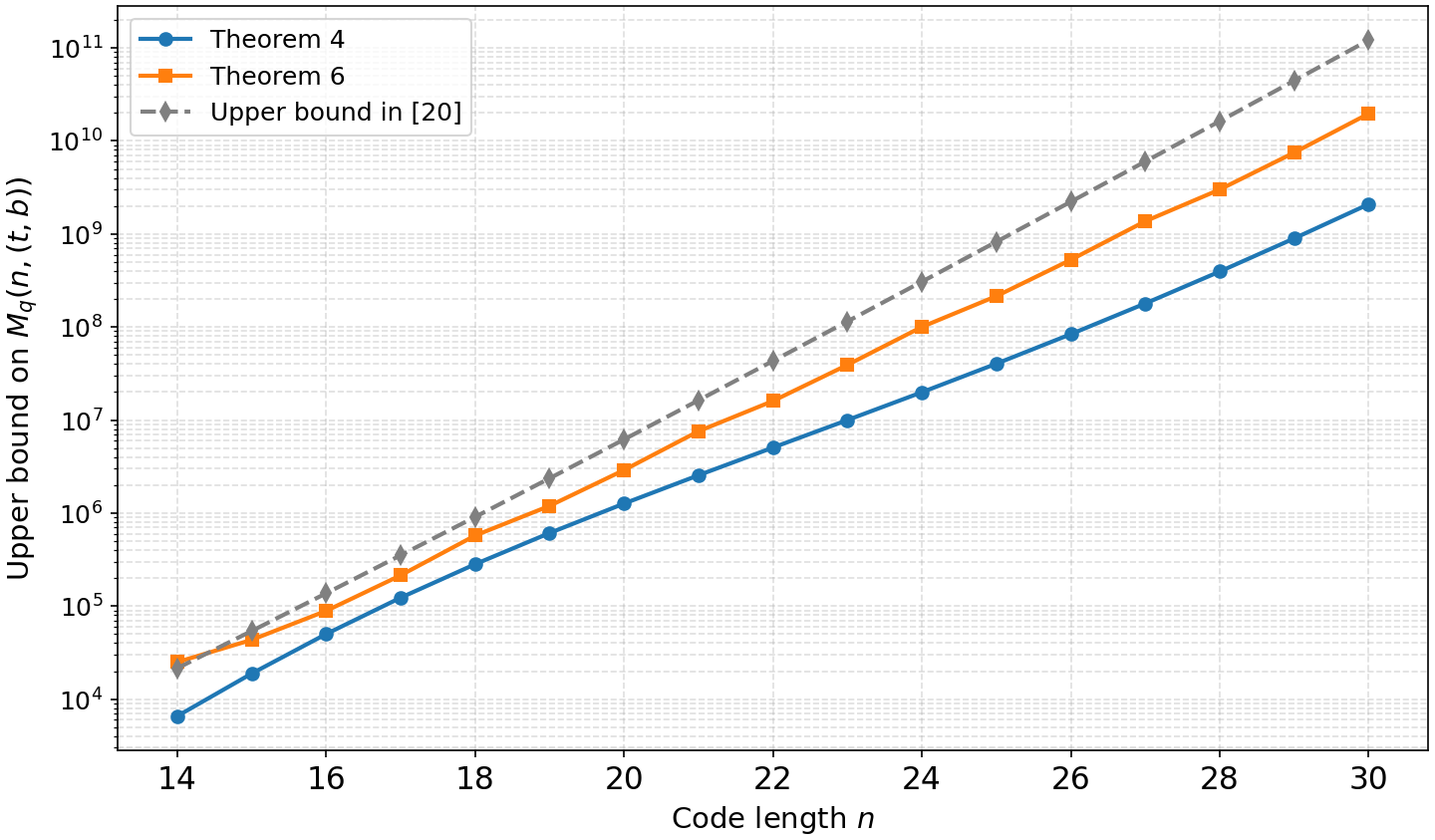}
    \caption{Comparison of the upper bounds on $M_q(n,(t,b))$ in Theorem~\ref{thm: non asymptotic upper bound 1}, Theorem~\ref{thm: upper_bound_levenshtein}, and the upper bound in \cite{LSYG26} for $q = 3$, $t = 3$, $b = 2$.}\label{fig: comparison of upper bounds 1}
\end{figure}

\subsection{Combinatorial Bounds on $M_q(n,(t,b))$}

In this subsection, we provide upper and lower bounds on $M_q(n,(t,b))$ for the regime where $q$ is large, which we refer to as combinatorial bounds. As a consequence, we obtain an asymptotic formula for $M_q(n,(t,b))$ when $q$ is sufficiently large, as stated in the following theorem.

\begin{theorem}\label{thm: combinatorial_bd_large_q}
    Let $n$, $t$, and $b$ be fixed positive integers satisfying $n \ge tb + 1$. Then,
    $$
    M_q(n,(t,b)) = \begin{cases}
        q^{b}, &\text{if }t = \frac{n}{b} - 1,\\
        \frac{q^{n - tb}}{\binom{n - tb + t}{t}}(1-o(1)), &\text{if }t < \frac{n}{b} - 1.
        \end{cases}
    $$
    as $q \to \infty$.
\end{theorem}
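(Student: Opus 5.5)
The plan is to prove matching upper and lower bounds separately for each case.

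\emph{Case $t=\frac{n}{b}-1$.} Here $n=(t+1)b$, so every $\mathbf{y}\in D_t^b(\mathbf{x})$ has length $b$. Deleting $t$ disjoint bursts of length $b$ from a word of length $(t+1)b$ leaves one contiguous surviving block, and that block is located where the deletions leave a gap. I would check whether $\mathbf{x}_{[1:b]}$ is always a member of $D_t^b(\mathbf{x})$. It is: put all $t$ deletions at positions $b+1, 2b+1,\ldots,tb+1$.

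The surviving block need not be aligned to columns. Still, the first $b$ symbols of $\mathbf{x}$ always belong to the ball. Hence two codewords with the same prefix $\mathbf{x}_{[1:b]}$ collide, and $M_q\le q^b$.

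For the lower bound I would take $\mathcal{C}=\{\mathbf{a}^{t+1}:\mathbf{a}\in\Sigma_q^b\}$, the $(t+1)$-fold concatenations. The burst deletions can then be any $t$ disjoint windows of length $b$. The surviving length-$b$ word consists of the positions of a periodic word outside these windows. Any $b$ positions that remain form a set containing exactly one representative of each residue class mod $b$, in increasing order. Such a set is a union of runs of consecutive positions, and the survivors equal a cyclic shift of $\mathbf{a}$. I would need to verify that the survivors actually form a single window: the complement of $t$ disjoint length-$b$ windows in $[(t+1)b]$ is a union of gaps of total length $b$, possibly split.

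Cyclic shifts of different $\mathbf{a}$ can coincide, so this code may fail. A better option is a code of the form $\mathbf{a}\,\mathbf{c}^{t}$ with $\mathbf{c}$ chosen so that $\mathbf{a}$ is recoverable. This is the delicate point. First, though, I would test small cases, e.g.\ $n=2b$, $t=1$: the ball is $\{\mathbf{x}_{[1:i-1]}\mathbf{x}_{[i+b:2b]}\}$, and $\mathbf{a}\mathbf{a}$ gives exactly the cyclic shifts of $\mathbf{a}$. Whether $q^b$ codewords fit would presumably come from the paper's hypergraph argument. I would instead aim for $(1-o(1))q^b$ via the matching argument below, and then look for an exact construction.

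\emph{Case $t<\frac{n}{b}-1$.} For the upper bound I will use Theorem~\ref{thm2: upper bound on M_q(n,(t,b))}, or sphere packing via Theorem~\ref{thm: size of burst-deletion ball}. As $q\to\infty$ with $n$ fixed, a $1-o(1)$ fraction of $\mathbf{y}\in\Sigma_q^{n-tb}$ have all symbols distinct. For such $\mathbf{y}$, I expect $|D_t^b(\mathbf{y})|$ to equal exactly $\binom{n-tb+t}{t}$, the count of multisets with $|U_b|=n-b+1$. Equivalently, for a generic $\mathbf{x}\in\Sigma_q^n$, the number of distinct maximal compact representations equals the number of choices of $(i_j,c_j)$ in Proposition~\ref{prop: every seq in D_t^b(x) has maximal rep}.

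Applying the LP bound with weights $w(\mathbf{y})=|D_t^b(\mathbf{y})|^{-1}$ gives
\begin{align*}
M_q(n,(t,b))\le q^{n-tb}\Bigl(\tbinom{N+t}{t}^{-1}+o(1)\Bigr).
\end{align*}
The length used here needs care, since $\mathbf{y}$ has length $n-tb$ and not $n$. I would therefore instead use the sphere-packing split of Theorem~\ref{thm: upper_bound_levenshtein}. Codewords with repeated symbols number $O(q^{n-1})=o(q^{n-tb})$, but only when $tb\le 1$, so this step needs refining. The plan is to count codewords whose ball is small, using that $|D_t^b(\mathbf{x})|<\binom{n-tb+t}{t}$ forces a coincidence among the entries of $\mathbf{x}$ at distance a multiple of $b$. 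I would then bound such codewords through the images they cover; this is the main obstacle.

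For the lower bound I would apply a Frankl–Rödl / Pippenger–Spencer type near-perfect matching theorem to the hypergraph $\mathcal{H}_{q,n,t}^b$, restricted to hyperedges of size exactly $k=\binom{n-tb+t}{t}$. I need to check two things: vertex degrees are almost regular, of order $\Theta(q^{tb})$, and codegrees are $o(q^{tb})$, because two distinct $\mathbf{y},\mathbf{y}'$ share fewer common supersequences. This yields a matching covering a $1-o(1)$ fraction of vertices, hence $(1-o(1))q^{n-tb}/k$ codewords. The condition $t<n/b-1$ is what should make codegrees of lower order; verifying this codegree bound is the main technical step.
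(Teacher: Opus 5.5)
Both cases of your proposal have a genuine gap. \emph{Case $t=\frac{n}{b}-1$.} Your upper bound is fine. The lower bound is missing because you discarded the right code over a miscalculation. In your own test $n=2b$, $t=1$, $\mathbf{x}=\mathbf{a}\mathbf{a}$, you get $\mathbf{x}_{[1:i-1]}\mathbf{x}_{[i+b:2b]}=a_1\cdots a_{i-1}\,a_i\cdots a_b=\mathbf{a}$ for every $i$. It is not a cyclic shift.

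In general, no burst can straddle a surviving position, so the number of deleted positions before the $k$-th survivor is a multiple of $b$. Hence the $k$-th survivor of $\mathbf{a}^{t+1}$ sits at a position congruent to $k$ modulo $b$ and carries $a_k$. So $D_t^b(\mathbf{a}^{t+1})=\{\mathbf{a}\}$, and $\{\mathbf{a}^{t+1}:\mathbf{a}\in\Sigma_q^b\}$ is a code of size exactly $q^b$. This is the paper's construction.

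Your fallback cannot replace it. The statement asks for exactly $q^b$, and a matching bound of the form $|E|/\Delta$ gives only about $q^{(t+1)b}/\bigl(q^{tb}\binom{b+t}{t}\bigr)=q^b/\binom{b+t}{t}$, because a generic word of length $(t+1)b$ has a ball of $\binom{b+t}{t}$ elements. The extremal code here consists of words with singleton balls, which is precisely why this case behaves differently from the other.

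\emph{Case $t<\frac{n}{b}-1$.} The upper bound is not established, and the LP route cannot give it. A generic $\mathbf{y}$ of length $n-tb$ has $|D_t^b(\mathbf{y})|=\binom{n-2tb+t}{t}$, not $\binom{n-tb+t}{t}$; the latter is the ball size of a generic word of length $n$. So Theorem~\ref{thm2: upper bound on M_q(n,(t,b))} cannot give anything better than about $q^{n-tb}/\binom{n-2tb+t}{t}$.

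The sphere-packing split is the right idea, but the step you call the main obstacle needs the specific lemma the paper uses (Lemma~\ref{lem:same_symbol}). When $b\mid n$ and $t<\frac{n}{b}-1$, every $\mathbf{x}$ with two equal symbols has some $\mathbf{y}\in D_t^b(\mathbf{x})$ that also has two equal symbols. To get it, delete $t$ of the aligned blocks $[jb+1:jb+b]$ other than the at most two blocks holding the pair; this is possible since there are $n/b\ge t+2$ blocks. With this lemma the count splits cleanly:
\begin{itemize}
\item Since balls are disjoint, codewords with a repeated symbol number at most $q^{n-tb}-\prod_{i=0}^{n-tb-1}(q-i)=O(q^{n-tb-1})$.
\item Codewords with pairwise distinct symbols have balls of exactly $\binom{n-tb+t}{t}$ words, all with distinct symbols, so they number at most $\prod_{i=0}^{n-tb-1}(q-i)/\binom{n-tb+t}{t}$.
\end{itemize}
Your criterion, that a small ball forces a coincidence at distance a multiple of $b$, gives no set of size $o(q^{n-tb})$ to charge those codewords to. Persistence of a repeated symbol does.

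The hypothesis $b\mid n$ is genuinely needed here. For $n=5$, $b=2$, $t=1$, the word $(a,c,d,c,e)$ with distinct $a,c,d,e$ has ball $\{(d,c,e),(a,c,e),(a,c,d)\}$. These 3-sets form a $3(q-3)$-regular hypergraph on distinct-symbol words with bounded codegree, and a near-perfect matching of them gives about $q^3/3>q^3/4$ codewords.

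Your lower-bound plan agrees with the paper's, which applies Lemma~\ref{lem:liu2024near} with $\Delta$ and the codegree taken from \cite{LSYG26}. If you use Pippenger--Spencer instead, note two things:
\begin{itemize}
\item Restricting to edges of size exactly $\binom{n-tb+t}{t}$ does not give near-regular degrees. Restricting to distinct-symbol words on both sides does: each vertex then has degree exactly $\binom{n-tb+t}{t}\prod_{i=0}^{tb-1}(q-n+tb-i)$, with codegrees $O(q^{tb-1})$.
\item Small codegree holds in both cases. The condition $t<\frac{n}{b}-1$ is what makes the upper bound match, not what makes codegrees small.
\end{itemize}
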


We begin by establishing the upper bound. More precisely, we have the following result.

\begin{theorem}\label{thm: upper_bound_2}
    For any $(t, b)$-burst deletion correcting code $\cC\subseteq \Sigma_q^n$ such that $b\mid n$ and $q\geq n-tb$, it holds that 
    $$
        |\cC|\leq \begin{cases}
        q^{b}, &\text{if }t = \frac{n}{b} - 1,\\
        q^{n - tb} - \binom{q}{n - tb}(n - tb)! + \frac{\binom{q}{n - tb}(n - tb)!}{\binom{n - t(b - 1)}{t}}, &\text{if }t < \frac{n}{b} - 1.
        \end{cases}
    $$
\end{theorem}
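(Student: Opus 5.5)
We treat the two cases separately; write $m \triangleq n-tb$ for the length of the sequences in each ball.

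\textbf{Case $t=\frac{n}{b}-1$.} Here $n=(t+1)b$ and every $\mathbf{y}\in D_t^b(\mathbf{x})$ has length $b$. Deleting the last $t$ bursts $[b+1:2b],\ldots,[tb+1:(t+1)b]$ shows that the prefix $\mathbf{x}_{[1:b]}$ lies in $D_t^b(\mathbf{x})$. If two distinct codewords had the same length-$b$ prefix, their balls would intersect. So the map $\mathbf{c}\mapsto \mathbf{c}_{[1:b]}$ is injective on $\cC$, which gives $|\cC|\le q^b$.

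\textbf{Case $t<\frac{n}{b}-1$.} Now $m\ge b+1$. Call $\mathbf{y}\in\Sigma_q^{m}$ \emph{distinct} if its entries are pairwise different; since $q\ge m$, there are exactly $\binom{q}{m}m!$ distinct sequences. Split $\cC=\cC_1\cup\cC_2$:
\begin{itemize}
\item $\cC_1$ consists of the codewords whose ball contains at least one non-distinct sequence;
\item $\cC_2$ consists of the codewords whose ball contains only distinct sequences.
\end{itemize}
The balls are pairwise disjoint. Assign to each $\mathbf{c}\in\cC_1$ one non-distinct element of its ball; these elements are all different, so $|\cC_1|\le q^{m}-\binom{q}{m}m!$. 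The balls of codewords in $\cC_2$ are disjoint subsets of the distinct sequences, so
\[
|\cC_2|\le \frac{\binom{q}{m}m!}{\min_{\mathbf{c}\in\cC_2}|D_t^b(\mathbf{c})|}.
\]
It therefore suffices to show that $|D_t^b(\mathbf{x})|\ge\binom{m+t}{t}=\binom{n-t(b-1)}{t}$ whenever every element of $D_t^b(\mathbf{x})$ is distinct.

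For that lower bound, encode a deletion pattern as a word with $m$ letters K (kept symbols) and $t$ letters B (bursts, each removing $b$ consecutive symbols) in left-to-right order. Every such word is a valid pattern, since it just dictates which positions are kept. There are exactly $\binom{m+t}{t}$ words, and distinct words give distinct sets of deleted positions. I would show that distinct words give distinct sequences $\mathbf{y}$.

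Suppose two words $w\ne w'$ produce the same $\mathbf{y}$. Look at the first kept letter whose source position differs: in $w$ it is position $i$, and in $w'$ it is position $i+kb$ with $k\ge1$, because the words agree before this point and differ only by inserting bursts. Then $x_i=x_{i+kb}$. Now build a third pattern that keeps both positions $i$ and $i+kb$:
\begin{itemize}
\item keep position $i$;
\item delete $\min(k,t)$ bursts inside $[i+1:i+kb-1]$ (leaving $b-1$ or more kept symbols there if $k>t$);
\item keep $i+kb$;
\item place any remaining bursts elsewhere, using $m\ge b+1$ and the budget of the original patterns.
\end{itemize}
The resulting $\mathbf{y}''\in D_t^b(\mathbf{x})$ contains the symbol $x_i$ twice, contradicting the hypothesis that every ball element is distinct. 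Hence the encoding is injective and $|D_t^b(\mathbf{x})|\ge\binom{m+t}{t}$. Adding the bounds on $|\cC_1|$ and $|\cC_2|$ gives the stated inequality.

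\textbf{Main obstacle.} The hardest step is this construction of $\mathbf{y}''$. One must check in general that two kept positions differing by a multiple of $b$ can always be simultaneously kept using exactly $t$ bursts while preserving total length $m$. This includes edge cases where $i$ is near the end of $\mathbf{x}$, or where $k$ is large so the gap between them must be partly deleted and partly kept. I would handle this by letting $\mathbf{y}''$ reuse the deletions of $w$ outside $[i:i+kb]$ and redistributing bursts inside that interval, checking the count $m\ge b+1$ suffices.
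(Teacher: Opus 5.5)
There is a genuine gap. The surrounding framework is fine: the case $t=\frac{n}{b}-1$, the split into $\mathcal{C}_1$ and $\mathcal{C}_2$ with its counting, and the reduction from two colliding K/B words to a pair $x_i=x_{i+kb}$ with $k\ge 1$ are all correct. Your route differs mildly from the paper's. The paper splits $\mathcal{C}$ by whether the codeword $\mathbf{x}$ itself has pairwise distinct symbols, in which case its ball trivially has full size. Every other codeword gets a non-distinct ball element via Lemma~\ref{lem:same_symbol}. Both routes rest on the same fact: a repeated pair in $\mathbf{x}$ can be kept by some choice of exactly $t$ bursts. That is the step you leave unproved, and your sketch of it is flawed.

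\textbf{The construction of $\mathbf{y}''$.} The interval $[i+1:i+kb-1]$ has only $kb-1$ positions. It can therefore hold at most $k-1$ bursts, not $\min(k,t)$. More seriously, you assert that $m\ge b+1$ (plus ``the budget of the original patterns'') suffices to place the remaining bursts. You never use the hypothesis $b\mid n$, and without it your key claim is false. Take $n=5$, $b=2$, $t=1$, so $m=3=b+1$, and let $\mathbf{x}=(a,c,d,c,e)$ with $a,c,d,e$ distinct. Then $D_1^2(\mathbf{x})=\{(d,c,e),(a,c,e),(a,c,d)\}$ contains only distinct sequences, yet its size is $3<\binom{4}{1}$. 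The repeated pair $x_2=x_4$ cannot be kept, since the free positions $1,3,5$ contain no two consecutive ones. So the claim ``every ball element distinct implies $|D_t^b(\mathbf{x})|\ge\binom{m+t}{t}$'' cannot be derived from $m\ge b+1$ alone. Your plan of reusing $w$'s deletions outside $[i:i+kb]$ does not by itself guarantee room for the displaced bursts either.

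\textbf{How to close the gap.} Use $b\mid n$ exactly as in Lemma~\ref{lem:same_symbol}. Partition $[n]$ into the $n/b$ aligned blocks $[(j-1)b+1:jb]$. Positions $i$ and $i+kb$ lie in two different blocks. Since $t<\frac{n}{b}-1$ means $t\le\frac{n}{b}-2$, you can delete $t$ of the remaining blocks as $b$-bursts. This keeps both $x_i$ and $x_{i+kb}$ and yields a non-distinct element of $D_t^b(\mathbf{x})$.

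Equivalently, by counting: write $i-1=a_1b+r_1$ and $n-i-kb=a_3b+r_3$ with $0\le r_1,r_3\le b-1$. Divisibility of $n$ by $b$ forces $r_1+r_3=b-1$. Hence the three gaps around the kept positions $i$ and $i+kb$ hold exactly $a_1+(k-1)+a_3=\frac{n}{b}-2\ge t$ bursts. With this lemma in place, your injectivity argument for the K/B encoding goes through, and your bound follows as stated.
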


For the proof of Theorem~\ref{thm: upper_bound_2} we need the following lemma.

\begin{lemma}\label{lem:same_symbol}
    When $t < \frac{n}{b} - 1$, for any sequence $\mathbf{x} \in \Sigma_q^n$ that contains two identical symbols, there exists a sequence $\mathbf{y} \in D_t^b(\mathbf{x})$ such that $\mathbf{y}$ also contains two identical symbols.
\end{lemma}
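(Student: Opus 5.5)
The plan is to keep the two equal symbols in place: we delete $t$ disjoint $b$-bursts that avoid both of their positions. Throughout, I assume $b \mid n$, as in Theorem~\ref{thm: upper_bound_2} where the lemma is used.

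\textbf{Setup.} Let $i<j$ be positions with $x_i=x_j$. If we find $t$ disjoint length-$b$ intervals $[i_1:i_1+b-1],\ldots,[i_t:i_t+b-1]\subseteq[n]\setminus\{i,j\}$, then $\mathbf{y}=\mathrm{Del}_b(\mathbf{x},\{i_1,\ldots,i_t\})\in D_t^b(\mathbf{x})$. Here we order the $i_\ell$ increasingly, so disjointness gives exactly $i_{\ell+1}-i_\ell\ge b$. Since positions $i$ and $j$ survive, $\mathbf{y}$ contains two copies of the symbol $x_i$.

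\textbf{Counting step.} Remove $i$ and $j$ from $[n]$. What remains splits into three (possibly empty) runs of consecutive positions, of lengths $a_1=i-1$, $a_2=j-i-1$ and $a_3=n-j$, with $a_1+a_2+a_3=n-2$. A run of length $a_k$ contains $\lfloor a_k/b\rfloor$ disjoint bursts placed back to back. So it suffices to show
\[
\lfloor a_1/b\rfloor+\lfloor a_2/b\rfloor+\lfloor a_3/b\rfloor\ge t .
\]
Using $\lfloor a/b\rfloor\ge (a-b+1)/b$, the left side is at least
\[
\frac{n-2-3(b-1)}{b}=\frac{n}{b}-3+\frac{1}{b}.
\]
The hypothesis $t<\frac{n}{b}-1$ together with $b\mid n$ gives $\frac{n}{b}\ge t+2$. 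Hence the sum is at least $t-1+\frac1b>t-1$, and since it is an integer, it is at least $t$. We then choose $t$ of these bursts, distributed among the three runs, and this finishes the argument.

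\textbf{Main obstacle.} The divisibility $b\mid n$ is genuinely needed for this particular choice. For example, with $n=2b+1$, $t=1$ and $b\ge2$, both $i$ and $j$ can sit in the middle so that all three runs are shorter than $b$. If the lemma is required without $b\mid n$, I would instead use the shift structure: after deleting a burst, the symbols $x_k$ and $x_{k+b}$ become adjacent. I would then choose the repeated pair adaptively, for instance taking $i<j$ with $j-i$ minimal, so that the complementary runs are long enough. For the application in Theorem~\ref{thm: upper_bound_2}, however, $b\mid n$ is assumed, so the counting argument above suffices.
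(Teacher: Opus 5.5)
Your argument is correct and is essentially the paper's: both keep the two equal symbols alive by deleting $t$ disjoint $b$-bursts that avoid them, and both need exactly $n\ge (t+2)b$, which is where the implicit assumption $b\mid n$ enters. The paper certifies that such bursts exist by using the $n/b$ aligned blocks, at most two of which contain $i$ or $j$, while you count $\sum_k\lfloor a_k/b\rfloor\ge t$ over the three gaps. One correction to your closing remark: your example with $n=2b+1$, $t=1$ is a counterexample to the lemma itself, not just to your choice of bursts. For instance, with $b=2$ and $\mathbf{x}=(0,1,2,1,3)$, every element of $D_1^2(\mathbf{x})$ has pairwise distinct symbols. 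So no adaptive choice of the repeated pair can remove the hypothesis $b\mid n$.
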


\begin{IEEEproof}
    Without loss of generality, we assume that there exist $1\leq i_1 < i_2\leq n$ with $x_{i_1} = x_{i_2}$ such that $i_1\in [j_1b + 1, j_1b + b]$ and $i_2\in [j_2b + 1, j_2b + b]$ for some $j_1, j_2$. 
    
    When $j_1 = j_2$, consider the sequence
    \begin{align*}
        \mathbf{y} = \mathrm{Del}_{b}(\mathbf{x}, &\{1,b+1,\ldots,(j_1-1)b+1,\ldots,(j_1 + 1)b + 1, \ldots, tb + 1\}).
    \end{align*} 
    As $t < \frac{n}{b} - 1$, $\mathbf{y}$ is well-defined. Then, since $x_{i_1}$ and $x_{i_2}$ are preserved in $\mathbf{y}$, we obtain a sequence $\mathbf{y}\in D_t^b(\mathbf{x})$ that contains two identical symbols.

    When $j_1 \neq j_2$, consider the sequence
    \begin{align*}
        \mathbf{y} = \mathrm{Del}_{b}(\mathbf{x}, &\{1, \ldots, (j_1 - 1)b + 1, (j_1 + 1)b + 1, \ldots, \\
        &~(j_2 - 1)b + 1, (j_2 + 1)b + 1,\ldots, (t + 1)b + 1\}).
    \end{align*} 
    Similarly, since $t < \frac{n}{b} - 1$, the sequence $\mathbf{y}$ is well defined, and because $x_{i_1}$ and $x_{i_2}$ are preserved, it contains two identical symbols.
\end{IEEEproof}

Next, we present the proof of Theorem \ref{thm: upper_bound_2}.

\begin{IEEEproof}[Proof of Theorem \ref{thm: upper_bound_2}]
    For any $(t, b)$-burst deletion correcting code $\cC$, we partition $\cC$ into subsets $\cC_1, \ldots, \cC_n$ such that 
    $$
    \cC_i \triangleq \{\mathbf{x}\in \cC: \text{there are }i\text{ symbols in }\mathbf{x}\}\footnote{We define $\cC_i=\emptyset$ when $i>q$.}.
    $$ 

    When $t = \frac{n}{b} - 1$, a codeword $\mathbf{x}\in \cC$ will become a sequence of length $b$ after $t$ $b$-burst deletions. As the sequences after $t$ $b$-burst deletions are all distinct for different codewords, we have that $|\cC|\leq q^b$. 
    
    When $t < \frac{n}{b} - 1$, Lemma~\ref{lem:same_symbol} implies that for any $\mathbf{x} \in \mathcal{C}_i$ with $i < n$, there exists at least one sequence in $D_t^b(\mathbf{x})$ that contains two identical symbols. Observe that the number of sequences in $\Sigma_q^{\,n - tb}$ consisting of pairwise distinct symbols is $\binom{q}{n - tb}(n - tb)! = \prod_{i=0}^{n - tb - 1} (q - i)$. Therefore, since the deleted sequences corresponding to different codewords are distinct, it follows that 
    \begin{equation}\label{eq1_comb_bound}
    \sum_{i = 1}^{n - 1} |\mathcal{C}_i| \le q^{\,n - tb} - \prod_{i=0}^{n - tb - 1} (q - i).
    \end{equation}

    Furthermore, since any $\mathbf{x} \in \mathcal{C}_n$ consists of pairwise distinct symbols by definition, we have that $U_b(\mathbf{x}) = \{n - b + 1\}$. Consequently, for any $\mathbf{x} \in \mathcal{C}_n$, any two distinct choices of $t$ positions at which $b$-burst deletions occur lead to distinct subsequences in $D_t^b(\mathbf{x})$. Since there are $\binom{n - t(b - 1)}{t}$ different ways to choose the positions of the $t$ $b$-burst deletions, it follows that
    $$
    |D_t^b(\mathbf{x})| = \binom{n - t(b - 1)}{t}
    $$
    for every $\mathbf{x} \in \mathcal{C}_n$.
    Moreover, since the deletion balls of distinct codewords in $\mathcal{C}$ are pairwise disjoint, we have
    \begin{equation}\label{eq2_comb_bound}
        |\mathcal{C}_n|\le \frac{\prod_{i=0}^{n - tb - 1} (q - i)}{\binom{n - t(b - 1)}{t}}.
    \end{equation}
    Therefore, combining \eqref{eq1_comb_bound} and \eqref{eq2_comb_bound} together, we obtain
    \begin{align*}
    |\mathcal{C}|
    &= \sum_{i=1}^n |\mathcal{C}_i| \\
    &\le 
    q^{\,n - tb}
    - \prod_{i=0}^{n - tb - 1} (q - i)
    \left(1-\binom{n - t(b - 1)}{t}^{-1}\right).
    \end{align*}
    This completes the proof.
\end{IEEEproof}

\begin{remark}\label{rmk: comparision with Bours' result}
    In \cite{bours1995construction}, Bours established the following upper bound on $M_q(n,(t=n-2,b=1))$: for positive integers $n \ge 2$ and $q \ge 2$, any $(n-2,1)$-deletion-correcting code $\cC \subseteq \Sigma_q^n$ satisfies
    \begin{equation}\label{eq: bours' upp bd}
        |\cC| \le q + \left\lfloor \frac{q}{n} \left\lfloor \frac{2(q - 1)}{n - 1} \right\rfloor \right\rfloor.
    \end{equation}
    Note that for $b > 1$ with $b \mid n$, a $(t,b)$-burst-deletion-correcting code over $\Sigma_q^n$ can be viewed as a $(t,1)$-deletion-correcting code over $\Sigma_{q^b}^{\,n/b}$. Hence, \eqref{eq: bours' upp bd} implies that
    \begin{equation}\label{eq1: comparision with Bours' result}
        M_q(n,(t,b)) \le q^b + \left\lfloor \frac{q^b}{n/b} \left\lfloor \frac{2(q^b - 1)}{n/b - 1} \right\rfloor \right\rfloor 
    = \frac{2b^2 q^{2b}}{n(n - b)} + o(q^{2b}),
    \end{equation}
    as $q \to \infty$.
    
    One can easily check that the upper bound in Theorem \ref{thm: upper_bound_2} reduces to the upper bound \eqref{eq: bours' upp bd} when $b = 1$ and $t = n - 2$. Moreover, when $b\geq 2$ and $t = \frac{n}{b} - 2$, the upper bound in Theorem \ref{thm: upper_bound_2} becomes 
    \begin{equation}\label{eq2: comparision with Bours' result}
        q^{2b} - \binom{q}{2b}(2b)! + \frac{\binom{q}{2b}(2b)!}{\binom{n/b + 2b - 2}{n/b - 2}} = \frac{q^{2b}}{\binom{n/b + 2b - 2}{2b}} + o(q^{2b})
    \end{equation}
    as $q\rightarrow \infty$. Note that \begin{align*}
        \binom{n/b + 2b - 2}{2b} &= \frac{(n/b + 2b - 2)\cdots (n/b - 1)}{(2b)!}\\
        &=\frac{(n/b + 2b - 2)(n/b + 2b - 3)}{2}\cdot \prod_{i = 4}^{2b+1}\frac{n/b + 2b - i}{i - 1}.
    \end{align*}
    When $b\geq 2$ and $n\geq 4b$, we have 
    $\frac{(n + 2b^2 - 2b)(n + 2b^2 - 3b)}{2b^2} \geq \frac{n(n - b)}{2b^2}$ and $\prod_{i = 4}^{2b+1}\frac{(n/b + 2b - i)}{(i - 1)}\geq 1$. Hence, we have $\binom{n/b + 2b - 2}{2b}^{-1} \le \frac{2b^2}{n(n - b)}$, which implies that the bound in \eqref{eq2: comparision with Bours' result} is tighter than that in \eqref{eq1: comparision with Bours' result}.
\end{remark}

%\XLcom{Please note that the above approximation is obtained under the assumption that $q \to \infty$. We believe that a more detailed discussion is needed. In the above estimation, we treat $\frac{q^{2b}}{\binom{n/b + 2b - 2}{2b}}$ as the dominant term, which only holds when $|\mathcal{C}_n|$ is nonempty, and this further implies that $q \geq n$. Otherwise, when $q$ is fixed and $n \to \infty$, the above upper bound becomes $q^{2b}(1 - \prod_{i=0}^{2b-1}\left(1 - \frac{i}{q}\right))$. (I wonder whether such a bound is trivial)}

%\XLcom{In \cite{kong2025combinatorial}, we formulate a linear program based on observations about the quantity $\sum_{i\in [n-1]} |\mathcal{C}_i|$ and other summations used in the proof there. I believe that similar linear constraints can be derived in the present setting, which would allow us to construct a corresponding linear program and search for feasible solutions.}

Next, we present the proof of the lower bound on $M_q(n,(t,b))$ in Theorem \ref{thm: combinatorial_bd_large_q}. We first introduce some extra notation and preliminary results from hypergraph theory.

A hypergraph is called $\ell$-bounded if every hyperedge contains at most $\ell$ vertices. The degree of a vertex $v$ in $\mathcal{H}$, denoted by $\deg_{\mathcal{H}}(v)$, is the number of hyperedges containing $v$. The maximum degree of $\mathcal{H}$, denoted by $\Delta(\mathcal{H})$, is the maximum degree over all vertices in $\mathcal{H}$. The codegree of two distinct vertices $u$ and $v$ in $V(\mathcal{H})$, denoted by $\operatorname{cod}(u, v)$, is the number of hyperedges containing both $u$ and $v$. The codegree of $\mathcal{H}$, denoted by $\operatorname{cod}(\mathcal{H})$, is the maximum codegree over all pairs of distinct vertices in $\mathcal{H}$. 

In \cite{liu2024near}, the authors proved the following lower bound on the matching number of a hypergraph.

\begin{lemma}(\cite[Lemma II.4]{liu2024near})\label{lem:liu2024near}
Let $\mathcal{H}$ be an $\ell$-bounded hypergraph such that $\frac{\operatorname{cod}(\mathcal{H})}{\Delta(\mathcal{H})} = o(1)$ as $|V(\mathcal{H})| \to \infty$. Then,
$$
\nu(\mathcal{H}) > (1 - o(1)) \frac{|E(\mathcal{H})|}{\Delta(\mathcal{H})}.
$$
\end{lemma}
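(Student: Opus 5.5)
The plan is to derive the lemma from the Frankl--R\"odl--Pippenger--Spencer theory of near-perfect matchings, in the linear-programming form due to Kahn. Throughout, $\ell$ is fixed and $\frac{\operatorname{cod}(\mathcal{H})}{\Delta(\mathcal{H})}\to 0$. The first step is to exhibit a good fractional matching. Put $w(e)\triangleq 1/\Delta(\mathcal{H})$ on every hyperedge $e\in E(\mathcal{H})$. For every vertex $v$,
$$
\sum_{e\ni v} w(e)=\frac{\deg_{\mathcal{H}}(v)}{\Delta(\mathcal{H})}\le 1,
$$
so $w$ is a fractional matching. Its total weight is $\sum_e w(e)=|E(\mathcal{H})|/\Delta(\mathcal{H})$. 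Moreover, for every pair of distinct vertices $u\neq v$,
$$
\sum_{e\ni u,v} w(e)=\frac{\operatorname{cod}(u,v)}{\Delta(\mathcal{H})}\le \frac{\operatorname{cod}(\mathcal{H})}{\Delta(\mathcal{H})}=o(1).
$$
So $w$ is a fractional matching whose \emph{pair loads} are uniformly small.

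The second step is to invoke Kahn's theorem. It states that for fixed $\ell$ and every $\epsilon>0$ there is $\delta>0$ with the following property: if $w$ is a fractional matching of an $\ell$-bounded hypergraph and every pair of distinct vertices has load at most $\delta$, then $\nu(\mathcal{H})\ge(1-\epsilon)\sum_e w(e)$. Applying it to the $w$ above gives $\nu(\mathcal{H})>(1-o(1))|E(\mathcal{H})|/\Delta(\mathcal{H})$, which is the claim. The only bookkeeping is to turn the $\epsilon$--$\delta$ statement into the $o(1)$ form, using $\operatorname{cod}/\Delta\to0$ as $|V(\mathcal{H})|\to\infty$.

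If a self-contained argument is wanted instead of citing Kahn, I would run the R\"odl nibble directly with weights. In each round, every current edge is selected independently with probability $\theta\, w'(e)$, where $w'$ is the current renormalized weight and $\theta$ is small. We keep the selected edges that meet no other selected edge, then delete all covered vertices. The codegree bound gives two things. First, a vertex is covered by two selected edges with probability $O(\theta^2)$. Second, the weight lost by an edge through one of its at most $\ell$ vertices is concentrated, because pairs of vertices share negligible weight. Concentration would come from Talagrand's or the Kim--Vu inequality. After $O(1/\theta\cdot\log(1/\epsilon))$ rounds, the collected matching has size $(1-O(\epsilon))\sum_e w(e)$.

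I expect the main obstacle to be controlling, across rounds, the nonuniformity caused by unequal vertex degrees. In the regular Pippenger--Spencer setting every vertex loses weight at the same rate, but here low-degree vertices have load less than $1$. I would handle this as Kahn does: track for each edge the survival probability and for each vertex the remaining load, and show inductively that both stay within $(1\pm o(1))$ of their deterministic trajectories. Alternatively, I would reduce to the regular case by attaching, to each vertex of degree below $\Delta$, dummy $\ell$-edges on fresh vertices until the hypergraph is $\Delta$-regular with small codegree. We then apply Pippenger--Spencer and discard the dummy edges. The cost is that one must check that only an $o(1)$ fraction of the resulting near-perfect matching consists of dummy edges, which again follows from the nibble's uniform distribution over edges.
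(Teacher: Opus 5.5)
Your main argument is correct. The weight $w\equiv 1/\Delta(\mathcal{H})$ is a fractional matching of total weight $|E(\mathcal{H})|/\Delta(\mathcal{H})$ whose pair loads are $\operatorname{cod}(u,v)/\Delta(\mathcal{H})=o(1)$, so Kahn's theorem (with $\ell$ fixed) gives the bound. This is the standard derivation behind the cited result; the paper itself gives no proof and imports the lemma from \cite{liu2024near}.

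Your optional regularization route contains a false step. The claim that only an $o(1)$ fraction of the near-perfect matching consists of dummy edges fails in general. Take a sunflower with $\Delta$ petals: it has $\operatorname{cod}=1$, $|E(\mathcal{H})|/\Delta(\mathcal{H})=1$ and $\nu=1$, so almost every edge of the near-perfect matching must be a dummy.

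What that route actually needs is that at least $(1-o(1))|E(\mathcal{H})|/\Delta(\mathcal{H})$ original edges remain in the matching. This requires the nibble to select each edge with probability $(1\pm o(1))/\Delta(\mathcal{H})$. Alternatively, use the chromatic-index form of Pippenger--Spencer and pigeonhole over colour classes restricted to the original edges.
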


Recall that the hypergraph $\mathcal{H}_{q,n,t}^{b}$ has vertex set $V(\mathcal{H}_{q,n,t}^{b}) \triangleq \Sigma_q^{n-tb}$ and hyperedge set $E(\mathcal{H}_{q,n,t}^{b}) \triangleq \{ D_t^b(\mathbf{x}) : \mathbf{x} \in \Sigma_q^{n} \}$. We next establish several properties of the hypergraph $\mathcal{H}_{q,n,t}^{b}$, which will be used to derive a lower bound on $M_q(n,(t,b))$.

\begin{lemma}\label{lem:property_of_H}
    The hypergraph $\mathcal{H}_{q,n,t}^{b}$ satisfies the following properties:
    \begin{itemize}
        \item[1.] $\mathcal{H}_{q,n,t}^{b}$ is $\binom{n - (t-1)(b - 1)}{t}$-bounded.
        \item[2.] The maximum degree of $\mathcal{H}_{q,n,t}^{b}$ is
        $$
        \Delta(\mathcal{H}_{q,n,t}^{b}) = q^{t(b - 1)} \sum_{i = 0}^{t} \binom{n - tb + t}{i}(q - 1)^i.
        $$
        \item[3.] The codegree of $\mathcal{H}_{q,n,t}^{b}$ is
        $$
        \operatorname{cod}(\mathcal{H}_{q,n,t}^{b}) = q^{t(b - 1)} \sum_{i = 0}^{t - 1} \binom{n - tb + t}{i}(q - 1)^i \bigl[1 - (-1)^{t - i}\bigr].
        $$
        \end{itemize}
\end{lemma}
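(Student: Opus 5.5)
The plan is to treat the three items separately: a direct count for item~1, the insertion-ball formula of \cite{LSYG26} for item~2, and a reduction to the case $b=1$ for item~3. Throughout, write $m \triangleq n-tb$ for the length of the vertices. Since $\mathcal{H}_{q,n,t}^{b}$ has hyperedges $D_t^b(\mathbf{x})$, the degree of a vertex $\mathbf{y}$ is $|I_t^b(\mathbf{y})|$, and the codegree of $\mathbf{u}\neq\mathbf{v}$ is $|I_t^b(\mathbf{u})\cap I_t^b(\mathbf{v})|$. So items~2 and~3 are statements about burst-insertion balls.

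For item~1, I would bound $|D_t^b(\mathbf{x})|$ by the number of admissible deletion patterns $\{i_1,\ldots,i_t\}$ in \eqref{eq: def D_t^b(x)}. These are indices in $[n-b+1]$ with $i_{j+1}-i_j\ge b$. The shift $i_j\mapsto i_j-(j-1)(b-1)$ maps them bijectively onto strictly increasing $t$-tuples in $[n-tb+t]$, so there are $\binom{n-tb+t}{t}$ of them. Because $n-(t-1)(b-1)\ge n-tb+t$, this is at most $\binom{n-(t-1)(b-1)}{t}$, which gives item~1.

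For item~2, I would invoke the result of \cite{LSYG26} recalled in the introduction: $|I_t^b(\mathbf{y})|$ does not depend on $\mathbf{y}\in\Sigma_q^{m}$, and it has an explicit expression. The plan is to rewrite that expression as $q^{t(b-1)}\sum_{i=0}^{t}\binom{m+t}{i}(q-1)^i$. As a sanity check, for $b=1$ this is the classical formula for the size of the $t$-insertion ball of a length-$m$ word. Since every vertex has the same degree, the maximum degree $\Delta$ equals this common value. If the form in \cite{LSYG26} is not already this one, I would instead derive it directly from the canonical decomposition described next.

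For item~3, the key step is a canonical decomposition of each $\mathbf{x}\in I_t^b(\mathbf{y})$. I would write $\mathbf{x}$ uniquely as a $(t,1)$-insertion supersequence of $\mathbf{y}$ whose $t$ inserted ``leading'' symbols are each followed by $b-1$ unconstrained symbols. The leading symbols would be placed in the leftmost position allowed by the same shifting argument as in Proposition~\ref{prop: every seq in D_t^b(x) has maximal rep}, applied to insertions. This would give a bijection between $I_t^b(\mathbf{y})$ and $I_t(\mathbf{y}')\times\Sigma_q^{t(b-1)}$ for a suitable length-$m$ encoding $\mathbf{y}'$ of $\mathbf{y}$. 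The bijection would need to be compatible across different centres, so that
\[
|I_t^b(\mathbf{u})\cap I_t^b(\mathbf{v})| = q^{t(b-1)}\,|I_t(\mathbf{u}')\cap I_t(\mathbf{v}')|.
\]
Then I would apply Levenshtein's result on the maximum intersection of two $t$-insertion balls of distinct length-$m$ words. That maximum is $\sum_{i=0}^{t-1}\binom{m+t}{i}(q-1)^i\bigl[1-(-1)^{t-i}\bigr]$, and it is attained by words differing in a single position. Finally, I would check that the corresponding $\mathbf{u},\mathbf{v}$, which differ in one symbol, attain the bound in the burst setting.

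The main obstacle is the compatibility of the bijection across two different centres. The free $b-1$ symbols and the placement of the leading symbols must be determined by $\mathbf{x}$ alone, not by the centre, so that the intersection factorizes as claimed. If this uniform factorization fails, I would fall back on two separate arguments. For the upper bound, I would bound the intersection by mimicking Levenshtein's recursion on the first symbol, with the $q^{b-1}$ factor appearing at each insertion. For the lower bound, I would exhibit the explicit pair $\mathbf{u},\mathbf{v}$ that differ in one position and count their common supersequences directly.
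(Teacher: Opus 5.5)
\textbf{Items~1 and~2.} These follow the paper's proof. The paper also bounds $|D_t^b(\mathbf{x})|$ by counting deletion patterns; your count $\binom{n-tb+t}{t}$ is even slightly sharper than the paper's count over $i_j\in[n]$. It also reads off $\Delta$ from the centre-independent value of $|I_t^b(\mathbf{y})|$ in \cite[Theorem~1]{LSYG26}.

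\textbf{Item~3: the gap.} The paper does not reduce to $b=1$ here. It quotes \cite[Theorem~2]{LSYG26}, which states exactly that $\max_{\mathbf{y}\neq\mathbf{z}}|I_t^b(\mathbf{y})\cap I_t^b(\mathbf{z})|$ equals the displayed expression. Your main route needs, for some encoding $\mathbf{y}\mapsto\mathbf{y}'$, the identity $|I_t^b(\mathbf{u})\cap I_t^b(\mathbf{v})|=q^{t(b-1)}|I_t(\mathbf{u}')\cap I_t(\mathbf{v}')|$ for all $\mathbf{u}\neq\mathbf{v}$. No encoding makes this true.

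Take $q=2$, $t=1$, $b=2$, $n=4$. The vertices are the four words of $\Sigma_2^2$, each of degree $8$. Direct enumeration gives $I_1^2(00)\cap I_1^2(11)=\{0011,1100\}$ and $I_1^2(01)\cap I_1^2(10)=\{0110,1001\}$, while the other four pairs meet in $4$ words. So the six codegrees are $2,2,4,4,4,4$.
\begin{itemize}
\item If two words shared an encoding, the identity would give them codegree $2\cdot 4=8$, which does not occur.
\item If the encoding is a bijection, the codegrees would have to be $2$ times the six values $|I_1(\mathbf{u}')\cap I_1(\mathbf{v}')|$ over $\Sigma_2^2$. Those values are $0,2,2,2,2,2$, since no binary word of length $3$ contains both $00$ and $11$. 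That gives $0,4,4,4,4,4$, a different multiset.
\end{itemize}

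The bijection you describe fails for a related reason. Suppose $\Phi$ is a single bijection of $\Sigma_q^{n}$ onto $\Sigma_q^{n-tb+t}\times\Sigma_q^{t(b-1)}$ with $\Phi(I_t^b(\mathbf{y}))=I_t(\mathbf{y}')\times\Sigma_q^{t(b-1)}$ for every $\mathbf{y}$. Then $D_t^b(\mathbf{x})$ would depend only on the first component of $\Phi(\mathbf{x})$. Hence every hyperedge would occur with multiplicity divisible by $q^{t(b-1)}$. In the example, however, the singleton edges $\{00\},\{01\},\{10\},\{11\}$ come only from $0000,0101,1010,1111$, once each.

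\textbf{Consequence.} The maxima happen to agree in this example (both are $4$), but you cannot obtain that by transporting Levenshtein's theorem pair by pair. Your fallback (``mimic Levenshtein's recursion with a factor $q^{b-1}$ at each insertion'') describes an intended argument rather than giving one. Because the burst intersection pattern is not a rescaled copy of the $b=1$ pattern, such a recursion would have to be genuinely new. Nothing in the proposal establishes the upper bound for all pairs, so item~3 is unproved as it stands.

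The simplest repair is the paper's: cite \cite[Theorem~2]{LSYG26} directly, as you already do with Theorem~1 of the same work for item~2. Otherwise you must supply a complete burst-insertion analogue of Levenshtein's upper-bound argument, together with an explicit pair attaining it.
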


\begin{IEEEproof}   
    For Property~1, note that for any $\mathbf{x} \in \Sigma_q^n$, the possible positions for $t$ distinct $b$-burst deletions are determined by a sequence of $t$ integers $i_1, i_2, \ldots, i_t$ in $[n]$ such that $i_{j+1} \geq i_j + b$ holds for every $1 \leq j \leq t - 1$. Thus, for any $\mathbf{x} \in \Sigma_q^n$, we have
    $$
    |D_t^b(\mathbf{x})| \leq \binom{n - (t - 1)(b - 1)}{t}.
    $$
    Recall that each hyperedge in $\mathcal{H}_{q,n,t}^{b}$ is uniquely associated with a sequence $\mathbf{x} \in \Sigma_q^n$, and its size is exactly $|D_t^b(\mathbf{x})|$. Hence, $\mathcal{H}_{q,n,t}^{b}$ is $\binom{n - (t - 1)(b - 1)}{t}$-bounded.

    For Property~2, note that for any $\mathbf{y} \in \Sigma_q^{n - tb}$, viewed as a vertex in $\mathcal{H}_{q,n,t}^{b}$, its degree satisfies $\deg(\mathbf{y}) = |I_t^b(\mathbf{y})|$. By \cite[Theorem~1]{LSYG26}, we know that
    $$
    |I_t^b(\mathbf{y})| = q^{t(b - 1)} \sum_{i = 0}^{t} \binom{n - tb + t}{i}(q - 1)^i.
    $$
    This yields the desired expression for $\Delta(\mathcal{H}_{q,n,t}^{b})$.
    
    For Property~3, note that for any two distinct sequences $\mathbf{y}, \mathbf{z} \in \Sigma_q^{n - tb}$, viewed as vertices in $\mathcal{H}_{q,n,t}^{b}$, their codegree satisfies $\operatorname{cod}(\mathbf{y}, \mathbf{z}) = \left| I_t^b(\mathbf{y}) \cap I_t^b(\mathbf{z}) \right|$. Moreover, 
    by \cite[Theorem~2]{LSYG26}, we know that
    $$
    \max_{\mathbf{y} \neq \mathbf{z}} \left| I_t^b(\mathbf{y}) \cap I_t^b(\mathbf{z}) \right|
    = q^{t(b - 1)} \sum_{i = 0}^{t - 1} \binom{n - tb + t}{i}(q - 1)^i \bigl[1 - (-1)^{t - i}\bigr].
    $$
    This gives $\operatorname{cod}(\mathcal{H}_{q,n,t}^{b})$.
\end{IEEEproof}

Next, using Lemmas~\ref{lem:liu2024near} and~\ref{lem:property_of_H}, we establish the following lower bound on $M_q(n,(t,b))$ for the regime where $n$, $t$, and $b$ are fixed and $q \to \infty$, showing that the upper bound in Theorem \ref{thm: upper_bound_2} is tight in this parameter regime.

\begin{IEEEproof}[Proof of Theorem \ref{thm: combinatorial_bd_large_q}]
    When $n$, $t$, and $b$ are fixed and $q \to \infty$, we have
    $$
    \binom{q}{n-tb}(n-tb)! = q^{n-tb}\bigl(1 - O(q^{-1})\bigr).
    $$
    Hence, by Theorem~\ref{thm: upper_bound_2},
    $$
    M_q(n,(t,b)) \le 
    \begin{cases}
    q^{b}, & \text{if } t = \frac{n}{b} - 1, \\[0.5ex]
    \dfrac{q^{n - tb}}{\binom{n - tb + t}{t}}\bigl(1 - o(1)\bigr), & \text{if } t < \frac{n}{b} - 1.
    \end{cases}
    $$
    For the case $t = \frac{n}{b} - 1$, note that
    $$
    \mathcal{C} = \{(\underbrace{\mathbf{a}, \ldots, \mathbf{a}}_{n/b}) : \mathbf{a} \in \Sigma_q^b\}
    $$
    is an $(n/b - 1,b)$-burst-deletion-correcting code, which achieves the bound. This establishes the result for the case $t = \frac{n}{b} - 1$.
    
    It remains to prove the lower bound for the case $t < \frac{n}{b} - 1$.
    
    By Lemma~\ref{lem:property_of_H}, we have 
    $$
    \Delta(\mathcal{H}_{q,n,t}^{b})=q^{t(b-1)}\binom{n-tb+t}{t}(q-1)^{t}(1+o(1)),
    $$
    and
    $$
    \operatorname{cod}(\mathcal{H}_{q,n,t}^{b})= q^{t(b-1)}\binom{n-tb+t}{t-1}(q-1)^{t-1}(2+o(1)),
    $$
    as $q\rightarrow \infty$. This implies that $\operatorname{cod}(\mathcal{H}_{q,n,t}^{b}) / \Delta(\mathcal{H}_{q,n,t}^{b}) = o(1)$ as $q \to \infty$. Then, by applying Lemma~\ref{lem:liu2024near}, it follows that
    $$
    \nu(\mathcal{H}_{q,n,t}^{b}) > (1 - o(1)) \frac{|E(\mathcal{H}_{q,n,t}^{b})|}{\Delta(\mathcal{H}_{q,n,t}^{b})}.
    $$
    Substituting $|E(\mathcal{H}_{q,n,t}^{b})| = q^n$ and the expression for $\Delta(\mathcal{H}_{q,n,t}^{b})$, we obtain
    \begin{align*}
    \nu(\mathcal{H}_{q,n,t}^{b})
    &> (1 - o(1)) \frac{q^n}{q^{t(b - 1)} \sum_{i = 0}^{t} \binom{n - tb + t}{i}(q - 1)^i} \\
    &= (1 - o(1)) \frac{q^{n - tb + t}}{\sum_{i = 0}^{t} \binom{n - tb + t}{i}(q - 1)^i}.
    \end{align*}
    Since
    $$
    \sum_{i = 0}^{t} \binom{n - tb + t}{i}(q - 1)^i
    = \binom{n - tb + t}{t} q^t + o(q^t),
    $$
    the result follows.
\end{IEEEproof}

\section{Improved on the Previous Upper Bound for $b=t=2$}\label{sec: improved ub t=b=2}

In this section, we investigate the size of a $(2, 2)$-burst deletion-correcting code $\mathcal{C}$ over $\Sigma_q^n$. Specifically, by analyzing the relationship between the subsequences in $D_{2}^{2}(\mathbf{x})$ and those in $D_{2}(\mathbf{x})$, we obtain an improved lower bound on $|D_{2}^{2}(\mathbf{x})|$. Consequently, this leads to an improved upper bound on $M_q(n,(2,2))$ upon Theorem~\ref{thm: non asymptotic upper bound 1} and Theorem~\ref{thm: upper_bound_levenshtein}.

We begin by introducing some necessary notation and then establish two preliminary lemmas. For an event $E$, let $\delta_E = 1$ if $E$ occurs and $\delta_E = 0$ otherwise. 
% For any two sequences $\mathbf{x} = (x_1, \ldots, x_n)$ and $\mathbf{y} = (y_1, \ldots, y_n)$ in $\Sigma_q^n$, let 
% $$
% \mathbf{x} \| \mathbf{y} = \left[\begin{array}{cccc}
%     x_1 & x_2 & \cdots & \\
%     y_1 & y_2
% \end{array}
% \right]
% $$ 
% denote their interleaving. 
For any sequence $\mathbf{x} \in \Sigma_q^n$, we use $\mathrm{r}_{\mathrm{last}}(\mathbf{x})$ to denote the length of the last run of $\mathbf{x}$.

\begin{lemma}\label{lem:without last bit}
    For a sequence $\mathbf{x} \in \Sigma_q^n$, the number of sequences in $D_2(\mathbf{x}) \setminus D_{1}(\mathbf{x}_{[n-1]})$ is given by
    \begin{align*}
    |D_2(\mathbf{x})| - |U_1(\mathbf{x}_{[n - r_{\mathrm{last}}(\mathbf{x})]})| 
    + \delta_{r_{\mathrm{last}}(\mathbf{x}) = 1} - 1.
    \end{align*}
\end{lemma}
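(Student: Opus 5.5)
The plan is to reduce the statement to counting the runs of $\mathbf{x}_{[n-1]}$. The key observation is that $D_1(\mathbf{x}_{[n-1]})\subseteq D_2(\mathbf{x})$. Indeed, any $\mathbf{y}\in D_1(\mathbf{x}_{[n-1]})$ is obtained from $\mathbf{x}$ by deleting the last symbol $x_n$ together with one more symbol. These are two deletions, so $\mathbf{y}\in D_2(\mathbf{x})$. Consequently
$$
|D_2(\mathbf{x})\setminus D_1(\mathbf{x}_{[n-1]})| = |D_2(\mathbf{x})| - |D_1(\mathbf{x}_{[n-1]})|,
$$
and it remains to show that $|D_1(\mathbf{x}_{[n-1]})| = |U_1(\mathbf{x}_{[n-r_{\mathrm{last}}(\mathbf{x})]})| + 1 - \delta_{r_{\mathrm{last}}(\mathbf{x})=1}$.

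For this I will use Levenshtein's result recalled in Section~\ref{subsec: relevant results}, namely $|D_1^1(\mathbf{z})| = |U_1(\mathbf{z})|$, which equals the number of runs of $\mathbf{z}$. (This is also the case $t=b=1$ of Theorem~\ref{thm: size of burst-deletion ball}, where the two bounds coincide.) Write $r \triangleq r_{\mathrm{last}}(\mathbf{x})$ and compare the run structure of $\mathbf{x}_{[n-1]}$ with that of $\mathbf{x}_{[n-r]}$.

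If $r=1$, then removing $x_n$ deletes the entire last run of $\mathbf{x}$, so $\mathbf{x}_{[n-1]}=\mathbf{x}_{[n-r]}$. Hence $|D_1(\mathbf{x}_{[n-1]})| = |U_1(\mathbf{x}_{[n-r]})|$, which matches the claimed formula because $\delta_{r=1}=1$. If $r\ge 2$, then $\mathbf{x}_{[n-1]}$ consists of $\mathbf{x}_{[n-r]}$ followed by a nonempty run of length $r-1$. That run has a symbol different from the last symbol of $\mathbf{x}_{[n-r]}$, by maximality of runs. Therefore $\mathbf{x}_{[n-1]}$ has exactly one more run than $\mathbf{x}_{[n-r]}$, giving $|U_1(\mathbf{x}_{[n-r]})|+1$. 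This again matches the formula, since now $\delta_{r=1}=0$. Substituting into the displayed identity yields the lemma in both cases.

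The only delicate point is the degenerate case where $\mathbf{x}$ is constant, so that $n-r=0$ and $\mathbf{x}_{[n-r]}=\mathbf{x}_\emptyset$. There I adopt the convention $|U_1(\mathbf{x}_\emptyset)|=0$, the number of runs of the empty word, and note that the standing assumption $n\ge tb+1=3$ forces $r=n\ge 3$. Then $|D_1(\mathbf{x}_{[n-1]})|=1$, and the formula gives $0+1-0=1$, so the statement remains consistent. Apart from this boundary convention, the argument is a direct run count.
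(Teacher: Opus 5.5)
Your proof is correct, but it reaches the count by a different and more elementary route than the paper. Both arguments start from $D_1(\mathbf{x}_{[n-1]})\subseteq D_2(\mathbf{x})$ and reduce the lemma to computing $|D_1(\mathbf{x}_{[n-1]})|$.

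The paper computes this inside $D_2(\mathbf{x})$ using maximal compact representations. It asserts, without a detailed argument, that every element of $D_1(\mathbf{x}_{[n-1]})$ has maximal representation either $\mathrm{Del}_1(\mathbf{x},(i,1),(n,1))$ with $i\in[n-r_{\mathrm{last}}(\mathbf{x})]$ and $x_i\neq x_{i+1}$, or $\mathrm{Del}_1(\mathbf{x},(n-1,2))$. It counts these and removes the double count that arises when $r_{\mathrm{last}}(\mathbf{x})=1$.

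You instead apply the single-deletion run count $|D_1(\mathbf{z})|=|U_1(\mathbf{z})|$ directly to $\mathbf{z}=\mathbf{x}_{[n-1]}$, and compare its runs with those of $\mathbf{x}_{[n-r_{\mathrm{last}}(\mathbf{x})]}$. This has three advantages:
\begin{enumerate}
\item It is shorter.
\item It avoids having to justify the structural claim about maximal representations.
\item It makes explicit the convention $|U_1(\mathbf{x}_\emptyset)|=0$ for constant $\mathbf{x}$. This convention is genuinely needed, since the literal definition of $U_b$ gives $U_1(\mathbf{x}_\emptyset)=\{0\}$. It also agrees with the paper's later use $|U_1(\mathbf{x}^{(2)}_{[n/2-r_{\mathrm{last}}(\mathbf{x}^{(2)})]})|=r_2-1$ when $r_2=1$.
\end{enumerate}

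What the paper's route buys is the structural fact behind the number. $D_1(\mathbf{x}_{[n-1]})$ is exactly the set of elements of $D_2(\mathbf{x})$ whose maximal representation deletes the last position. That is the form in which the lemma is used in Case~2 of Lemma~\ref{lem:two_pairs}, where the even-position $2$-burst deletions are identified with $D_2(\mathbf{x}^{(2)})\setminus D_1(\mathbf{x}^{(2)}_{[n/2-1]})$. Your argument proves the lemma as stated, but it does not by itself supply that correspondence.
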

\begin{IEEEproof}
    Note that $D_{1}(\mathbf{x}_{[n-1]})\subseteq D_{2}(\mathbf{x})$, so it suffices to show that 
    $$
    |D_{1}(\mathbf{x}_{[n-1]})|=|U_1(\mathbf{x}_{[n - r_{\mathrm{last}}(\mathbf{x})]})| 
    - \delta_{r_{\mathrm{last}}(\mathbf{x}) = 1} + 1.
    $$
    Recall that by Proposition~\ref{prop: max rep of seq in D_t^b(x) is unique}, every sequence in $D_{2}(\mathbf{x})$ admits a unique maximal representation. Moreover, for any sequence $\mathbf{y} \in D_{1}(\mathbf{x}_{[n-1]})$, its maximal representation in $D_{2}(\mathbf{x})$ is either of the form $\mathrm{Del}_1(\mathbf{x}, (i, 1), (n, 1))$ for some $i \in [n - r_{\mathrm{last}}(\mathbf{x})]$ with $x_i \neq x_{i+1}$, or of the form $\mathrm{Del}_1(\mathbf{x}, (n - 1, 2))$.
    
    Since the number of indices $i \in [n - r_{\mathrm{last}}(\mathbf{x})]$ such that $x_i \neq x_{i+1}$ is exactly $|U_1(\mathbf{x}_{[n - r_{\mathrm{last}}(\mathbf{x})]})|$, there are $|U_1(\mathbf{x}_{[n - r_{\mathrm{last}}(\mathbf{x})]})|$ distinct sequences in $D_{1}(\mathbf{x}_{[n-1]})$ whose maximal representations in $D_{2}(\mathbf{x})$ are of the form $\mathrm{Del}_1(\mathbf{x}, (i, 1), (n, 1))$. Moreover, exactly one sequence in $D_{1}(\mathbf{x}_{[n-1]})$ has maximal representation $\mathrm{Del}_1(\mathbf{x}, (n-1, 2))$ in $D_{2}(\mathbf{x})$.

    Since $n - r_{\mathrm{last}}(\mathbf{x}) \in U_1(\mathbf{x}_{[n - r_{\mathrm{last}}(\mathbf{x})]})$ by definition, when $r_{\mathrm{last}}(\mathbf{x}) = 1$, the representations $\mathrm{Del}_1(\mathbf{x}, (n - r_{\mathrm{last}}(\mathbf{x}), 1), (n, 1))$ and $\mathrm{Del}_1(\mathbf{x}, (n - 1, 2))$ correspond to the same sequence. Therefore, there are
    \begin{align*}
    |U_1(\mathbf{x}_{[n - r_{\mathrm{last}}(\mathbf{x})]})| 
    - \delta_{r_{\mathrm{last}}(\mathbf{x}) = 1} + 1
     \end{align*}
     distinct sequences in $D_{1}(\mathbf{x}_{[n-1]})$, as claimed. This yields the desired result and completes the proof.
\end{IEEEproof}

\begin{lemma}\label{lem:two_pairs}
    For any positive even integer $n$ and any sequence $\mathbf{x} \in \Sigma_q^n$ with array representation $\begin{bmatrix}
    \mathbf{x}^{(1)}; 
    \mathbf{x}^{(2)}
    \end{bmatrix},$
    it holds that
    \begin{align}
    |D_2^2(\mathbf{x})|
    &= |D_2(\mathbf{x}^{(1)})| - 1 \nonumber \\
    &\quad + |D_2(\mathbf{x}^{(2)})| 
    - |U_1(\mathbf{x}^{(2)}_{[n/2 - r_{\mathrm{last}}(\mathbf{x}^{(2)})]})| 
    + \delta_{r_{\mathrm{last}}(\mathbf{x}^{(2)}) = 1} \nonumber \\
    &\quad + \sum_{i \in U_1(\mathbf{x}^{(1)})} 
    \left( |U_1(D_1(\mathbf{x}^{(2)}, i))| 
    - \delta_{x^{(2)}_{i-1} \neq x^{(2)}_{i+1}} - 1 \right). \label{eq: D_2^2(x)}
    \end{align}
    Here, we define $\delta_{x^{(2)}_{i-1} \neq x^{(2)}_{i+1}} = 0$ when $i = 1$ or $i = n/2$.
\end{lemma}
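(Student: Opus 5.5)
The plan is to use the array representation to reduce the count of $D_2^2(\mathbf{x})$ to counting pairs of subsequences of the two rows, and then to count those pairs with the maximal compact representations of Propositions~\ref{prop: max rep of seq in D_t^b(x) is unique} and~\ref{prop: every seq in D_t^b(x) has maximal rep}. Write $m=n/2$. A $2$-burst deletion starting at an odd position $2j-1$ removes the column entries $x^{(1)}_j,x^{(2)}_j$. One starting at an even position $2j$ removes $x^{(2)}_j$ and $x^{(1)}_{j+1}$. Hence every $\mathbf{y}\in D_2^2(\mathbf{x})$ has array representation $[\mathbf{y}^{(1)};\mathbf{y}^{(2)}]$ with $\mathbf{y}^{(1)}\in D_2(\mathbf{x}^{(1)})$ and $\mathbf{y}^{(2)}\in D_2(\mathbf{x}^{(2)})$. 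Since $\mathbf{y}$ is determined by its two rows, $|D_2^2(\mathbf{x})|$ equals the number of \emph{compatible} pairs $(\mathbf{y}^{(1)},\mathbf{y}^{(2)})$. A pair is compatible when some deleted column sets $\{a_1,a_2\}$ of $\mathbf{x}^{(1)}$ and $\{a_1',a_2'\}$ of $\mathbf{x}^{(2)}$ realize the two rows with $a_k'\in\{a_k,a_k-1\}$, in a consistent order and disjointness pattern.

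I would then partition $D_2^2(\mathbf{x})$ by the maximal compact representation of $\mathbf{y}$ in $D_2^2(\mathbf{x})$. The position of its first block determines which column of $\mathbf{x}^{(1)}$ is hit first. There are three classes, matched to the three lines of \eqref{eq: D_2^2(x)}.

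\textbf{Class 1.} Both bursts act in the way that pairs the deleted columns of the two rows exactly. I expect these sequences to be parametrized by $\mathbf{y}^{(1)}\in D_2(\mathbf{x}^{(1)})$, giving $|D_2(\mathbf{x}^{(1)})|$. One coincidence must be subtracted: the configuration deleting the last two columns is also produced in another class.

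\textbf{Class 2.} The deletions are essentially seen only through row $2$, i.e., the row-$1$ effect is the trivial one at the boundary. These are parametrized by $D_2(\mathbf{x}^{(2)})\setminus D_1(\mathbf{x}^{(2)}_{[m-1]})$. By Lemma~\ref{lem:without last bit} applied to $\mathbf{x}^{(2)}$, their number is $|D_2(\mathbf{x}^{(2)})|-|U_1(\mathbf{x}^{(2)}_{[m-r_{\mathrm{last}}(\mathbf{x}^{(2)})]})|+\delta_{r_{\mathrm{last}}(\mathbf{x}^{(2)})=1}-1$. The $-1$ here and the $-1$ in line one combine with the boundary sequence counted once, and that bookkeeping produces the exact first two lines.

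\textbf{Class 3.} The two bursts are of mixed type, i.e., row $1$ and row $2$ lose different column indices. Fix the column $i$ of $\mathbf{x}^{(1)}$ deleted by the first block. By maximality (condition \eqref{eq: maximal cond}) $i$ may be taken in $U_1(\mathbf{x}^{(1)})$, the last position of a run. Given $i$, the remaining freedom is a single deletion in $D_1(\mathbf{x}^{(2)},i)$, the row-$2$ sequence after column $i$ is removed. The number of distinct outcomes is the number of runs, $|U_1(D_1(\mathbf{x}^{(2)},i))|$. From this I subtract the outcomes already counted in Class 1, which is exactly one (deleting the neighbouring column again). I also subtract one more when $x^{(2)}_{i-1}\neq x^{(2)}_{i+1}$, because then removing column $i$ merges no runs and one run-deletion coincides with a Class 1/2 outcome. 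At the endpoints $i=1$ or $i=m$ no such merge is possible, matching the stated convention $\delta=0$.

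The main obstacle is proving that these three classes are disjoint and exhaustive, and that the subtracted terms are exactly the overlaps, neither more nor fewer. I would handle this by working with maximal compact representations throughout, so that each $\mathbf{y}$ is assigned to a unique class via Proposition~\ref{prop: max rep of seq in D_t^b(x) is unique}. For Class 3, for each fixed $i$ I would verify through condition \eqref{eq: maximal cond} which run-deletions in $D_1(\mathbf{x}^{(2)},i)$ give maximal representations. Distinctness across different $i$ follows since $\mathbf{y}^{(1)}$ already determines $i$. The delicate checks are the local cases around columns $i-1,i,i+1$ and the last run of $\mathbf{x}^{(2)}$.
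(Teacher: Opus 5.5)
Your skeleton matches the paper's. You count maximal compact representations in $D_2^2(\mathbf{x})$, split them by the parities of the burst starting positions (all odd, all even, mixed), and invoke Lemma~\ref{lem:without last bit} for the all-even class.

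The gap is in the correction terms, which are the real content of \eqref{eq: D_2^2(x)}. You explain every one of them as an overlap between classes. But classes defined through the unique maximal representation are disjoint by Proposition~\ref{prop: max rep of seq in D_t^b(x) is unique}, so nothing is subtracted for that reason.

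Concretely, Class~1 has exactly $|D_2(\mathbf{x}^{(1)})|$ elements, not $|D_2(\mathbf{x}^{(1)})|-1$. Starts $2a-1$ in $\mathbf{x}$ correspond to starts $a$ in $\mathbf{x}^{(1)}$. Condition \eqref{eq: maximal cond} for blocks $(2a-1,1)$ and $(2a-1,2)$, including the vacuous boundary cases, translates verbatim into maximality in $D_2(\mathbf{x}^{(1)})$. For instance, in the paper's example $\mathbf{x}=00011011$, Class~1 is $\{1011,0010,0001\}$, of size $|D_2(0011)|=3$. That $0001$ also arises from the even starts $\{4,6\}$ is irrelevant, since its maximal representation is $(5,2)$.

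The $-1$ in the first line of \eqref{eq: D_2^2(x)} is simply the $-1$ of Lemma~\ref{lem:without last bit} moved up from the second line. With your accounting the total comes out one short. Your sentence about the two $-1$'s ``combining with the boundary sequence'' does not recover the missing unit.

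Your description of Class~2 is also off. Even starts $2a_1,2a_2$ remove columns $a_1,a_2$ of row~$2$ and columns $a_1+1,a_2+1$ of row~$1$. The exclusion of $D_1(\mathbf{x}^{(2)}_{[m-1]})$ comes from the constraint $a_j\le m-1$, not from a trivial row-$1$ effect.

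In Class~3 the misattribution leads to claims that are false. Index by the column $i$ of the \emph{odd} burst, which may be either block. If the even burst at $2c$ comes first, its maximality says $x^{(2)}_c\neq x^{(2)}_{c+1}$, not $c+1\in U_1(\mathbf{x}^{(1)})$.

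Fix $i\in U_1(\mathbf{x}^{(1)})$. An even burst at $2c$ with $c\le i-2$ or $i+1\le c\le m-1$ corresponds to position $c'=c$, respectively $c'=c-1$, of $D_1(\mathbf{x}^{(2)},i)$. Under this correspondence, \eqref{eq: maximal cond} becomes exactly $c'\in U_1(D_1(\mathbf{x}^{(2)},i))$.

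The $-1$ appears because the last run end $c'=m-1$ would require an even burst starting at $2m=n$. When $i=m$, it would instead require one at $2m-2$, overlapping the odd burst. The $-\delta$ appears because $c'=i-1$ would require an even burst at $2i-2$, overlapping the odd burst at $2i-1$. This $c'$ is a run end iff $x^{(2)}_{i-1}\neq x^{(2)}_{i+1}$.

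Those configurations are not in $D_2^2(\mathbf{x})$ at all. They are not ``outcomes already counted in Class~1/2,'' so trying to prove the coincidences you assert would fail.

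Likewise, distinctness across different $i$ does not follow from $\mathbf{y}^{(1)}$, which in general does not determine the deleted positions. It follows from uniqueness of the maximal representation. That uniqueness is exactly what makes the three class counts add up with no overlap terms.
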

\begin{IEEEproof}
    By Proposition \ref{prop: max rep of seq in D_t^b(x) is unique}, it suffices to count the number of distinct maximal representations of sequences in $D_2^2(\mathbf{x})$.

    For any $\mathbf{y} \in D_2^2(\mathbf{x})$, its maximal representation has the following two possible forms:
    \begin{itemize}
        \item [(i)] $\mathrm{Del}_2(\mathbf{x}, (i_1,1),(i_2,1))$ with $i_2>i_1+2$ and $x_{i_j}\neq x_{i_j+2}$ for $j\in\{1,2\}$;
        \item [(ii)] $\mathrm{Del}_2(\mathbf{x}, (i,2))$ with $x_i\neq x_{i+2}$ and $x_{i+2}\neq x_{i+4}$.
    \end{itemize}
    Next, we split the discussion according to the parity of the deletion positions.

    \textbf{Case 1.}
    If all deletion positions are odd (i.e., $i_1\equiv i_2\equiv 1 \pmod{2}$ or $i\equiv 1 \pmod{2}$), then both $2$-burst deletions occur in $\mathbf{x}^{(1)}$. When $\mathbf{y}$ has a maximal representation of form (i), the condition $x_{i_j}\neq x_{i_j+2}$ for $j\in \{1,2\}$ implies that 
    $$
    x^{(1)}_{(i_j+1)/2} \neq x^{(1)}_{(i_j+1)/2 + 1}, ~j \in \{1,2\}.
    $$
    When $\mathbf{y}$ has a maximal representation of form (ii), the conditions $x_i\neq x_{i+2}$ and $x_{i+2}\neq x_{i+4}$ imply that
    $$
    x^{(1)}_{(i+1)/2} \neq x^{(1)}_{(i+1)/2 + 1}, ~x^{(1)}_{(i+1)/2 + 1} \neq x^{(1)}_{(i+1)/2 + 2}.
    $$
    This establishes a one-to-one correspondence between 
    $$ 
    \left\{ \mathbf{y} \in D_2^2(\mathbf{x}) :~ \begin{array}{l} \mathbf{y}\ \text{has a maximal representation of the form} \\ \mathrm{Del}_2(\mathbf{x}, (i_1,1),(i_2,1)) \text{ with } i_1 \equiv i_2 \equiv 1 \pmod{2}, \\ \text{or of the form } \mathrm{Del}_2(\mathbf{x}, (i,2)) \text{ with } i \equiv 1 \pmod{2} \end{array} \right\} 
    $$ 
    and $D_2(\mathbf{x}^{(1)})$. Consequently, the number of such sequences is exactly $|D_2(\mathbf{x}^{(1)})|$.

    \textbf{Case 2.}
    If all deletion positions are even (i.e., $i_1\equiv i_2\equiv 0 \pmod{2}$ or $i\equiv 0 \pmod{2}$), then both $2$-burst deletions occur in $\mathbf{x}^{(2)}$. Moreover, the second $2$-burst deletion cannot happen at position $n/2$ of $\mathbf{x}^{(2)}$. When $\mathbf{y}$ has a maximal representation of form (i), the condition $x_{i_j}\neq x_{i_j+2}$ for $j\in \{1,2\}$ implies that 
    $$
    x^{(2)}_{i_j/2} \neq x^{(2)}_{i_j/2 + 1}, ~j \in \{1,2\}.
    $$
    When $\mathbf{y}$ has a maximal representation of form (ii), the conditions $x_i\neq x_{i+2}$ and $x_{i+2}\neq x_{i+4}$ imply that
    $$
    x^{(2)}_{i/2} \neq x^{(2)}_{i/2 + 1}, ~x^{(2)}_{i/2 + 1} \neq x^{(2)}_{i/2 + 2}.
    $$
    This establishes a one-to-one correspondence between 
    $$ 
    \left\{ \mathbf{y} \in D_2^2(\mathbf{x}) :~ \begin{array}{l} \mathbf{y}\ \text{has a maximal representation of the form} \\ \mathrm{Del}_2(\mathbf{x}, (i_1,1),(i_2,1)) \text{ with } i_1 \equiv i_2 \equiv 0 \pmod{2}, \\ \text{or of the form } \mathrm{Del}_2(\mathbf{x}, (i,2)) \text{ with } i \equiv 0 \pmod{2} \end{array} \right\} 
    $$ 
    and $D_2(\mathbf{x}^{(2)})\setminus D_{1}(\mathbf{x}^{(2)}_{[n/2-1]})$. Consequently, by Lemma \ref{lem:without last bit}, the number of such sequences is
    $$
    |D_2(\mathbf{x}^{(2)})| - |U_1(\mathbf{x}^{(2)}_{[n/2 - r_{\mathrm{last}}(\mathbf{x}^{(2)})]})| + \delta_{r_{\mathrm{last}}(\mathbf{x}^{(2)}) = 1} - 1.
    $$

    \textbf{Case 3.}
    If $i_1\not\equiv i_2 \pmod{2}$, assume w.l.o.g.\ that $i_1$ is odd. Then $x_{i_1}\neq x_{i_1+2}$ implies $(i_1+1)/2\in U_1(\mathbf{x}^{(1)})$. The second deletion (at an even position) corresponds to deleting an element in $\mathbf{x}^{(2)}$ after removing position $(i_1+1)/2$, yielding an index in $U_1(D_1(\mathbf{x}^{(2)}, (i_1+1)/2))$ excluding $(i_1-1)/2$ and the boundary position $n/2$. This establishes a one-to-one correspondence between 
    $$ 
    \left\{ \mathbf{y} \in D_2^2(\mathbf{x}) :~ \begin{array}{l} \mathbf{y}\ \text{has a maximal representation of the form} \\ \mathrm{Del}_2(\mathbf{x}, (i_1,1),(i_2,1)) \text{ with } i_1 \not\equiv i_2 \pmod{2} \end{array} \right\} 
    $$ 
    and pairs $(i,i')$ such that
    $$
    i\in U_1(\mathbf{x}^{(1)}),~ i'\in U_1(D_1(\mathbf{x}^{(2)}, i))\setminus\{n/2,i-1\}.
    $$
    Thus, the number of such sequences equals
    $$
    \sum_{i\in U_1(\mathbf{x}^{(1)})}\big(|U_1(D_1(\mathbf{x}^{(2)}, i))|- \delta_{x^{(2)}_{i-1}\neq x^{(2)}_{i+1}} - 1\big).
    $$
    
    Combining all cases yields the desired result.
\end{IEEEproof}

Next, we provide an example to illustrate how to determine the size of $D_{2}^{2}(\mathbf{x})$ using Lemma \ref{lem:two_pairs}.

\begin{example}
Let $\mathbf{x} = 00011011\in \Sigma_2^8$. Then we have $\mathbf{x}^{(1)} = 0011$ and $\mathbf{x}^{(2)} = 0101$. One can easily verify that $U_1(\mathbf{x}^{(1)}) = \{2,4\}$, $U_1(\mathbf{x}^{(2)}) = \{1,2,3,4\}$, and $r_{\mathrm{last}}(\mathbf{x}^{(2)})=1$. This implies that $|D_2(\mathbf{x}^{(1)})| = 3$, $|D_2(\mathbf{x}^{(2)})| = 4$, and $|U_1(\mathbf{x}^{(2)}_{[3]})| = 3$. Moreover, since $D_1(\mathbf{x}^{(2)}, 2) = 001$ and $D_1(\mathbf{x}^{(2)}, 4) = 010$, Lemma \ref{lem:two_pairs} implies that 
\begin{align*}
    |D_2^2(\mathbf{x})| =\; & |D_2(\mathbf{x}^{(1)})| - 1\\
    & + |D_2(\mathbf{x}^{(2)})| - |U_1(\mathbf{x}^{(2)}_{[n/2 - r_{\mathrm{last}}(\mathbf{x}^{(2)})]})| + \delta_{r_{\mathrm{last}}(\mathbf{x}^{(2)}) = 1}\\
    & + \sum_{i \in U_1(\mathbf{x}^{(1)})} \left( |U_1(D_1(\mathbf{x}^{(2)}, i))| - \delta_{x^{(2)}_{i-1} \neq x^{(2)}_{i+1}} - 1 \right)\\
    =\; & 3 - 1 + 4 - 3 + 1 + 1 + 2 = 7.
\end{align*}

On the other hand, one can verify that 
\begin{align*}
    D_2^2(\mathbf{x}) = \{1011, 0011, 0001, 0111, 0110, 0010, 0000\},
\end{align*}
which confirms the correctness of the above calculation.
\end{example}

For a positive integer $r$, define
\begin{equation}\label{eq: def V_r}
    V_r \triangleq
    \begin{cases}
        010101\ldots 01^{n / 2 - r + 1}, & \text{if $2\mid r$}; \\
        010101\ldots 010^{n / 2 - r + 1}, & \text{if $2\nmid r$}.
    \end{cases}
\end{equation}
For any even integer $n$, it was shown in \cite{li2026number} that for any $\mathbf{z}\in \Sigma_q^{n/2}$ with $r$ runs, we have $|D_2(\mathbf{z})|\geq |D_2(V_r)|$. In \cite{liron2015characterization}, $|D_2(V_r)|$ is characterized for all values of $n$ and $r$, as stated in the following lemma.

\begin{lemma}[Lemma 15, \cite{liron2015characterization}]\label{lem: LL}
For all integers $r$ with $0 < r \leq n$ and $n > 2$, let $u(n, r) = |D_2(V_r)|$. Then
\begin{align*}
    u(n, r) = \begin{cases}
    r, & \text{if } r = 1, 2, \\
    2, & \text{if } r = n = 3, \\
    d(n, 2), & \text{if } r = n, \\
    d(r, 2) + 1, & \text{otherwise},
    \end{cases}
\end{align*}
where $d(n, t) = \sum_{i = 0}^{t} \binom{n - t}{i}$. We set $d(n, 0) = 1$, and for $t < 0$, $d(n, t) = 0$.
\end{lemma}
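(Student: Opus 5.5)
We plan to compute $|D_2(V_r)|$ directly. By Proposition~\ref{prop: max rep of seq in D_t^b(x) is unique}, $|D_2(V_r)|$ equals the number of maximal compact representations, and for $b=1$ Proposition~\ref{prop: every seq in D_t^b(x) has maximal rep} lists them explicitly. Here $V_r$ has length $n$. It consists of an alternating binary prefix of $r-1$ symbols followed by a last run of length $n-r+1$.

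For $t=2$, $b=1$, a maximal compact representation takes one of two forms:
\begin{itemize}
\item[(a)] $\mathrm{Del}_1(V_r,(i,1),(j,1))$ with $j>i+1$ and both $i,j\in U_1(V_r)$, i.e.\ each deleted position is the last position of its run;
\item[(b)] $\mathrm{Del}_1(V_r,(i,2))$ with $x_{i+2}\neq x_i$ and $x_{i+2}\neq x_{i+1}$. This condition is vacuous when $i+2>n$, i.e.\ $i=n-1$.
\end{itemize}
So the plan is to first identify $U_1(V_r)$, and then count forms (a) and (b) separately.

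\textbf{Step 1: the set $U_1(V_r)$.} For $r<n$ we have $U_1(V_r)=\{1,\dots,r-1\}\cup\{n\}$, which has $r$ elements. For $r=n$ we have $U_1(V_r)=[n]$.

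\textbf{Step 2: count form (a).} We choose two elements of $U_1(V_r)$ and exclude adjacent pairs.
\begin{itemize}
\item If $r<n$, the adjacent pairs are $\{a,a+1\}$ with $a+1\le r-1$, which gives $r-2$ pairs. The pair $\{r-1,n\}$ is non-adjacent because $n>r$. The count is $\binom{r}{2}-(r-2)$.
\item If $r=n$, the count is $\binom{n}{2}-(n-1)$.
\end{itemize}

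\textbf{Step 3: count form (b).} Since $V_r$ is binary, $x_{i+2}$ must differ from both $x_i$ and $x_{i+1}$, which is impossible whenever $i+2\le n$. Hence only $i=n-1$ survives. It is a valid maximal representation, since its condition is vacuous, and it contributes exactly $1$.

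\textbf{Step 4: compare with $d(\cdot,2)$.} For $3\le r<n$ we use the identity
\[
\binom{r}{2}-(r-2)=\binom{r-2}{2}+(r-2)+1=d(r,2).
\]
Adding the contribution of form (b) gives $d(r,2)+1$. For $r=n$ the total is $\binom{n}{2}-(n-1)+1=d(n,2)$, by the same identity with $r$ replaced by $n$.

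\textbf{Step 5: small cases.} These are checked by hand, since there the generic counting either degenerates or the formula $d(r,2)+1$ does not apply.
\begin{itemize}
\item $r=1$: the sequence is constant, so $|D_2(V_1)|=1$.
\item $r=2$: a run-count argument gives $2$. Deleting two symbols from $0\,1^{n-1}$ yields either $1^{n-2}$ or $0\,1^{n-3}$.
\item $r=n=3$: $D_2(010)=\{0,1\}$, so the value is $2$.
\end{itemize}

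The main obstacle is making sure the maximal representations of form (a) involving the last run are correctly restricted. The only admissible deletion position inside the long last run is its endpoint $n$. We must also verify that $\mathrm{Del}_1(V_r,(n-1,2))$ never coincides with an (a)-type representation; this is automatic from Proposition~\ref{prop: max rep of seq in D_t^b(x) is unique} once maximality has been checked, which is why we work through Proposition~\ref{prop: every seq in D_t^b(x) has maximal rep} rather than listing subsequences directly.
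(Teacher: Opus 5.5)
The paper does not prove this lemma. It is quoted from \cite{liron2015characterization} and used as a black box in Corollary~\ref{coro: explicit lb on D_2^2} and Appendix~A. Your argument is therefore a genuinely different, self-contained route.

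You specialize the paper's own machinery (Propositions~\ref{prop: max rep of seq in D_t^b(x) is unique} and~\ref{prop: every seq in D_t^b(x) has maximal rep}) to $t=2$, $b=1$. Then $|D_2(V_r)|$ becomes the number of non-adjacent pairs in $U_1(V_r)$ plus the number of admissible blocks $(i,2)$. You correctly read $V_r$ as having length $n$ here, whereas the display \eqref{eq: def V_r} is written for length $n/2$.

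The bookkeeping is right:
\begin{itemize}
\item For $2\le r<n$, one has $|U_1(V_r)|=r$ with exactly $r-2$ adjacent pairs.
\item The identity $\binom{r}{2}-(r-2)=d(r,2)$ holds.
\item For $r=n$, the total is $\binom{n}{2}-(n-1)+1=\binom{n}{2}-(n-2)=d(n,2)$.
\end{itemize}
Compared with the citation, your route keeps the lemma inside the paper's framework and explains the shape of the formula. It also shows the case split is largely cosmetic: the generic count already gives $2$ for $r=2$ and for $r=n=3$. Only $r=1$ needs separate treatment, since there the count of $r-2$ adjacent pairs breaks down.

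One step must be repaired. In Step~3 you claim that, because $V_r$ is binary, $x_{i+2}\notin\{x_i,x_{i+1}\}$ is impossible whenever $i+2\le n$. Binarity alone does not give this. Over $\{0,1\}$ the condition holds exactly when $x_i=x_{i+1}\neq x_{i+2}$. For instance, for $\mathbf{x}=0010$ and $i=1$, $\mathrm{Del}_1(\mathbf{x},(1,2))$ is a maximal representation.

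The correct reason is structural. In $V_r$, an equality $x_i=x_{i+1}$ occurs only for $i\ge r$, i.e.\ inside the final run. Then $x_{i+2}$ (when $i+2\le n$) lies in the same run, so $x_{i+2}=x_{i+1}$. With this justification in place of the binarity argument, Step~3, and hence the whole count, goes through.
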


Next, using the above preliminary results, we prove the following improved lower bound on the size of the deletion ball $D_2^2(\mathbf{x})$ for any $\mathbf{x} \in \Sigma_q^n$ with $2 \mid n$.

\begin{theorem}\label{thm: lower_bound_for_2}
    For any even integer $n \geq 4$, let $\mathbf{x} \in \Sigma_q^{n}$ be a sequence with array representation
    $\begin{bmatrix}
    \mathbf{x}^{(1)};
    \mathbf{x}^{(2)}
    \end{bmatrix}$
    such that the numbers of runs of $\mathbf{x}^{(1)}$ and $\mathbf{x}^{(2)}$ are $r_1$ and $r_2$, respectively. Then it holds that
    \begin{align*}
    |D_2^2(\mathbf{x})| &\geq |D_2(V_{r_1})| + |D_2(V_{r_2})| - \min\{r_1, r_2\}\\
    &\quad + (\max\{r_1, r_2\} - 2)(\min\{r_1, r_2\} - 1)\\
    &\quad + \delta_{r_1 = r_2} - \delta_{r_1 = r_2 = 1}.
    \end{align*}
    Furthermore, the lower bound is tight when $\mathbf{x}^{(1)} = V_{r_1}$ and $\mathbf{x}^{(2)} = V_{r_2}$.
\end{theorem}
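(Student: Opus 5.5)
Starting from the exact formula \eqref{eq: D_2^2(x)} of Lemma~\ref{lem:two_pairs}, I will bound each of its three groups of terms from below by a quantity that depends only on $r_1$ and $r_2$, and then check that every inequality becomes an equality when $\mathbf{x}^{(1)}=V_{r_1}$ and $\mathbf{x}^{(2)}=V_{r_2}$.

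\textbf{Single-row terms.} By the result of \cite{li2026number} quoted before Lemma~\ref{lem: LL}, $|D_2(\mathbf{x}^{(j)})|\ge |D_2(V_{r_j})|$ for $j=1,2$. The correction $-|U_1(\mathbf{x}^{(2)}_{[n/2-r_{\mathrm{last}}(\mathbf{x}^{(2)})]})|+\delta_{r_{\mathrm{last}}(\mathbf{x}^{(2)})=1}$ is determined by run counts. Removing the last run leaves a sequence with $r_2-1$ runs, and $|U_1(\cdot)|$ of a sequence equals its number of runs, so the correction equals $-(r_2-1)+\delta_{r_{\mathrm{last}}=1}\ge -(r_2-1)$. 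Since $V_r$ has a long last run, it attains this value (except in the degenerate case $r=n/2$, which I will treat separately). Hence the first two lines of \eqref{eq: D_2^2(x)} are at least $|D_2(V_{r_1})|+|D_2(V_{r_2})|-r_2$.

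\textbf{Cross sum.} For the third line I use a counting argument. The index $i$ ranges over $|U_1(\mathbf{x}^{(1)})|=r_1$ values, one of which is $i=n/2$. Deleting the single symbol $x^{(2)}_i$ lowers the run count of $\mathbf{x}^{(2)}$ by at most $2$, and by exactly $2$ only when $x^{(2)}_{i-1}=x^{(2)}_{i+1}\neq x^{(2)}_i$. Thus $|U_1(D_1(\mathbf{x}^{(2)},i))|\ge r_2-2+\delta_{x^{(2)}_{i-1}\neq x^{(2)}_{i+1}}$ in the interior, so each interior summand is at least $r_2-3$, possibly larger. The summands with $i=1$ or $i=n/2$ are at least $r_2-2$. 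This crude bound gives roughly $r_1(r_2-3)+O(1)$, whereas the target is $(\max-2)(\min-1)$ combined with $-\min$. So I must use the interplay between $\mathbf{x}^{(1)}$ and $\mathbf{x}^{(2)}$: the $r_1$ run-ends $i$ are distinct positions, and at most $r_2-1$ positions $i$ of $\mathbf{x}^{(2)}$ can make the run count drop by $2$. This is because such a drop requires a run of length $1$ of $\mathbf{x}^{(2)}$ at $i$, flanked by equal symbols.

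\textbf{Main obstacle: the symmetric form in $r_1,r_2$.} I will split into the cases $r_1\le r_2$ and $r_1>r_2$. In each case the bad indices (drop by $2$) number at most $\min\{r_1,r_2\}-1$, and the remaining indices lose at most $1$. Summing, the cross sum is at least $(r_1-1)(r_2-2)-(\text{number of bad indices})+(\text{boundary gains})$. Rearranging this together with the $-r_2$ from the single-row terms should give $-\min\{r_1,r_2\}+(\max-2)(\min-1)$. The terms $\delta_{r_1=r_2}$ and $-\delta_{r_1=r_2=1}$ come from the boundary index $i=n/2$ and the degenerate one-run cases, which I will check by hand. Getting this bookkeeping right, in particular when $r_1>r_2$, is the main obstacle: the symmetry is not visible in \eqref{eq: D_2^2(x)}, so I will first test the claimed equality on small $V_{r_1},V_{r_2}$, using Lemma~\ref{lem: LL}, to calibrate the case split.

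\textbf{Tightness.} For $\mathbf{x}^{(1)}=V_{r_1}$ and $\mathbf{x}^{(2)}=V_{r_2}$, every run-end of $V_{r_1}$ lies in the alternating prefix, where $V_{r_2}$ has isolated symbols. The counting bounds are therefore attained exactly, and direct substitution gives equality.
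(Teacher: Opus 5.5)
Your use of Lemma~\ref{lem:two_pairs} and of $|D_2(\mathbf{x}^{(j)})|\ge|D_2(V_{r_j})|$ follows the paper. The core step, however, is only asserted (``should give''): you need a lower bound on $\sum_{i\in U_1(\mathbf{x}^{(1)})}\theta_i$, where $\theta_i\triangleq|U_1(D_1(\mathbf{x}^{(2)},i))|-\delta_{x^{(2)}_{i-1}\neq x^{(2)}_{i+1}}-1$. The accounting you sketch cannot supply it when $r_1>r_2$.

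Give each non-last index a baseline of $r_2-2$, subtract one more per bad index, and credit the last index in full. Writing $\beta$ for the number of bad indices, the best lower bound this gives for $\sum_i\theta_i+\delta_{r_{\mathrm{last}}(\mathbf{x}^{(2)})=1}$ is $(r_1-1)(r_2-2)-\beta+(r_2-1)$. For $r_1\ge r_2\ge 2$ your route needs $(r_1-2)(r_2-1)+\delta_{r_1=r_2}$. The difference is $2r_2-r_1-1-\beta-\delta_{r_1=r_2}$, which is negative once $r_1\ge 2r_2$ and decreases linearly in $r_1$. For the paper's own example $\mathbf{x}=00110111$ ($r_1=4$, $r_2=2$, $|D_2^2(\mathbf{x})|=6$, equal to the bound), your scheme certifies at most $5$.

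The missing idea is that the deficit $r_2-1-\theta_i$ lives only at run boundaries of $\mathbf{x}^{(2)}$. If $i$ is neither the first nor the last position of its run, deleting $x^{(2)}_i$ keeps all $r_2$ runs and $x^{(2)}_{i-1}=x^{(2)}_{i+1}$, so $\theta_i=r_2-1$ exactly. Hence each middle run contributes deficit at most $2$, the first run at most $1$, and the last run at most $1$ (after the pairing below). The total deficit is therefore at most $2(r_2-1)$, independently of $r_1$. When $r_1=r_2$ it is at most $2(r_2-1)-1$, since $U_1(\mathbf{x}^{(1)})$ is then too small to meet every run and also meet the last run twice. This is exactly what produces $(r_1-2)(r_2-1)+\delta_{r_1=r_2}$.

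Your notion of ``bad'' is also off for $q\ge3$. A length-one middle run between two distinct symbols also has $\theta_i=r_2-3$: the run count drops by one and the $-\delta$ term removes another. The right notion is ``$i$ forms a middle run of length one''. Your crude bound $\theta_i\ge r_2-3$ does suffice when $r_1<r_2$, where the target is $r_1(r_2-3)+2$.

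Two further repairs are needed.

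First, you drop $\delta_{r_{\mathrm{last}}(\mathbf{x}^{(2)})=1}\ge0$ in the single-row terms and separately use $\theta_{n/2}\ge r_2-2$. When the last run of $\mathbf{x}^{(2)}$ has length one, both estimates are attained and you lose $1$ that nothing recovers. For $n=6$ and $\mathbf{x}=000001$ ($r_1=1$, $r_2=2$) one has $|D_2^2(\mathbf{x})|=2$, equal to the bound, but your bookkeeping gives only $1$. You must keep the identity $\theta_{n/2}+\delta_{r_{\mathrm{last}}(\mathbf{x}^{(2)})=1}=r_2-1$, using that $n/2\in U_1(\mathbf{x}^{(1)})$ always, as the paper does.

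Second, your tightness argument is wrong. For $r_1>r_2$, the run-ends $r_2,\ldots,r_1-1$ of $V_{r_1}$ lie in the long last run of $V_{r_2}$, not among isolated symbols. Equality holds there precisely because those indices are run-interior (or the first symbol of the last run), i.e.\ by the mechanism above. For $2\le r_1<r_2$, the configuration $[V_{r_1};V_{r_2}]$ is not extremal at all. The index $i=1$ lies in the length-one first run of $V_{r_2}$, so $\theta_1=r_2-2$ rather than $r_2-3$. For example, $\mathbf{x}=00111011=[V_2;V_4]$ has $|D_2^2(\mathbf{x})|=7$ while the bound is $6$.

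The paper verifies tightness only for $r_1\ge r_2$. For $r_1<r_2$, take $[\mathrm{rev}(V_{r_1});\mathrm{rev}(V_{r_2})]$ (here $11101100$, of size $6$). This works because reversing $\mathbf{x}$ reverses both rows and interchanges them, which is also the correct justification of the paper's ``without loss of generality $r_1\ge r_2$''.
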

\begin{IEEEproof}
    By Lemma \ref{lem:two_pairs}, it suffices to show that the right-hand side of \eqref{eq: D_2^2(x)} is lower bounded by the claimed result for all $\mathbf{x}^{(1)}, \mathbf{x}^{(2)} \in \Sigma_q^{n/2}$ with $r_1$ and $r_2$ runs, respectively.

    By \cite[Theorem 7]{li2026number}, we have
    \begin{align}\label{eq: lb of D_2(x)}
        |D_2(\mathbf{x}^{(1)})|\geq |D_2(V_{r_1})|~\text{and}~|D_2(\mathbf{x}^{(2)})|\geq |D_2(V_{r_2})|.
    \end{align}
    This yields $|D_2(V_{r_1})|$ and $|D_2(V_{r_2})|$ as lower bounds. Moreover, one can easily verify that 
    \begin{align}\label{eq: U_1(x_{n-r_L})}
        |U_1(\mathbf{x}^{(2)}_{[n/2 - r_{\mathrm{last}}(\mathbf{x}^{(2)})]})| = r_2 - 1.
    \end{align} 
     
    It remains to bound the following term:
    $$
    \sum_{i\in U_1(\mathbf{x}^{(1)})} \left(|U_1(D_1(\mathbf{x}^{(2)}, i))| - \delta_{x^{(2)}_{i - 1} \neq x^{(2)}_{i + 1}} - 1\right) + \delta_{r_{\mathrm{last}}(\mathbf{x}^{(2)}) = 1}.
    $$  
    For simplicity, we assume without loss of generality that $r_1 \geq r_2$; otherwise, we may consider the sequence with array representation
    $\begin{bmatrix}
     \mathbf{x}^{(2)};
     \mathbf{x}^{(1)}
    \end{bmatrix}.$

    We start with the simplest case: $r_2 = 1$. Since $\mathbf{x}^{(2)}$ consists of only one run, we have $|U_1(D_1(\mathbf{x}^{(2)}, i))| = 1$ and $\delta_{x^{(2)}_{i - 1} \neq x^{(2)}_{i + 1}} = 0$ for all $i \in [n/2]$. Moreover, since $n/2 \geq 2$, we also have $\delta_{r_{\mathrm{last}}(\mathbf{x}^{(2)}) = 1} = 0$. Thus, we conclude that
    \begin{align}\label{eq: r_2=1}
     \sum_{i \in U_1(\mathbf{x}^{(1)})} \left(|U_1(D_1(\mathbf{x}^{(2)}, i))| - \delta_{x^{(2)}_{i - 1} \neq x^{(2)}_{i + 1}} - 1\right) + \delta_{r_{\mathrm{last}}(\mathbf{x}^{(2)}) = 1} = 0.
    \end{align}

    Next, we focus on the case $r_2 \geq 2$.  For each $j \in [r_2]$, let $\ell_{j}$ denote the length of the $j$-th run in $\mathbf{x}^{(2)}$. Then, the positions of the $j$-th run of $\mathbf{x}^{(2)}$ are indexed by
    $$
    R_{j} \triangleq \left[\sum_{h=1}^{j-1}\ell_{h} + 1 : \sum_{h=1}^{j}\ell_{h}\right].
    $$
    In the following, for each $i \in U_1(\mathbf{x}^{(1)})$, we bound
    $$
    \theta_i \triangleq |U_1(D_1(\mathbf{x}^{(2)}, i))| - \delta_{x^{(2)}_{i - 1} \neq x^{(2)}_{i + 1}} - 1
    $$
    from below according to the position of $x^{(2)}_{i}$ in $\mathbf{x}^{(2)}$.
    
    \textbf{Case 1.} $x^{(2)}_{i}$ lies in the $j$-th run of $\mathbf{x}^{(2)}$, where $1 < j < r_2$. 

    If $|R_j|=1$, then we have
    \begin{equation}
    \left\{
    \begin{array}{ll}
    |U_1(D_1(\mathbf{x}^{(2)}, i))| = r_2 - 2 \ \text{and} \ \delta_{x^{(2)}_{i - 1} \neq x^{(2)}_{i + 1}} = 0, & \text{if } x^{(2)}_{i - 1} = x^{(2)}_{i + 1}; \\
    |U_1(D_1(\mathbf{x}^{(2)}, i))| = r_2 - 1 \ \text{and} \ \delta_{x^{(2)}_{i - 1} \neq x^{(2)}_{i + 1}} = 1, & \text{if } x^{(2)}_{i - 1} \neq x^{(2)}_{i + 1}.
    \end{array}
    \right.
    \end{equation}
    In both cases, we have $\theta_i = r_2 - 3$.

    If $|R_j|\geq 2$, then $|U_1(D_1(\mathbf{x}^{(2)}, i))| = r_2$. Moreover, if $x^{(2)}_{i}$ is the first or the last bit of the run, then $\delta_{x^{(2)}_{i - 1} \neq x^{(2)}_{i + 1}} = 1$; otherwise, $\delta_{x^{(2)}_{i - 1} \neq x^{(2)}_{i + 1}} = 0$. Therefore, there are at most two positions $i \in U_1(\mathbf{x}^{(1)})\cap R_j$ such that $\theta_i = r_2 - 2$, and for all other $i \in U_1(\mathbf{x}^{(1)})\cap R_j$, we have $\theta_i = r_2 - 1$.

    \textbf{Case 2.} $x^{(2)}_{i}$ lies in the first run of $\mathbf{x}^{(2)}$. 

    If $|R_1|=1$, then we have $|U_1(D_1(\mathbf{x}^{(2)}, i))| = r_2 - 1$ and $\delta_{x^{(2)}_{i - 1} \neq x^{(2)}_{i + 1}} = 0$. This yields $\theta_i = r_2 - 2$.

    If $|R_1|\geq 2$, then $|U_1(D_1(\mathbf{x}^{(2)}, i))| = r_2$. Moreover, if $x^{(2)}_{i}$ is the last bit of the run, then $\delta_{x^{(2)}_{i - 1} \neq x^{(2)}_{i + 1}} = 1$; otherwise, $\delta_{x^{(2)}_{i - 1} \neq x^{(2)}_{i + 1}} = 0$. Therefore, there is at most one position $i \in U_1(\mathbf{x}^{(1)})\cap R_1$ such that $\theta_i = r_2 - 2$, and for all other $i \in U_1(\mathbf{x}^{(1)})\cap R_1$, we have $\theta_i = r_2 - 1$.

    \textbf{Case 3.} $x^{(2)}_{i}$ lies in the last run of $\mathbf{x}^{(2)}$. 

    %For each $i \in U_1(\mathbf{x}^1)$ such that $i$ lies in the last run of $\mathbf{x}^2$, we analyze $|U_1(D_1^1(\mathbf{x}^2, i))| - \delta_{\mathbf{x}^2_{i - 1} \neq \mathbf{x}^2_{i + 1}} - 1 - \delta_{r_{l}(\mathbf{x}^2) = 1}$. We distinguish the following cases.

    If $|R_{r_2}|=1$, then we have $|U_1(D_1(\mathbf{x}^{(2)}, i))| = r_2 - 1$, $\delta_{x^{(2)}_{i - 1} \neq x^{(2)}_{i + 1}} = 0$, and $\delta_{r_{\mathrm{last}}(\mathbf{x}^{(2)}) = 1} = 1$. Thus, $\theta_i = r_2 - 2$.

    If $|R_{r_2}|=1$, then $|U_1(D_1(\mathbf{x}^{(2)}, i))| = r_2$ and $\delta_{r_{\mathrm{last}}(\mathbf{x}^{(2)}) = 1} = 0$. Moreover, if $x^{(2)}_{i}$ is the first bit of the run, then $\delta_{x^{(2)}_{i - 1} \neq x^{(2)}_{i + 1}} = 1$; otherwise, $\delta_{x^{(2)}_{i - 1} \neq x^{(2)}_{i + 1}} = 0$. Therefore, there is at most one position $i \in U_1(\mathbf{x}^{(1)})\cap R_{r_2}$ such that $\theta_i = r_2 - 2$, and for all other $i \in U_1(\mathbf{x}^{(1)})\cap R_{r_2}$, we have $\theta_i = r_2 - 1$.

    Based on the above three cases, we can summarize  that
    \begin{align}
    \sum_{i \in U_1(\mathbf{x}^{(1)}) \cap R_{j}} \theta_i 
    & \geq \begin{cases}
        |U_1(\mathbf{x}^{(1)}) \cap R_{j}|(r_2 - 3), & \text{if } |U_1(\mathbf{x}^{(1)}) \cap R_{j}| \leq 1;\\
        (|U_1(\mathbf{x}^{(1)}) \cap R_{j}| - 2)(r_2 - 1) + 2(r_2 - 2), & \text{otherwise};
    \end{cases} \label{eq: theta_i sum1.0} \\
    & = |U_1(\mathbf{x}^{(1)}) \cap R_{j}|(r_2 - 1) - 2\delta_{|U_1(\mathbf{x}^{(1)}) \cap R_{j}| \neq 0}, \label{eq: theta_i sum1}
    \end{align}
    for $1 < j < r_2$, and
    \begin{align}
    \sum_{i \in U_1(\mathbf{x}^{(1)}) \cap R_{1}} \theta_i 
    & \geq \begin{cases}
        |U_1(\mathbf{x}^{(1)}) \cap R_{1}|(r_2 - 2), & \text{if } |U_1(\mathbf{x}^{(1)}) \cap R_{1}| \leq 1;\\
        (|U_1(\mathbf{x}^{(1)}) \cap R_{1}| - 1)(r_2 - 1) + (r_2 - 2), & \text{otherwise};
    \end{cases} \label{eq: theta_i sum2.0} \\
    & = |U_1(\mathbf{x}^{(1)}) \cap R_{1}|(r_2 - 1) - \delta_{|U_1(\mathbf{x}^{(1)}) \cap R_{1}| \neq 0}. \label{eq: theta_i sum2}
    \end{align}
    Moreover, based on the discussion in \textbf{Case 3}, regardless of whether $r_{\mathrm{last}}(\mathbf{x}^{(2)}) = 1$ or not, we always have
    $$
    \theta_{n/2} + \delta_{r_{\mathrm{last}}(\mathbf{x}^{(2)}) = 1} = r_2 - 1.
    $$
    Since $n/2 \in U_1(\mathbf{x}^{(1)}) \cap R_{r_2}$ always holds, this further implies that
    \begin{align}
    \sum_{i \in U_1(\mathbf{x}^{(1)}) \cap R_{r_2}} \theta_i + \delta_{r_{\mathrm{last}}(\mathbf{x}^{(2)}) = 1}
    & \geq \begin{cases}
        |U_1(\mathbf{x}^{(1)}) \cap R_{r_2}|(r_2 - 1), & \text{if } |U_1(\mathbf{x}^{(1)}) \cap R_{r_2}| \leq 1;\\
        (|U_1(\mathbf{x}^{(1)}) \cap R_{r_2}| - 1)(r_2 - 1) + (r_2 - 2), & \text{otherwise};
    \end{cases} \label{eq: theta_i sum3.0} \\
    & = |U_1(\mathbf{x}^{(1)}) \cap R_{r_2}|(r_2 - 1) - \delta_{|U_1(\mathbf{x}^{(1)}) \cap R_{r_2}| \geq 2}. \label{eq: theta_i sum3}
    \end{align}
    Therefore, since $|U_1(\mathbf{x}^{(1)})| = r_1$, by combining \eqref{eq: theta_i sum1}, \eqref{eq: theta_i sum2}, and \eqref{eq: theta_i sum3}, we obtain the following lower bound:
    \begin{align}
    & \sum_{i \in U_1(\mathbf{x}^{(1)})} \left(|U_1(D_1(\mathbf{x}^{(2)}, i))| - \delta_{x^{(2)}_{i - 1} \neq x^{(2)}_{i + 1}} - 1\right) + \delta_{r_{\mathrm{last}}(\mathbf{x}^{(2)}) = 1} 
    \nonumber \\
    =\; & \sum_{i \in U_1(\mathbf{x}^{(1)}) \cap R_1} \theta_i 
    + \sum_{j=2}^{r_2-1} \sum_{i \in U_1(\mathbf{x}^{(1)}) \cap R_j} \theta_i 
    + \sum_{i \in U_1(\mathbf{x}^{(1)}) \cap R_{r_2}} \theta_i 
    + \delta_{r_{\mathrm{last}}(\mathbf{x}^{(2)}) = 1}
    \nonumber \\
    \geq\; & \sum_{j=1}^{r_2} |U_1(\mathbf{x}^{(1)}) \cap R_{j}| (r_2 - 1)
    - 2\sum_{j=2}^{r_2-1} \delta_{|U_1(\mathbf{x}^{(1)}) \cap R_{j}| \neq 0}
    - \delta_{|U_1(\mathbf{x}^{(1)}) \cap R_{1}| \neq 0}
    - \delta_{|U_1(\mathbf{x}^{(1)}) \cap R_{r_2}| \geq 2}
    \nonumber \\
    =\; & r_1 (r_2 - 1) - \Delta, \label{eq: theta_i sum4}
    \end{align}
    where
    $$
    \Delta \triangleq 2\sum_{j=2}^{r_2-1} \delta_{|U_1(\mathbf{x}^{(1)}) \cap R_{j}| \neq 0} + \delta_{|U_1(\mathbf{x}^{(1)}) \cap R_{1}| \neq 0} + \delta_{|U_1(\mathbf{x}^{(1)}) \cap R_{r_2}| \geq 2}.
    $$
    
    Note that for $r_1 \geq r_2 \geq 2$, it always holds that
    $$
    \Delta \leq 2(r_2 - 2) + 1 + 1 = 2(r_2 - 1).
    $$
    Moreover, when $r_1 = r_2 \geq 2$, since $n/2 \in U_1(\mathbf{x}^{(1)}) \cap R_{r_2}$, it holds that
    \begin{align*}
    |U_1(\mathbf{x}^{(1)}) \cap \bigcup_{j=1}^{r_2-1} R_{j}| 
    &= r_1 - |U_1(\mathbf{x}^{(1)}) \cap R_{r_2}| \\
    &= r_2 - |U_1(\mathbf{x}^{(1)}) \cap R_{r_2}|.
    \end{align*}
    This implies that the terms $\delta_{|U_1(\mathbf{x}^{(1)}) \cap R_{j}| \neq 0}$ for $1 \leq j \leq r_2 - 1$, and $\delta_{|U_1(\mathbf{x}^{(1)}) \cap R_{r_2}| \geq 2}$ cannot all take the value $1$ simultaneously. Thus, we conclude that
    \begin{align}\label{eq: Delta}
    \Delta \leq 2(r_2 - 1) - \delta_{r_1 = r_2}.
    \end{align}
    Plugging this upper bound on $\Delta$ into \eqref{eq: theta_i sum4}, we obtain
    \begin{align}\label{eq: theta_i sum5}
    \sum_{i \in U_1(\mathbf{x}^{(1)})} \left(|U_1(D_1(\mathbf{x}^{(2)}, i))| - \delta_{x^{(2)}_{i - 1} \neq x^{(2)}_{i + 1}} - 1\right) + \delta_{r_{\mathrm{last}}(\mathbf{x}^{(2)}) = 1}
    \geq (r_1 - 2)(r_2 - 1) + \delta_{r_1 = r_2}.
    \end{align}
    
    %we have $|U_1(D_1(\mathbf{x}^{(2)}, i))|=1$ and $\delta_{x^{(2)}_{i - 1} \neq x^{(2)}_{i + 1}}=0$ holds for all $i\in [n/2]$. Thus, by $n/2\geq 2$, this leads to
    % \begin{align*}
    %     & \sum_{i\in U_1(\mathbf{x}^{(1)})} \left(|U_1(D_1(\mathbf{x}^{(2)}, i))| - \delta_{x^{(2)}_{i - 1} \neq x^{(2)}_{i + 1}} - 1\right) + \delta_{r_{\mathrm{last}}(\mathbf{x}^{(2)}) = 1} =0.
    % \end{align*}
    % When $r_1 = r_2 > 1$, we have
    % \begin{align*}
    %     & \sum_{i\in U_1(\mathbf{x}^{(1)})} \left(|U_1(D_1(\mathbf{x}^{(2)}, i))| - \delta_{x^{(2)}_{i - 1} \neq x^{(2)}_{i + 1}} - 1\right) + \delta_{r_{\mathrm{last}}(\mathbf{x}^{(2)}) = 1} \nonumber \\
    %     \geq&(r_2 - 2)(r_2 - 3) + (r_2 - 2) + (r_2 - 1)\\
    %     =&(r_1 - 2)(r_2 - 1) + 1.
    % \end{align*} 
    % When $r_1 = r_2 = 1$, we have
    % \begin{align*}
    %     &\sum_{i\in U_1(\mathbf{x^1})} (|U_1(D_1^1(\mathbf{x^2}, i))| - \delta_{\mathbf{x^2}_{i - 1} \neq \mathbf{x^2}_{i + 1}} - 1) + \delta_{r_{\mathrm{l}}(\mathbf{x^2}) = 1}\\
    %     \geq&r_2 - 1 = 0.
    % \end{align*} 

    Finally, combining \eqref{eq: lb of D_2(x)}, \eqref{eq: U_1(x_{n-r_L})}, \eqref{eq: r_2=1}, and \eqref{eq: theta_i sum4}, we conclude that the right-hand side of \eqref{eq: D_2^2(x)} is at least
    \begin{align*}
    |D_2(V_{r_1})| + |D_2(V_{r_2})| - r_2 + (r_1 - 2)(r_2 - 1) + \delta_{r_1 = r_2} - \delta_{r_1 = r_2 = 1}.
    \end{align*}
    Furthermore, one can easily verify that for $r_1 \geq r_2$, the sequence $\mathbf{x}$ with array representation
    $\begin{bmatrix}
    V_{r_1};
    V_{r_2}
    \end{bmatrix}$
    has deletion ball size $|D_2^2(\mathbf{x})|$ achieving this lower bound.

    This completes the proof.
\end{IEEEproof}

\begin{example}\label{example: D_2^2(x)}
    Let $n = 8$, $r_1 = 4$, and $r_2 = 2$. Then we have $V_4 = 0101$ and $V_2 = 0111$. One can easily verify that $|D_2(V_4)|=4$, $|D_2(V_2)|=2$, and the deletion ball $D_2^{2}(\mathbf{x})$ with $\mathbf{x}=00110111$ (whose array representation is $[V_4; V_2]$) equals
    $$
    D_2^{2}(\mathbf{x})=\{0111,0011,1111,1101,0101,0001\},
    $$
    and thus by Lemma \ref{lem: LL}, $|D_2^{2}(\mathbf{x})|=6$
    achieves the lower bound in Theorem \ref{thm: lower_bound_for_2}. 
    %i.e., for any $\mathbf{x} = \mathbf{x^1}||\mathbf{x^2}\in\Sigma_2^n$ with $\mathbf{x^1}$ contains $r_1$ runs and $\mathbf{x^2}$ contains $r_2$ runs, we have $|D_2^2(\mathbf{x})| \geq |D_2^2(V_1||V_2)| = 6$.
    %When $n = 10$, $r_1 = 3$, and $r_2 = 5$, let $V_3' = 00010$ be the reverse of $V_3$ and $V_5' = 01010$ be the reverse of $V_5$. It can be verified that $V_3'||V_5'$ achieves the lower bound, i.e., for any $\mathbf{x} = \mathbf{x^1}||\mathbf{x^2}\in\Sigma_2^n$ with $\mathbf{x^1}$ contains $r_1$ runs and $\mathbf{x^2}$ contains $r_2$ runs, we have $|D_2^2(\mathbf{x})| \geq |D_2^2(V_3'||V_5')| = 13$.
\end{example}

As a direct corollary of Lemma \ref{lem: LL} and Theorem \ref{thm: lower_bound_for_2}, we obtain the following explicit lower bound on $|D_2^2(\mathbf{x})|$ for any $\mathbf{x} \in \Sigma_2^n$ with $2 \mid n$.

\begin{corollary}\label{coro: explicit lb on D_2^2}
    For any even positive integer $n$ and integers $1 \leq r_1, r_2 \leq n/2$, define
    \begin{align*}
    U(n, r_1, r_2) &\triangleq u\Big(\frac{n}{2}, r_1\Big) + u\Big(\frac{n}{2}, r_2\Big) - \min\{r_1, r_2\} \\
    & \quad + (\max\{r_1, r_2\} - 2)(\min\{r_1, r_2\} - 1) \\
    & \quad + \delta_{r_1 = r_2} - \delta_{r_1 = r_2 = 1}.
    \end{align*}
    Then, for any sequence $\mathbf{x} \in\Sigma_q^n$ with array representation 
    $\begin{bmatrix}
    \mathbf{x}^{(1)};
    \mathbf{x}^{(2)}
    \end{bmatrix}$
    such that the numbers of runs of $\mathbf{x}^{(1)}$ and $\mathbf{x}^{(2)}$ are $r_1$ and $r_2$, respectively, it holds that
    $$
    |D_2^2(\mathbf{x})| \geq U(n, r_1, r_2).
    $$
\end{corollary}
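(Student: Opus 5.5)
This corollary should follow by substituting the explicit values of $|D_2(V_{r})|$ from Lemma~\ref{lem: LL} into the bound of Theorem~\ref{thm: lower_bound_for_2}. First I will observe that $V_{r_1}$ and $V_{r_2}$ in \eqref{eq: def V_r} are sequences of length $n/2$ with exactly $r_1$ and $r_2$ runs, respectively. Hence Lemma~\ref{lem: LL}, applied with length parameter $n/2$, gives $|D_2(V_{r_j})| = u(n/2, r_j)$ for $j\in\{1,2\}$.

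Substituting these two identities into the right-hand side of Theorem~\ref{thm: lower_bound_for_2}, term by term, gives exactly $U(n,r_1,r_2)$. Theorem~\ref{thm: lower_bound_for_2} holds for any $\mathbf{x}\in\Sigma_q^n$ whose rows $\mathbf{x}^{(1)},\mathbf{x}^{(2)}$ have $r_1,r_2$ runs, so $|D_2^2(\mathbf{x})|\ge U(n,r_1,r_2)$ follows at once.

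The main obstacle is only the ranges of the parameters. Theorem~\ref{thm: lower_bound_for_2} requires $n\ge 4$, and Lemma~\ref{lem: LL} is stated for length $n/2>2$, that is, $n\ge 6$. So I will treat $n\in\{2,4\}$ directly.

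\begin{itemize}
\item For $n=2$ we have $t b=4>n$, so $D_2^2(\mathbf{x})$ is undefined. The statement is then vacuous, or it is excluded by the standing assumption $n\ge tb+1$.
\item For $n=4$ the rows have length $2$, so $r_j\in\{1,2\}$. Here I would check directly that $|D_2(V_{r})|=r=u(2,r)$ using the first case of the formula for $u$. The only sequences are $00$ and $01$, whose $2$-deletion balls each consist of the empty sequence alone, so the substitution should be verified against these values, or the convention $u(2,r)$ should be read accordingly.
\end{itemize}

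Beyond these small cases the argument is a direct substitution.
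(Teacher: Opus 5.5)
Your main argument is the paper's own: the paper also obtains the corollary by substituting $|D_2(V_{r_j})| = u(n/2, r_j)$ from Lemma~\ref{lem: LL} into Theorem~\ref{thm: lower_bound_for_2}. There is one slip in your $n=4$ remark: the check $|D_2(V_r)| = r = u(2,r)$ is false for $r=2$, as you yourself observe ($|D_2(01)|=1$). With $u(2,2)=2$ the inequality actually fails; for example, $\mathbf{x}=0010$ has $r_1=2$, $r_2=1$ and $|D_2^2(\mathbf{x})|=1<2=U(4,2,1)$. The clean fix is that $n=4<tb+1=5$ is excluded by the same standing assumption you invoke for $n=2$, or else $u(2,r)$ must be read as $|D_2(V_r)|=1$ rather than taken from the formula.
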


In the same spirit as the general upper bound obtained via a linear programming approach, we use Theorem~\ref{thm2: upper bound on M_q(n,(t,b))} and Corollary~\ref{coro: explicit lb on D_2^2} to derive the following improved upper bound on $M_q(n,(2,2))$ for even $n$.

\begin{theorem}\label{thm: ub_on_M_q(n,(2,2))}
    For any $n \geq 6$ and integer $q \geq 2$, it holds that
    \begin{align*}
    M_q(n,(2,2)) \leq \sum_{r_1 = 1}^{\lceil \frac{n}{2}\rceil -2} \sum_{r_2 = 1}^{\lceil \frac{n}{2}\rceil-2} 
    \frac{q^2 (q - 1)^{r_1 + r_2 - 2} 
    \binom{\lceil \frac{n}{2}\rceil-3}{r_1 - 1} 
    \binom{\lceil \frac{n}{2}\rceil-3}{r_2 - 1}}{U(n, r_1, r_2)}.
    \end{align*}
\end{theorem}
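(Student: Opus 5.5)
The plan is to apply Theorem~\ref{thm2: upper bound on M_q(n,(t,b))} with $t=b=2$, which gives
$$
M_q(n,(2,2))\le \sum_{\mathbf{y}\in\Sigma_q^{n-4}}|D_2^2(\mathbf{y})|^{-1}.
$$
We then bound each term by Corollary~\ref{coro: explicit lb on D_2^2}, applied with length $m=n-4$ in place of $n$. Two consequences need care.

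\begin{itemize}
\item The corollary was stated for even length. For the case $n$ even, $m$ is even and each row $\mathbf{y}^{(j)}$ of $A_2(\mathbf{y})$ has length $m/2=\lceil n/2\rceil-2$.
\item For odd $n$, the array $A_2(\mathbf{y})$ has $\lceil m/2\rceil=\lceil n/2\rceil-2$ columns, with the second row padded by repeating its last symbol. I would check that Lemma~\ref{lem:two_pairs} and Theorem~\ref{thm: lower_bound_for_2} still apply, or reduce to the even case via Lemma~\ref{lem: monotone property of |D_t^b(x)|}. The latter would go by noting that appending a symbol to $\mathbf{y}$ changes neither the relevant run structure nor the ball size in the wrong direction. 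The padding convention seems designed exactly so that the row lengths are $\lceil n/2\rceil-2$ in both parities.
\end{itemize}

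Write $L=\lceil n/2\rceil-2$ and group $\mathbf{y}$ according to the numbers of runs $(r_1,r_2)$ of its two rows. Then
$$
\sum_{\mathbf{y}}|D_2^2(\mathbf{y})|^{-1}\le\sum_{r_1,r_2=1}^{L}\frac{N(r_1,r_2)}{U(n-4\text{ or }\cdot,r_1,r_2)}.
$$
Here $N(r_1,r_2)$ is the number of $\mathbf{y}$ whose rows have $r_1$ and $r_2$ runs.

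Because the two rows are independent and the map $\mathbf{y}\mapsto(\mathbf{y}^{(1)},\mathbf{y}^{(2)})$ is a bijection onto $\Sigma_q^{L}\times\Sigma_q^{L}$, the count factors. The standard run count gives
$$
q(q-1)^{r-1}\binom{L-1}{r-1}
$$
sequences of length $L$ with $r$ runs. Hence $N(r_1,r_2)=q^2(q-1)^{r_1+r_2-2}\binom{L-1}{r_1-1}\binom{L-1}{r_2-1}$, and $L-1=\lceil n/2\rceil-3$ matches the binomials in the statement.

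One point to reconcile is the first argument of $U$ in the statement. The formula writes $U(n,r_1,r_2)$, while the ball is for length $n-4$, so the natural argument is $u(L,r_i)$. I would interpret $U$ in the statement with the row length $L$ (i.e.\ length $n-4$), or else show monotonicity of $u$ in its first argument so the bound as written still holds. By Lemma~\ref{lem: LL}, $u(n,r)$ depends only on $r$ except when $r=n$, and $d(n,2)\ge d(r,2)+1$ is false in general. So I would instead verify the needed inequality directly in the boundary cases $r_i=L$.

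The main obstacle is the odd-$n$ case, namely justifying Theorem~\ref{thm: lower_bound_for_2} for the padded array. The hypothesis $n\ge 6$ ensures $L\ge1$, and $n\ge8$ is needed to have row length at least $2$. For $n=6,7$, $r_1=r_2=1$ and the bound reduces to checking $|D_2^2(\mathbf{y})|\ge U=1$, which is trivial. I would try first to rerun the case analysis of Lemma~\ref{lem:two_pairs} with the second row shortened by one, and observe that the counts only increase.
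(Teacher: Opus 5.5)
\textbf{Even $n$.} Here your route is the paper's: the LP bound of Theorem~\ref{thm2: upper bound on M_q(n,(t,b))} over $\Sigma_q^{n-4}$, then Corollary~\ref{coro: explicit lb on D_2^2} at length $n-4$, then the product run count. You are right that the first argument of $U$ is off. Applied to $\mathbf{y}\in\Sigma_q^{n-4}$, the corollary gives $U(n-4,r_1,r_2)$, which is built from $u(L,r_i)$ with $L=\lceil n/2\rceil-2$. That is all the paper's ``follows directly'' actually establishes, so commit to that reading.

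Your fallback of checking the literal $U(n,r_1,r_2)$ at $r_i=L$ cannot succeed. Theorem~\ref{thm: lower_bound_for_2} is tight there, and $u(L,L)<u(n/2,L)$ for $L\ge 3$. Concretely, take $n=10$ and $\mathbf{y}=001100=[V_3;V_3]$. Then $D_2^2(\mathbf{y})=\{00,01,10,11\}$, so $|D_2^2(\mathbf{y})|=4=U(6,3,3)<6=U(10,3,3)$.

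\textbf{Odd $n$: the genuine gap.} You apply the LP bound directly at odd $n$, which needs a lower bound on $|D_2^2(\mathbf{y})|$ for odd-length $\mathbf{y}$ in terms of its padded rows. That bound is false. For $\mathbf{y}\in\Sigma_q^5$ one has $D_2^2(\mathbf{y})=\{y_1,y_3,y_5\}$. The word $\mathbf{y}=00110$ has padded rows $010$ and $011$ ($r_1=3$, $r_2=2$), so the corollary at the padded length would demand $|D_2^2(\mathbf{y})|\ge U(6,3,2)=3$. The ball actually has size $2$.

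Both of your suggested repairs point the wrong way:
\begin{itemize}
\item The map $\mathbf{z}\mapsto\mathbf{z}a$ injects $D_2^2(\mathbf{y})$ into $D_2^2(\mathbf{y}a)$, so the padded even-length word has the \emph{larger} ball. Here $|D_2^2(001101)|=3$ against $|D_2^2(00110)|=2$.
\item Un-padding the second row therefore lowers the count in Lemma~\ref{lem:two_pairs}; it does not raise it.
\item The monotonicity lemma only compares words differing by a $2$-burst, so it cannot change parity.
\end{itemize}
Two smaller slips: for odd $n$ the row map is injective but not onto $\Sigma_q^L\times\Sigma_q^L$ (harmless for an upper bound), and $n=7$ gives $L=2$, not only $r_1=r_2=1$.

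The missing idea is to fix parity at the level of codes rather than balls. The paper proves $M_q(n,(2,2))\le M_q(n+1,(2,2))$ by appending $0$ to every codeword. It checks that if $\mathbf{z}\in D_2^2(\mathbf{c}0)\cap D_2^2(\mathbf{c}'0)$, then $\mathbf{z}$ with its last symbol removed lies in $D_2^2(\mathbf{c})\cap D_2^2(\mathbf{c}')$. Since $\lceil (n+1)/2\rceil=\lceil n/2\rceil$, the even-length bound at $n+1$ is exactly the displayed expression. The agreement of row lengths across parities comes from this reduction, not from the padding convention.
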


\begin{IEEEproof}
    First, we show that it suffices to prove the upper bound for the case when $n$ is even.

    Suppose $n$ is odd and let $\mathcal{C} \subseteq \Sigma_q^{n}$ be a $(2,2)$-burst-deletion-correcting code. Define
    $$
    \mathcal{C}' \triangleq \{(c_1,c_2,\ldots,c_n,0) \in \Sigma_q^{n+1} :~ (c_1,c_2,\ldots,c_n)\in \mathcal{C}\}.
    $$
    We claim that $\mathcal{C}'$ is also a $(2,2)$-burst-deletion-correcting code, which implies
    $$
    M_q(n,(2,2)) \leq M_q(n+1,(2,2)).
    $$
    Hence, it suffices to consider even $n$. For simplicity, write $\mathbf{c}0$ for $(c_1,\ldots,c_n,0)$.
    
    Suppose, for contradiction, that there exist $\mathbf{x}0,\mathbf{y}0 \in \mathcal{C}'$ and $\mathbf{z} \in \Sigma_q^{n-3}$ such that
    \begin{equation}\label{eq1: thm ub on M_q(n,(2,2))}
    \mathbf{z} \in D_{2}^{2}(\mathbf{x}0)\cap D_{2}^{2}(\mathbf{y}0).
    \end{equation}
    We obtain a contradiction by considering the following two cases.
    \begin{itemize}
        \item \emph{Case 1:} $z_{n-3}=0$. Since both $\mathbf{x}0$ and $\mathbf{y}0$ end with $0$, we have $\mathbf{z}\in D_{2}^{2}(\mathbf{x})\cap D_{2}^{2}(\mathbf{y})$, which contradicts $\mathbf{x},\mathbf{y}\in \mathcal{C}$.
        \item  \emph{Case 2:} $z_{n-3}\neq 0$. Then one of the deletions must occur at position $n$ for both $\mathbf{x}0$ and $\mathbf{y}0$, and hence 
        $$
        \mathbf{z}\in D_{1}^{2}(\mathbf{x}_{[n-1]})\cap D_{1}^{2}(\mathbf{y}_{[n-1]}).
        $$
        From $\mathbf{z}\in D_{1}^{2}(\mathbf{x}_{[n-1]})$, we consider two subcases: if $z_{n-3}=x_{n-1}$, then $\mathbf{z}_{[n-4]}=\mathrm{Del}_2(\mathbf{x},(i,1),(n-1,1))$ for some $1\le i\le n-3$; otherwise, $\mathbf{z}_{[n-4]}=\mathrm{Del}_2(\mathbf{x},(n-3,2)).$ In both subcases, we obtain $\mathbf{z}_{[n-4]}\in D_2^2(\mathbf{x})$. Similarly, $\mathbf{z}_{[n-4]}\in D_2^2(\mathbf{y})$, again yielding a contradiction.
    \end{itemize}
    %Thus, $\mathcal{C}'$ is $(2,2)$-burst-deletion correcting, and the reduction is proved.

    Next, we complete the proof by establishing the upper bound on $M_q(n,(2,2))$ for even $n$.

    Note that for every $r \in [n/2-2]$, the number of sequences $\mathbf{z} \in \Sigma_q^{n/2-2}$ with $r$ runs is exactly
    $$
    q(q-1)^{r-1} \binom{n/2-3}{r-1}.
    $$
    Thus, for every pair $(r_1, r_2) \in [n/2-2] \times [n/2-2]$, the number of sequences $\mathbf{y} \in \Sigma_q^{n-4}$ with array representation $[\mathbf{y}^{(1)};\mathbf{y}^{(2)}]$ such that $\mathbf{y}^{(1)}$ and $\mathbf{y}^{(2)}$ have $r_1$ and $r_2$ runs, respectively, is equal to
    $$
    q^2 (q - 1)^{r_1 + r_2 - 2} 
    \binom{n/2-3}{r_1 - 1} 
    \binom{n/2-3}{r_2 - 1}.
    $$
    The result then follows directly from Theorem~\ref{thm2: upper bound on M_q(n,(t,b))} and Corollary~\ref{coro: explicit lb on D_2^2}.
\end{IEEEproof}

Figure \ref{fig: comparison of upper bounds 2} compares the non-asymptotic upper bounds on $M_q(n,(t,b))$ in Theorem~\ref{thm: non asymptotic upper bound 1}, Theorem~\ref{thm: upper_bound_levenshtein}, Theorem~\ref{thm: ub_on_M_q(n,(2,2))}, and the upper bound in \cite{LSYG26} for $q = 3$, $t = 2$, and $b = 2$. The asymptotic behavior of the upper bound given in Theorem~\ref{thm: ub_on_M_q(n,(2,2))} is characterized in the following corollary, whose proof is deferred to Appendix~\ref{appen.a} for interested readers.

\begin{figure}[h!]
    \centering
    \includegraphics[width=0.5\textwidth]{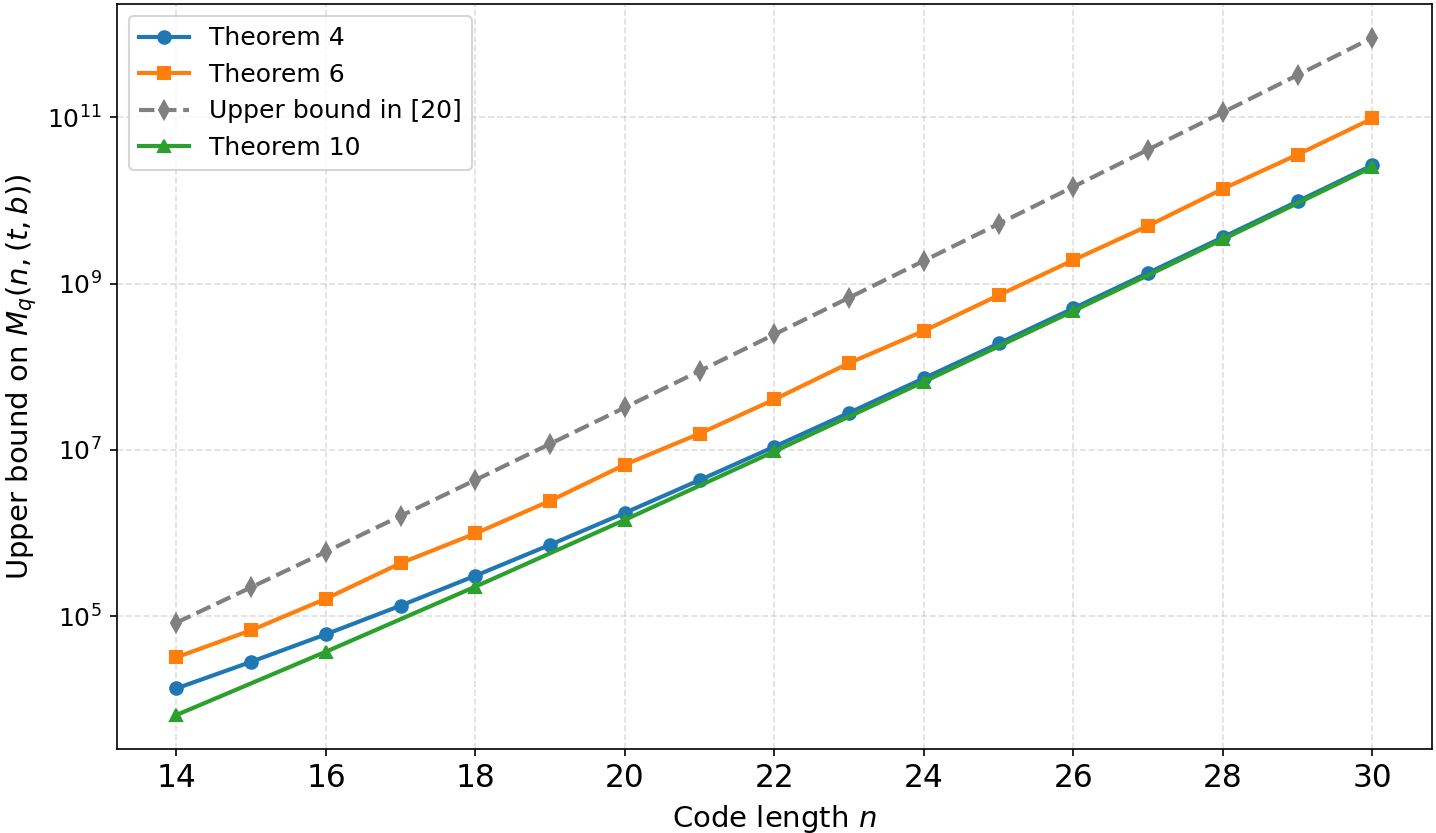}
    \caption{Comparison of the upper bounds on $M_q(n,(t,b))$ in Theorem~\ref{thm: non asymptotic upper bound 1}, Theorem~\ref{thm: upper_bound_levenshtein}, Theorem~\ref{thm: ub_on_M_q(n,(2,2))}, and the upper bound in \cite{LSYG26} for $q = 3$, $t = 2$, $b = 2$.}\label{fig: comparison of upper bounds 2}
\end{figure}

\begin{corollary}\label{coro: asy up for t=b=2}
    For any integer $n$ and $q \geq 2$, it holds that
    \begin{align*}
    M_q(n,(2,2)) \leq \frac{2q^{n-4}}{(q-1)^2n^2}(1+o(1)),
    \end{align*}
    as $n\rightarrow \infty$.
\end{corollary}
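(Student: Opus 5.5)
The plan is to evaluate the double sum in Theorem~\ref{thm: ub_on_M_q(n,(2,2))} asymptotically, in the same way as the proof of Theorem~\ref{thm1: upper bound on M_q(n,(t,b))}. By the padding argument at the start of that proof it suffices to treat even $n$. Put $m \triangleq n/2-2$ and
$$
N_{r_1,r_2} \triangleq q^2(q-1)^{r_1+r_2-2}\binom{m-1}{r_1-1}\binom{m-1}{r_2-1}.
$$
Then $\sum_{r_1,r_2} N_{r_1,r_2} = q^{n-4}$, and the bound reads $\sum_{r_1,r_2} N_{r_1,r_2}/U(n,r_1,r_2)$. The number of runs $r_i$ of each row is $1+\mathrm{Bin}(m-1,1-1/q)$, and the two rows are independent. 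So the weights $N_{r_1,r_2}/q^{n-4}$ concentrate on $r_i = (1-\frac1q)m + O(\sqrt{m\ln m})$.

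\textbf{Step 1 (bad region).} Fix the threshold $\rho \triangleq (1-\frac1q)m - \sqrt{4m\ln m}$, and split the sum according to whether $\min\{r_1,r_2\} < \rho$. On that region use $U \ge 1$. By \eqref{eq: vol of Hamming Ball} and the Taylor expansion \eqref{eq: Taylor expansion of H_q(p-epsilon)}, each marginal tail $q\,\mathrm{Vol}_q(\rho,m-1)$ is at most $q^{m}\cdot m^{-\Omega(1)}$ with an exponent larger than $2$. Summing the two symmetric tails gives a contribution of $o(q^{n-4}/n^2)$.

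\textbf{Step 2 (good region).} For $r_1,r_2 \ge \rho$, both $r_i$ grow linearly in $n$. By Lemma~\ref{lem: LL}, $u(n/2,r) = d(r,2)+O(1) = \frac{r^2}{2}(1+o(1))$ uniformly in this range. Hence
$$
U(n,r_1,r_2) = \frac{r_1^2}{2}+\frac{r_2^2}{2}+r_1r_2-O(r_1+r_2) = \frac{(r_1+r_2)^2}{2}\bigl(1+o(1)\bigr).
$$
Since this quantity is increasing in each $r_i$, it is at least $\frac{(2\rho)^2}{2}(1-o(1))$ on the whole region. Bounding the remaining weight by the total $\sum N_{r_1,r_2} = q^{n-4}$ gives a good-region contribution of at most $\frac{2\,q^{n-4}}{(2\rho)^2}(1+o(1))$. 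Together with Step 1, the bound becomes $q^{n-4}$ times $2/(2\rho)^2$ up to $1+o(1)$.

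\textbf{Step 3 (normalization; expected main obstacle).} The last step is to rewrite $(2\rho)^2$ in terms of $n$ and $q$ and match the stated form $\frac{2q^{n-4}}{(q-1)^2n^2}$. Here $2\rho = (1-\frac1q)(n-4) - O(\sqrt{n\ln n})$, which gives
$$
\frac{2q^{n-4}}{(2\rho)^2} = \frac{2q^{n-4}\,q^2}{(q-1)^2 n^2}\bigl(1+o(1)\bigr).
$$
The factor $q^2$ comes from $\bigl((q-1)/q\bigr)^{-2}$. I would first re-derive the run counts and the exponent carefully, because a power of $q$ is the one thing the asymptotics do not absorb. The count $q(q-1)^{r-1}\binom{m-1}{r-1}$ per row is correct. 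The sanity check I would run is $q=2$, $n=8$ against Example~\ref{example: D_2^2(x)}.

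If the $q^2$ discrepancy persists, then the $q^{n-4}$ in the statement cannot come out of Theorem~\ref{thm: ub_on_M_q(n,(2,2))} alone. I would then look for the missing factor in a sharper lower bound on $U$ away from the central range; I have not identified a concrete candidate for this. So as things stand, this plan does not establish the statement as worded. The matching step is the one I expect to fail or to need rechecking.
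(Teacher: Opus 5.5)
Your route is the paper's route. You evaluate the double sum of Theorem~\ref{thm: ub_on_M_q(n,(2,2))} using $U(n,r_1,r_2)=\frac{(r_1+r_2)^2}{2}(1+o(1))$ and the binomial concentration of the two run counts. You average by a bulk/tail split; the paper uses a Taylor expansion of $\mathbb{E}[(X+2)^{-2}]$ with $X\sim\mathrm{B}(n-6,1-\frac1q)$. Both work.

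Your Step~3 is right, and the factor $q^2$ is a genuine discrepancy, not a slip on your side. In Appendix~A the paper passes from the summand $q^2(q-1)^{r_1+r_2-2}\binom{n/2-3}{r_1-1}\binom{n/2-3}{r_2-1}/U(n,r_1,r_2)$ of Theorem~\ref{thm: ub_on_M_q(n,(2,2))} to the left-hand side of \eqref{eq2: asy b=t=2} \emph{without} the prefactor $q^2$. It then shows $S_n=\frac{q^{n-4}}{(q-1)^2n^2}(1+o(1))$ and reports $2S_n$ as the bound. Carried through correctly, the theorem gives only $2q^2S_n(1+o(1))=\frac{2q^{n-2}}{(q-1)^2n^2}(1+o(1))$, which is your value. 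So the statement with $q^{n-4}$ is not established by the paper's argument either. What your Steps~1--2 actually prove, for even $n$, is the $q^{n-2}$ version.

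The fallback you mention cannot supply the missing factor, and no sharper estimate of $U$ can. The right-hand side of Theorem~\ref{thm: ub_on_M_q(n,(2,2))} equals $q^{n-4}\,\mathbb{E}[1/U(n,R_1,R_2)]$, with $R_1-1,R_2-1$ independent $\mathrm{B}(\frac n2-3,1-\frac1q)$. Lemma~\ref{lem: LL} gives $u(\cdot,r)\le\frac{r^2}{2}+3$ in all cases, and $(\max\{r_1,r_2\}-2)(\min\{r_1,r_2\}-1)\le r_1r_2$. Hence $U(n,r_1,r_2)\le\frac{(r_1+r_2)^2}{2}+O(1)$, and by convexity of $1/x$,
\[
q^{n-4}\,\mathbb{E}\bigl[U(n,R_1,R_2)^{-1}\bigr]\;\ge\;\frac{q^{n-4}}{\mathbb{E}[U(n,R_1,R_2)]}\;\ge\;\frac{2q^{n-2}}{(q-1)^2n^2}\bigl(1-o(1)\bigr).
\]
Even exact ball sizes would not help. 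Since $|U_2(\mathbf{y})|=r_1+r_2-1$, Theorem~\ref{thm: size of burst-deletion ball} gives $|D_2^2(\mathbf{y})|\le\binom{r_1+r_2}{2}$. The same Jensen step then bounds the weighting $\sum_{\mathbf{y}}|D_2^2(\mathbf{y})|^{-1}$ of Theorem~\ref{thm2: upper bound on M_q(n,(t,b))} from below by the same quantity. Consequently the $t=b=2$ bound agrees to first order with Theorem~\ref{thm1: upper bound on M_q(n,(t,b))}, and the $q^{-2}$ gain claimed in Remark~\ref{rmk} is an artifact of the dropped factor.

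One smaller point: padding does not reduce odd $n$ to even $n$ for free. The inequality $M_q(n,(2,2))\le M_q(n+1,(2,2))$ followed by the even-length bound at $n+1$ yields $\frac{2q^{n-1}}{(q-1)^2n^2}$, a factor $q$ worse. For odd $n$, use the argument of Theorem~\ref{thm1: upper bound on M_q(n,(t,b))} directly, which never actually uses $b\mid n$; it gives $\frac{2q^{n-2}}{(q-1)^2n^2}(1+o(1))$.
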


\begin{remark}\label{rmk}
    When taking $t=b=2$ in Theorem~\ref{thm1: upper bound on M_q(n,(t,b))} and in the asymptotic version of Theorem~\ref{thm: upper_bound_levenshtein} (see Part~2 of Remark~\ref{rmk: comparison with LP bound}), both upper bounds yield
    $$
    M_q(n,(2,2)) \leq \frac{2q^{n-2}}{(q-1)^2n^2}(1+o(1)).
    $$
    Therefore, by Corollary~\ref{coro: asy up for t=b=2}, our upper bound on $M_q(n,(2,2))$ improves upon those in Theorem~\ref{thm1: upper bound on M_q(n,(t,b))} and Theorem~\ref{thm: upper_bound_levenshtein} by a factor of $q^{-2}$.
\end{remark}

\section{Conclusion and further research}\label{sec: conclusion}

In this paper, we investigate the fundamental limits of codes designed to correct multiple $b$-burst deletions. By analyzing the structural properties of $(t, b)$-burst deletion balls, we establish a monotonicity property of the ball size and introduce the notion of maximal compact representation to uniquely represent each sequence within a $(t, b)$-burst deletion ball. Based on these properties, we derive two upper bounds on the maximum code size $M_q(n, (t, b))$ using linear programming and sphere-packing techniques. Our results improve upon existing bounds in general parameter regimes and recover known results for the special cases with $t = 1$ and $b = 1$. Moreover, we establish combinatorial upper and lower bounds that characterize the asymptotic behavior of $M_q(n, (t, b))$ when $q$ is sufficiently large and $n, t, b$ are constants. Additionally, we obtain a tighter bound for the specific case $t = b = 2$ through a refined analysis of the deletion ball structure.

In the following, we outline several directions for future research.
\begin{itemize}
    \item First, while this work establishes theoretical limits, the construction of $(t, b)$-burst deletion-correcting codes with redundancy approaching these bounds, along with efficient decoding algorithms, remains an important open problem. Currently, only the case $b=1$ is known to achieve the upper bound up to a constant factor.
    \item Second, the gap between the upper and lower bounds could be further reduced, particularly for small alphabet sizes, potentially through more sophisticated linear programming approaches.
    \item Third, the techniques developed for the case $t = b = 2$ could be generalized to other values of $t$ and $b$ to obtain tighter universal bounds. 
    \item Finally, given the complexity of errors in practical applications such as DNA-based storage, it would be of interest to extend this study to channels subject to other types of burst deletions and insertions.
\end{itemize}

%\section{Acknowledgement}

\section*{Appendix~A}\label{appen.a}

\begin{IEEEproof}[Proof of Corollary \ref{coro: asy up for t=b=2}]
By Lemma~\ref{lem: LL}, we know that 
$$
u(n,r)=d(r,2)+1=\frac{r^2-3r}{2}+3
$$
for $3\leq r\leq n-2$. Moreover, note that
$$
(\max\{r_1, r_2\} - 2)(\min\{r_1, r_2\} - 1)=r_1r_2+O(r_1)+O(r_2).
$$
Thus, we have
\begin{align}
    U(n, r_1, r_2) & = \frac{r_1^2}{2}+\frac{r_2^2}{2}+r_1r_2+O(r_1)+O(r_2) \nonumber \\
    & = \frac{(r_1+r_2)^2}{2}(1+o(1)), \label{eq1: asy b=t=2}
\end{align}
as $r_1+r_2\rightarrow \infty$. This implies that 
\begin{align}
    \sum_{r_1 = 1}^{n/2-2} \sum_{r_2 = 1}^{n/2-2}\frac{(q - 1)^{r_1 + r_2 - 2} 
    \binom{n/2-3}{r_1 - 1} 
    \binom{n/2-3}{r_2 - 1}}{U(n, r_1, r_2)} \leq 2\sum_{r_1 = 1}^{n/2-2} \sum_{r_2 = 1}^{n/2-2} \frac{(q - 1)^{r_1 + r_2 - 2} 
    \binom{n/2-3}{r_1 - 1} 
    \binom{n/2-3}{r_2 - 1}}{(r_1+r_2)^2}(1+o(1)) \label{eq2: asy b=t=2}
\end{align}
as $n\rightarrow \infty$. Denote 
\begin{align}
    S_n \triangleq \sum_{s_1 = 0}^{n/2-3} \sum_{s_2 = 0}^{n/2-3} \frac{(q - 1)^{s_1 + s_2} 
    \binom{n/2-3}{s_1} 
    \binom{n/2-3}{s_2}}{(s_1+s_2+2)^2}. \nonumber
\end{align}
Then, by \eqref{eq2: asy b=t=2}, it suffices to show that
$$
S_n=\frac{q^{n-4}}{(q-1)^2n^2}(1+o(1)).
$$

Note that $\sum_{i=0}^{n/2-3}(q-1)^{i}{n/2-3 \choose i}=q^{n/2-3}$. Thus, we can write
\begin{align}
    S_n & = q^{n-6} \sum_{s_1 = 0}^{n/2-3} \sum_{s_2 = 0}^{n/2-3} \frac{q^{-(n-6)+(s_1+s_2)}(1 - \frac{1}{q})^{s_1 + s_2} 
    \binom{n/2-3}{s_1} 
    \binom{n/2-3}{s_2}}{(s_1+s_2+2)^2} \nonumber \\
    & = q^{n-6}\mathbb{E}[(X_1+X_2+2)^{-2}] \nonumber \\
    & = q^{n-6}\mathbb{E}[(X+2)^{-2}], \label{eq3: asy b=t=2}
\end{align}
where $X_1,X_2\sim \mathrm{B}(\frac{n}{2}-3,1-\frac{1}{q})$ are independent random variables, and $X\sim \mathrm{B}(n-6,1-\frac{1}{q})$ follows from the fact that the sum of two independent binomial variables with the same probability $p$ is again a binomial variable; see \cite[Section 11.1]{DKLM05} for more details. 

Let $f(x)=(x+2)^{-2}$. As a function of $x$, the Taylor expansion of $f$ at $\mu\triangleq \mathbb{E}[X]$ is
\begin{align}
    f(x) & = \sum_{k=0}^{\infty}\frac{f^{(k)}(\mu)}{k!}(x-\mu)^{k} \nonumber \\
    & = f(\mu)-\frac{2(x-\mu)}{(\mu+2)^3}+\frac{3(x-\mu)^2}{(\mu+2)^{4}}+\sum_{k=3}^{\infty}\frac{f^{(k)}(\mu)}{k!}(x-\mu)^{k} \nonumber \\
    & = f(\mu)-\left(\frac{2(x-\mu)}{(1-\frac{1}{q})^{3}n^3}-\frac{3(x-\mu)^2}{(1-\frac{1}{q})^{4}n^{4}}\right)(1+O(\frac{1}{n}))+O\left(\frac{(x-\mu)^3}{n^5}\right) \label{eq4: asy b=t=2}
\end{align}
as $n\rightarrow \infty$, where the last equality follows from $\mu=(n-6)(1-\frac{1}{q})=n(1-\frac{1}{q})(1-\frac{6}{n})$. Then, by the linearity of expectation, we have
\begin{align}
    \mathbb{E}[f(X)] & = \mathbb{E}[f(\mu)]+\left(\frac{2\mathbb{E}[X-\mu]}{(1-\frac{1}{q})^{3}n^3}+\frac{3\mathbb{E}[(X-\mu)^2]}{(1-\frac{1}{q})^{4}n^4}\right)(1+O(\frac{1}{n}))+O\left(\frac{\mathbb{E}[(X-\mu)^3]}{n^5}\right) \nonumber \\
    & = \left(\frac{q^2}{(q-1)^2n^2}+\frac{3q^2}{(q-1)^3n^3}\right)(1+O(\frac{1}{n}))+O\left(\frac{\mathbb{E}[(X-\mu)^3]}{n^5}\right), \label{eq5: asy b=t=2}
\end{align}
where the last equality follows from $f(\mu)=(\mu+2)^{-2}=\frac{q^2}{(q-1)^2n^2}(1+O(\frac{1}{n}))$, $\mathbb{E}[X-\mu]=0$, and $\mathbb{E}[(X-\mu)^2]=\mathrm{Var}(X)=\frac{(q-1)(n-6)}{q^2}$. Moreover, by the well-known Hoeffding inequality, we have
$$
\mathrm{Pr}(|X- \mu|\geq n^{2/3})\leq 2e^{-2n^{1/3}}\leq \frac{1}{n},
$$
as $n\rightarrow \infty$. This yields that
\begin{align}
    \mathbb{E}[(X-\mu)^3] & \leq n^2+ \mathrm{Pr}(|X- \mu|\geq n^{2/3})\cdot n^3  =2n^2. \label{eq6: asy b=t=2}
\end{align}
Therefore, by substituting \eqref{eq6: asy b=t=2} into \eqref{eq5: asy b=t=2}, we have
\begin{align}
    \mathbb{E}[f(X)] & = \frac{q^2}{(q-1)^2n^2}+O(\frac{1}{n^3}). \nonumber
\end{align}
This leads to $S_n=\frac{q^{n-4}}{(q-1)^2n^2}(1+o(1))$ and completes the proof.
\end{IEEEproof}

\balance
\bibliographystyle{IEEEtran}
\bibliography{reference}

\end{document}